\documentclass[10pt]{article}
\usepackage{cancel}
\usepackage[T1]{fontenc}
\usepackage[utf8]{inputenc}
\usepackage{times}
\usepackage[margin=1in]{geometry}
\usepackage{amsmath,amssymb,amsthm,mathtools}
\usepackage{bm}
\usepackage{xcolor}
\usepackage{thm-restate}
\usepackage[colorlinks=true,linkcolor=blue,citecolor=blue,urlcolor=blue]{hyperref}
\usepackage{bbm,booktabs,tabularx,array}

\DeclarePairedDelimiter{\bra}{\langle}{\rvert}%
\DeclarePairedDelimiter{\ket}{\lvert}{\rangle}%
\DeclarePairedDelimiterX\innerp[2]{\langle}{\rangle}{#1\delimsize\vert\mathopen{}#2}%
\DeclarePairedDelimiterX\braket[2]{\langle}{\rangle}{#1\delimsize\vert\mathopen{}#2}%
\DeclarePairedDelimiterX\braketOP[3]{\langle}{\rangle}{#1\,\delimsize\vert\,\mathopen{}#2\,\delimsize\vert\,\mathopen{}#3}%
\DeclarePairedDelimiterX\ketbra[2]{\lvert}{\rvert}{#1\delimsize\rangle\!\delimsize\langle#2}%
\DeclarePairedDelimiterX\outerp[2]{\lvert}{\rvert}{#1\delimsize\rangle\!\delimsize\langle#2}%
\DeclarePairedDelimiterX\projector[1]{\lvert}{\rvert}{#1\delimsize\rangle\!\delimsize\langle#1}%

\DeclarePairedDelimiter\parens{\lparen}{\rparen}
\DeclarePairedDelimiter\abs{\lvert}{\rvert}
\DeclarePairedDelimiter\norm{\lVert}{\rVert}

\DeclarePairedDelimiterX\diverg[2]{(}{)}{#1 \mathrel{}\mathclose{}\delimsize\|\mathopen{}\mathrel{} #2}
\newcommand{\dtv}[2]{\mathrm{d}_{\mathrm{TV}}(#1,#2)}

\newcommand{\dKL}[2]{\mathrm{d}_{\mathrm{KL}}\diverg{#1}{#2}}

\newcommand{\DKLmeas}[2]{\mathrm{D}_{\mathrm{KL}}^{\textnormal{meas}}\diverg{#1}{#2}}

\newcommand{\eps}{\epsilon} 
\newcommand{\calD}{\mathcal{D}}
\newcommand{\wt}[1]{\widetilde{#1}}

\renewcommand{\epsilon}{\varepsilon}

\usepackage[ruled,vlined,linesnumbered]{algorithm2e}

\mathtoolsset{showonlyrefs}

\let\mathbm\bm

\newcommand{\energy}{\mathcal{E}}
\newcommand{\luders}{\Phi}

\newcommand{\ba}{\mathbm{a}}
\newcommand{\bj}{\mathbm{j}}
\newcommand{\brho}{\mathbm{\rho}}
\newcommand{\bmu}{\mathbm{\mu}}

\newcommand{\CC}{\mathbb{C}}

\DeclareMathOperator*{\EE}{\mathbb{E}}

\renewcommand{\bm}{\mathbf{m}}

\newcommand{\calH}{\mathcal{H}}
\newcommand{\calR}{\mathcal{R}}

\DeclareMathOperator{\Tr}{Tr}

\DeclareMathOperator{\poly}{poly}

\newcommand{\Id}{\mathbbm{1}}

\newtheorem{theorem}{Theorem}[section]
\newtheorem{lemma}[theorem]{Lemma}
\newtheorem{definition}[theorem]{Definition}

\newtheorem{claim}[theorem]{Claim}
\newtheorem{corollary}[theorem]{Corollary}
\newtheorem{remark}[theorem]{Remark}

\title{Online Shadow Tomography Matching the Classical Bounds}
\author{
Sitan Chen
\thanks{SEAS, Harvard University. Email: \href{mailto:sitan@seas.harvard.edu}{sitan@seas.harvard.edu}.} \qquad
Ryan O'Donnell
\thanks{Computer Science Department, Carnegie Mellon University. Email: \href{mailto:odonnell@cs.cmu.edu}{odonnell@cs.cmu.edu}.} \qquad
Angelos Pelecanos
\thanks{UC Berkeley. Email: \href{mailto:apelecan@berkeley.edu}{apelecan@berkeley.edu}.} \qquad
John Wright
\thanks{UC Berkeley. Email: \href{mailto:jswright@berkeley.edu}{jswright@berkeley.edu}.}
}

\newcommand{\calT}{\mathcal{T}}
\newcommand{\sitan}[1]{{\color{red}[sitan: #1]}}
\newcommand{\angelos}[1]{{\color{blue}[angelos: #1]}}
\newcommand{\ryan}[1]{{\color{purple}[ryan: #1]}}
\newcommand{\john}[1]{{\color{orange}[john: #1]}}

\renewcommand{\sitan}[1]{}
\renewcommand{\angelos}[1]{}
\renewcommand{\ryan}[1]{}
\renewcommand{\john}[1]{}

\newcommand{\lift}[1]{\overline{#1}}

\begin{document}

\maketitle
\begin{abstract}
    In \emph{Online Shadow Tomography}, we are given copies of an unknown $d$-dimensional quantum state $\rho$, an adversary (adaptively) proposes a sequence of bounded observables $A^{(1)},\ldots,A^{(m)}$, and after each $A^{(t)}$ is given we must estimate $\Tr(A^{(t)}\rho)$ to within $\pm \epsilon$. 
    This is the direct quantum generalization of the classical problem of \emph{Adaptive Data Analysis}. %The ``offline'' case, in which $A^{(1)}, \ldots, A^{(m)}$ are given upfront, is also a well-studied problem.  The main goal is to minimize the number of copies, $n$, required.
    %The best prior results for \emph{offline} Shadow Tomography were $n = \log^2(m) \cdot \wt{\mathcal{O}}(\log(d)/\eps^4)$ due to~\cite{BO24,BB24} (and also working online), and $n = \wt{\mathcal{O}}(\sqrt{m} \log(m)/\eps^2)$ due to~\cite{Sin25}.
    %In contrast, the best prior classical online bounds are $\mathcal{O}(\log(m) \sqrt{\log d}/\eps^3)$ due to~\cite{bassily2021algorithmic,CLNSS23} and $\mathcal{O}(\log(m) \sqrt{\log d}/\eps^3)$.  Another prior result achieved $n = \mathcal{O}(\sqrt{m} \log(m)/\eps^2)$ but was not online, unlike the analogous classical result~\cite{bassily2021algorithmic}.
    Prior results for online Shadow Tomography were suboptimal in all three parameters $m, d, \epsilon$, lagging behind the best known and classical rates~\cite{bassily2021algorithmic}, for which there is some evidence of optimality~\cite{NSSSU18,lyu2025fingerprinting}. In this work, we finally close this gap, giving a pair of algorithms achieving  sample complexities of
    \begin{equation}
        \mathcal{O}\parens*{\frac{\log(m) \sqrt{\log d}}{\epsilon^3}}\qquad \text{and} \qquad \mathcal{O}\parens*{\frac{\sqrt{m}}{\epsilon^2}}\,,
    \end{equation}
    matching the classical rates. 
    The bound on the left is the first to achieve $o(\log^2 m)$-dependence together with $\poly(\log(d)/\eps)$; moreover, it improves all three exponents even in the \emph{Offline} Shadow Tomography setting.  The bound on the right is known to be optimal among bounds independent of~$d$, and improves the best prior result by a $\sqrt{m} \log m$ factor.
    
    The key to our proof is a new framework for quantifying post-measurement damage, based on the quantum Efron--Stein decomposition.% recently developed by~\cite{de2025non}.
\end{abstract}

\newpage

\tableofcontents

\newpage

% {\color{red}
% Notational conventions:
%     \begin{itemize}
%         \item $A^{(t)}$: $t$-th observable
%         \item $A_j = \Id^{\otimes j-1} \otimes A \otimes \Id^{n - j}$
%         \item $n$: number of copies
%         \item $\mathsf{N}$: number operator
%         \item ``high'' and ``low''
%         \item ``bad''
%         \item Indicator of quantum event $E$: $\mathbf{E}$
%     \end{itemize}
% }

\section{Introduction}

In this work we revisit the well-studied problem of \emph{Shadow Tomography}~\cite{Aar16,Aar20}.

\begin{definition}[Shadow Tomography]
    Let $\rho$ be an unknown state on a $d$-dimensional Hilbert space $\calH$. In \emph{Online Shadow Tomography}, we are given $n$ copies of $\rho$, and subsequently receive a sequence of adaptively chosen observables $0\preceq A^{(1)},\ldots,A^{(m)}\preceq \Id$. After receiving $A^{(t)}$, we must respond with an answer $\widehat{\mu}^{(t)}$ satisfying 
    \begin{equation}
        |\widehat{\mu}^{(t)} - \mu^{(t)}| \le \epsilon\,, \qquad \mathrm{for} \qquad \mu^{(t)} \triangleq \Tr(A^{(t)}\rho)
    \end{equation}
    for some error parameter $\epsilon > 0$, and the adversary selects the next observable $A^{(t+1)}$ based on the transcript $\calT_t=\{(A^{(j)},\widehat{\mu}^{(j)})\}_{j\le t}$ of the interaction thus far. \emph{Offline Shadow Tomography} is the special case where the $A^{(t)}$'s are all fixed ahead of time.
\end{definition}

\noindent In the special case where $\rho$ and all $A^{(t)}$'s are diagonal matrices, this reduces to the classical question of \emph{Adaptive Data Analysis}, for which the best known upper bounds on the sample complexity $n$ scale as the minimum of
\begin{equation}
    n = \mathcal{O}\Bigl(\min\Bigl(\frac{\log(m)\sqrt{\log d}}{\epsilon^3}, \frac{\sqrt{m}}{\epsilon^2}\Bigr)\Bigr)\,, \label{eq:rate}
\end{equation}
and there is evidence that this is optimal~\cite{NSSSU18,lyu2025fingerprinting}.
In contrast, in the quantum setting, the best known upper bounds~\cite{BO24,BB24} to hold in all parameter regimes scale as
\begin{equation}
    n = \mathcal{O}\Bigl(\frac{\log^2(m)\log d}{\epsilon^4}, \frac{m}{\epsilon^2}\Bigr)\,.\label{eq:old_rate}
\end{equation}
In the offline case, recent work~\cite{Sin25} improved the latter bound to $\mathcal{O}(\sqrt{m} \log(m)/\epsilon^2)$.

This state of affairs is unsatisfying on several counts. In the online setting, if $d$ is exponentially large in $m$, then the best known rate is the trivial estimator. Even in the offline setting, when the number of observables $m$ is exponentially large in the dimension $d$, the best known rate is sub-optimal in every parameter. In this work, we finally resolve these issues, closing a long-standing gap between the classical and quantum rates.

\begin{restatable}{theorem}{logm}\label{thm:main_logm}
    There is a protocol for Online Shadow Tomography of $d$-dimensional states that, using $n$ copies of the unknown state for
    \begin{equation}
        n = \mathcal{O}\Bigl(\frac{\sqrt{K}\log(m+K)}{\epsilon^2}\Bigr) \qquad \mathrm{for} \qquad K = \Theta(\log(d)/\epsilon^2)\,,
    \end{equation}
    correctly answers $m$ adaptively chosen observables to error $\epsilon$ with probability at least $9/10$.
\end{restatable}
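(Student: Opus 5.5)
I would follow the classical template of Bassily et al., which combines the sparse vector / private multiplicative weights idea with a noise-addition mechanism, and replace differential privacy with a bound on post-measurement damage controlled via the quantum Efron--Stein decomposition. The high-level algorithm is a quantum analogue of online learning with a threshold test. We keep a classical hypothesis state $\sigma_t$ (matrix multiplicative weights over $d$-dimensional density matrices). For each incoming $A^{(t)}$ we test, using gentle measurements on the $n$ copies, whether $|\Tr(A^{(t)}\rho)-\Tr(A^{(t)}\sigma_t)|$ exceeds roughly $\epsilon$. On ``no'' we answer $\Tr(A^{(t)}\sigma_t)$. On ``yes'' we estimate $\Tr(A^{(t)}\rho)$ to within $\epsilon$ and perform an MMW update. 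Standard MMW regret analysis bounds the number of updates by $K=\Theta(\log(d)/\epsilon^2)$. So there are at most $K$ ``expensive'' rounds and at most $m$ threshold tests.

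The key quantitative goal mirrors the classical accounting. Each expensive round should use a noisy estimate whose damage to the state is like a Gaussian mechanism with privacy-type cost $\approx 1/(n\epsilon)^2$ per copy-normalized query. Composition over $K$ rounds then costs $\sqrt{K}$ rather than $K$ (advanced composition), and the threshold tests, handled as a sparse-vector-type mechanism, contribute only $\log(m+K)$. Concretely, I would implement each measurement as a smoothed, randomized measurement of the empirical average $\frac1n\sum_j A_j$, with $A_j$ acting on copy $j$. I would add Gaussian-like noise of width $\approx \epsilon$, realized for instance by measuring with a blurred POVM. I would then bound, via the quantum Efron--Stein decomposition of the $n$-copy operator, how much the post-measurement state deviates from $\rho^{\otimes n}$ in a KL-type divergence. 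The intended bound is that each such blurred measurement increases a ``damage potential'' (for example the expected weight of the state's Efron--Stein components of degree $\ge 1$) by $O(1/(n^2\epsilon^2))$ per unit, in expectation. Summing, the total damage after $K$ updates and all threshold tests is $O\bigl(K\log(m+K)^2/(n^2\epsilon^4)\bigr)$-ish. Setting this below a constant gives $n=\mathcal{O}(\sqrt{K}\log(m+K)/\epsilon^2)$. Finally, a transfer lemma (quantum analogue of ``privacy implies generalization'') converts small damage into accuracy: the answers computed from the damaged copies still concentrate around $\Tr(A^{(t)}\rho)$ within $\epsilon$, with failure probability $1/10$ via a union bound over $m+K$ events. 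This union bound is where the $\log(m+K)$ arises.

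The order of steps would be as follows.
\begin{enumerate}
\item Set up the MMW/threshold algorithm and bound the number of updates by $K$.
\item Define the Efron--Stein damage measure on $n$-copy states and prove a single-measurement damage lemma for blurred measurements of $\frac1n\sum_j A_j$.
\item Prove an adaptive composition lemma, where the adversary chooses each $A^{(t)}$ based on the transcript, showing the damage adds up in the $\sqrt{K}$ fashion. Sparse-vector style tests that return ``no'' should cost essentially nothing beyond a $\log$ factor.
\item Prove the accuracy transfer lemma and union bound.
\end{enumerate}

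The main obstacle will be step 3: obtaining $\sqrt{K}$ advanced-composition-type scaling rather than linear accumulation in a genuinely quantum setting with adaptive, noncommuting observables. Here I would try to bound the second-order (variance) growth of the Efron--Stein potential, showing that the first-order contributions are martingale differences with mean zero. This would give square-root rather than linear accumulation.
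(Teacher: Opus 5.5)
The central gap is that nothing in your plan makes the up to $m$ threshold tests that answer ``no'' essentially free. You assert that sparse-vector style tests cost nothing beyond a log factor, but that is precisely the crux. If each test is a blurred measurement of $\lift{A}$ with noise width $\approx\epsilon$, it raises the energy by $\Theta(1/(n^2\epsilon^2))$ whatever the outcome (this is Lemma~\ref{lem:sqrtm_perstep}). Then $m$ tests cost $m/(n^2\epsilon^2)$ and you only get the $\sqrt{m}/\epsilon^2$ rate.

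The paper's key lemma is an \emph{outcome-weighted} damage bound. Each test is a one-sided soft binary measurement with Kraus operators $\sqrt{F},\sqrt{\Id-F}$, where $F=f(\lift{A})$ is logistic with sharpness $\lambda$. Its channel is the average of conjugation by $U=\sqrt{\Id-F}+i\sqrt{F}=e^{i\alpha(\lift{A})}$ (with $\alpha=\arcsin\sqrt{f}$) and by $U^\dagger$, so first-order terms cancel. After fixing the other $n-1$ copies in an eigenbasis of $A$, $\lift{A}$ varies by at most $1/n$. Together with $\alpha'/\sqrt{f}\le\lambda/2$, this gives $\luders_F^\dagger(\mathsf{N})-\mathsf{N}\preceq\frac{2\lambda^2}{n}F$ (Corollary~\ref{cor:damage}). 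So the energy rises by at most $\frac{2\lambda^2}{n^2}$ times the flag probability. Summed over the history tree, the total rise is at most $2\lambda^2K/n^2$, since the expected number of flags is at most $K$; passes are free. The flag itself supplies the MMW loss $\pm A^{(t)}$, so your separate estimation step is unnecessary, though harmless.

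The potential must also be the mean excitation number per register, $\energy[\tau]=\frac1n\Tr(\mathsf{N}\tau)$. The weight on Efron--Stein components of degree $\ge 1$ grows $n$ times faster, $\Theta(1/(n\epsilon^2))$ per blurred measurement, and would lose the $\sqrt{K}$ saving.

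You also misplace the sources of $\sqrt{K}$ and $\log(m+K)$, and the transfer step fails as stated. Because the energy is already a second-order quantity, it simply adds linearly over rounds, and no advanced-composition or martingale argument is needed. Your own tally $K\log^2(m+K)/(n^2\epsilon^4)$ is linear in $K$; it is exactly the paper's $\lambda^2K/(n^2\epsilon^2)$ with $\lambda\approx\log(m+K)/\epsilon$.

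On transfer, an energy bound yields only a Chebyshev-type estimate. From $\lift{X}^2\preceq\frac9n(\mathsf{N}+\Id)$ (Lemma~\ref{lem:NI}) one gets $\Tr(\mathbf{1}_{|\lift{X}|\ge\xi}\tau)\le\frac{9}{\xi^2}\energy[\tau]+\frac12\Tr(\tau)$. This cannot give per-event failure probability $1/(m+K)$, so a union bound over $m+K$ concentration events does not close.

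The paper instead charges each branch's \emph{first} bad outcome to the energy of the sub-normalized state at that leaf (Lemma~\ref{lem:selfbound1}). The leaf energies telescope from the root to the total expected increase, so no union bound is paid on the damage side. The $\log(m+K)$ comes only from logistic leakage: on the subspace $\lift{X}\le\epsilon/16$ a low test's flag operator is $\preceq e^{-\lambda\epsilon/16}$. Summing this over the at most $2(m+K)$ layers forces $\lambda\epsilon\gtrsim\log(m+K)$. Combined with $\lambda\lesssim\epsilon n/\sqrt{K}$ from the energy budget, this gives the stated $n$.
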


\begin{restatable}{theorem}{sqrtm}\label{thm:main_sqrtm}
    There is a protocol for Online Shadow Tomography of $d$-dimensional states that, using $n$ copies of the unknown state for
    \begin{equation}
        n = \mathcal{O}\Bigl(\frac{\sqrt{m}}{\epsilon^2}\Bigr)\,,
    \end{equation}
    correctly answers $m$ adaptively chosen observables to error $\epsilon$ with probability at least $9/10$. In particular, the sample complexity of this protocol is independent of the dimension $d$.
\end{restatable}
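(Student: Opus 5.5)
The plan is to quantize the Gaussian-mechanism proof of the classical $\sqrt{m}/\epsilon^2$ bound~\cite{bassily2021algorithmic}. Since the abstract promises a framework based on the quantum Efron--Stein decomposition, I take that decomposition as the tool for controlling post-measurement damage, and I develop that step myself below.

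\textbf{The protocol.} Keep all $n$ copies $\rho^{\otimes n}$ throughout. When $A^{(t)}$ arrives, form the sum observable $S^{(t)} = \sum_{j=1}^n A^{(t)}_j$, where $A_j$ acts as $A$ on copy $j$ and as the identity elsewhere. Perform a Gaussian-smoothed measurement of $S^{(t)}$ with width $\sigma$: the POVM elements are $M_y = \phi_\sigma(y\Id - S^{(t)})$, for $\phi_\sigma$ the Gaussian density, and the Kraus operators are $\sqrt{M_y}$. Report $\widehat{\mu}^{(t)} = y/n$, then continue on the post-measurement state. The parameters are $\sigma \asymp \sqrt{m}/\epsilon$ and $n \asymp \sigma/\epsilon \asymp \sqrt{m}/\epsilon^2$. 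This mirrors the classical calibration: $\sigma\asymp\sqrt m/\epsilon$ gives privacy about $\epsilon$ after $m$ compositions, and $\sigma/n\le\epsilon$ gives accuracy.

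\textbf{Step 1 (accuracy on an undamaged state).} If the state before round $t$ were exactly $\rho^{\otimes n}$, then $y/n$ would be the empirical mean plus $N(0,\sigma^2/n^2)$ noise. Chebyshev and Gaussian tails would then give error $O(1/\sqrt n + \sigma/n) \le \epsilon$, with failure probability small enough to survive a union bound over the $m$ rounds after constants are adjusted. (Getting only $9/10$ success rather than high probability should suffice via Markov-type arguments, as in the classical setting.) The remaining task is to show the actual state is ``close enough'' to this ideal, in the sense seen by future adaptively chosen observables.

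\textbf{Step 2 (transfer / damage bound).} This is the quantum analogue of the DP-to-generalization transfer theorem.
\begin{itemize}
\item Define a potential $\Phi_t$ measuring how far the post-measurement state $\rho_t$, conditioned on the transcript $\calT_t$, is from behaving like i.i.d.\ copies of $\rho$. Concretely, expand $\rho_t$ in the quantum Efron--Stein decomposition relative to $\rho$, i.e.\ in the basis of operators of the form $\rho$ on some copies and traceless, $\rho$-orthogonal pieces on the rest. Take $\Phi_t$ to be the expected $\chi^2$-type weight on the components of level $\ge 1$.
\item Show that a single smoothed measurement of $S^{(t)}$ increases $\EE[\Phi_t]$ by at most $O(n/\sigma^2)$ in total, i.e.\ $O(1/\sigma^2)$ per copy. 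The mechanism is as follows: $\sqrt{M_y}$ is a smooth function of the sum $S^{(t)}$, so to first order it creates only level-1 components of size $\propto 1/\sigma$. Higher levels appear only at higher order in $1/\sigma$, and Gaussian smoothing controls the commutator terms.
\item Summing over $m$ rounds gives per-copy damage $O(m/\sigma^2) = O(\epsilon^2)$.
\item For the next query, the bias is $\EE\bigl[\tfrac1n \Tr(S^{(t+1)}\rho_t)\bigr] - \Tr(A^{(t+1)}\rho)$. By Cauchy--Schwarz against the level-1 weight, this is at most $O\bigl(\sqrt{m}/\sigma\bigr) = O(\epsilon)$. Pass from expectation to a $9/10$ guarantee via Markov's inequality over rounds, or via a ``monitor'' argument that isolates the worst round.
\end{itemize}

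\textbf{Main obstacle.} The hard part is the per-round damage bound in Step 2. The Kraus operators $\sqrt{M_y}$ do not commute with $\rho^{\otimes n}$, so the measurement can entangle copies and also rotate individual copies, which is a quantum effect absent classically. I would handle this in three steps:
\begin{itemize}
\item Write $\sqrt{M_y}$ as a Fourier integral $\int \hat g(s) e^{is(y-S^{(t)})}\,ds$, which factorizes as a tensor product $\bigotimes_j e^{-isA_j}$.
\item Use this factorization to compute exactly how each Efron--Stein level transforms, and show that the level-$k$ growth is suppressed by $(\|A\|/\sigma)^{k}$ times combinatorial factors.
\item Conclude that the damage is dominated by the level-1 term, giving the $O(n/\sigma^2)$ increment.
\end{itemize}
If controlling higher levels directly proves too delicate, I would instead bound the measured KL divergence $\DKLmeas{\rho_t}{\rho^{\otimes n}}$ through a data-processing/chain-rule argument, and accept only a constant-factor loss.
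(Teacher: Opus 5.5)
There is a genuine gap at the point where you must convert a damage bound into the guarantee that all $m$ answers are simultaneously $\epsilon$-accurate. The first problem is the choice of Gaussian noise. Under $M_y=\phi_\sigma(y\Id-S^{(t)})$, the outcome, conditionally on the history, is a spectral value of $S^{(t)}$ plus a fresh $N(0,\sigma^2)$ variable. With $\sigma/n\asymp\epsilon$, each round therefore errs by more than $\epsilon$ with probability bounded below by a constant independent of $m$, already on a diagonal, damage-free instance. No adjustment of constants survives a union bound over $m$ rounds. Making the per-round tail at most $1/(10m)$ needs $\sigma/n\lesssim\epsilon/\sqrt{\log m}$, and your damage budget $m/\sigma^2\lesssim\epsilon^2$ then forces $n\gtrsim\sqrt{m\log m}/\epsilon^2$. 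This is the same reason the classical Gaussian-mechanism bounds you want to quantize carry log factors. The paper avoids this by using the compactly supported kernel $\phi_\omega(x)=\omega^{-1/2}\cos(\pi x/2\omega)\mathbf{1}_{|x|\le\omega}$ with $\omega=\epsilon/4$, directly as the Kraus operator $\phi_\omega(y\Id-\lift{A})$. The added noise never exceeds $\epsilon/4$, yet $\|\phi_\omega'\|_2^2=\pi^2/(4\omega^2)$ is finite, which is all the damage bound needs.

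The second problem is that your last step bounds only the transcript-averaged bias $\EE[\tfrac1n\Tr(S^{(t+1)}\rho_t)]-\Tr(A^{(t+1)}\rho)$. A first-moment bound says nothing about the probability that some round's outcome is $\epsilon$-far, and neither Markov over rounds nor a monitor argument supplies it. Even a per-transcript Chebyshev bound leaves an additive $\mathcal{O}(1/(n\epsilon^2))$ per round, which sums to $m/(n\epsilon^2)\gg 1$ at your $n$. The paper's argument has three parts:
\begin{itemize}
\item It proves the operator inequality $\lift{X}^2\preceq\frac9n(\mathsf{N}+\Id)$ (Lemma~\ref{lem:NI}).
\item Because the Kraus operator is compactly supported, the sub-normalized state $\tau$ right after a first bad outcome lies entirely in the eigenspace $|\lift{A}-\Tr(A\rho)\Id|>3\epsilon/4$. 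The additive term is then absorbed, giving $\Tr(\tau)\le\frac{32}{\epsilon^2}\energy[\tau]$ with no slack (Lemma~\ref{lem:selfbound2}).
\item It charges the total mass of first-bad leaves against the total leaf energy of the history tree. That energy telescopes to at most $m\cdot\frac{\pi^2}{2n^2\omega^2}$, so the failure probability is $\mathcal{O}(m/(n^2\epsilon^4))$.
\end{itemize}

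Your per-round damage scale of $\mathcal{O}(1/\sigma^2)$ per copy is correct, but the perturbative level-by-level expansion is unnecessary. In the Heisenberg picture the averaged channel is $\int q_\omega(t)\,\tfrac12(U_t^\dagger XU_t+U_tXU_t^\dagger)\,\mathrm{d}t$, with $U_t=e^{it\lift{A}}$ and $q_\omega=|\widehat{\phi}_\omega|^2$. Since $\mathsf{N}=\sum_jQ_j$ is a sum of single-copy terms, each conjugation collapses to $e^{-itA_j/n}Q_je^{itA_j/n}$. This raises $Q_j$ by at most $\frac{2t^2}{n^2}\Id$ exactly, so there are no higher-order terms to control.
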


\noindent Elementary casework shows that the minimum of these two rates is equal to the classical rate in Eq.~\eqref{eq:rate} up to constants. 

We achieve both of these results through a new method for accounting for post-measurement damage through the quantum Efron--Stein decomposition~\cite{pickl2011simple,de2025non}. Interestingly, when specialized to the classical setting, this appears to give new proofs for the classical rates that do not go through Differential Privacy but instead through Fourier analysis of functions over product spaces.

\subsection{Related work}

\paragraph{Adaptive Data Analysis.}  The classical version of Online Shadow Tomography is the \emph{Adaptive Data Analysis} problem, introduced by Dwork et~al.~\cite{DFHPRR16}.  In this problem, we are given samples from an unknown probability distribution $p$ on $[d]$, an adversary adaptively proposes a sequence of random variables (``queries'') 
% \sitan{was using $a^{(t)}$ for ``answer'', maybe we can switch these to $q^{(t)}$?}\ryan{ooh, that's a bit rough if we make the change from $O$ to $A$ for observables; I kinda like having classical $a$ to parallel quantum $A$; my dream replacement for your $a^{(t)}$ would be $\widehat{\mu}^{(t)}$} \sitan{$\widehat{\mu}^{(t)}$ sounds good! changed just now}
$a^{(1)}, \ldots, a^{(m)} : [d] \to [0,1]$, and after each $a^{(t)}$ is given we must output an estimate of $\EE_p[a]$ to within~$\pm \eps$.  (For simplicity, we fix the allowed failure probability $\delta$ of any estimate being inaccurate to~$1/3$.)  Using tools from Differential Privacy,
% \ryan{maybe explain a bit more, saying gentle/damage}, 
Dwork et~al.\ showed that $n = \wt{\mathcal{O}}(\log^{3/2}(m) \sqrt{\log d}/\eps^{7/2})$ or $n = \wt{\mathcal{O}}(\log(m) \log(d)/\eps^{4})$ samples suffice, and these were improved to just $\mathcal{O}(\log(m) \sqrt{\log d}/\eps^3)$ by Bassily et~al.~\cite{bassily2021algorithmic} (see~\cite{CLNSS23} for an improved dependence on~$\delta$). These tools are closely related to the essence of Shadow Tomography: the premise of Differential Privacy is to perform data analysis that is as insensitive as possible to individual-level details in the dataset, which can be regarded as a classical analogue of the quantum notion of performing ``gentle'' measurements that do not damage the quantum state we are trying to learn about.

One can also seek \emph{dimension-independent} bounds, meaning ones with no dependence on~$d$.  In this case, Bassily et~al.\ showed that $n = \mathcal{O}(\sqrt{m \log m} \log(1/\eps)/\eps^2)$  or $n = \mathcal{O}(\sqrt{m \log \log m} \log^{3/2}(1/\eps)/\eps^2)$ suffices, and Dagan--Kur~\cite{DK22} improved this to $n = \mathcal{O}(\sqrt{m \log(1/\eps)}/\eps^2)$ provided $\eps \geq \exp(-\wt{\Omega}(m))$. Interestingly, as a byproduct our Theorem~\ref{thm:main_sqrtm} appears to improve further upon this by removing the $\sqrt{\log(1/\epsilon)}$ factor, albeit only in the regime of constant failure probability.

Let us now mention lower bounds (all of which automatically also apply to the harder problem of online Shadow Tomography).  Trivially, $\Omega(\log(m)/\eps^2)$ is a lower bound, once $d = \Omega(\log m)$.  For dimension-independent results, \cite{NSSSU18} show that $n = \Omega(\sqrt{m}/\eps^2)$ is a lower bound provided $d \geq \exp(O(m))$, matching the upper bound, provided the algorithm is $\eps$-accurate not only on the underlying distribution, but also on the $n$ empirical samples.  In general, Lyu--Talwar~\cite{lyu2025fingerprinting} show that $\Omega(\log(m)\sqrt{\log d}/(\eps^2 \log(1/\eps)))$ is a lower bound (assuming $\log d \ll m \ll 2^d$ and $\eps \gg 1/d, 1/m$); they also give evidence that the known upper bound of $\mathcal{O}(\log(m) \sqrt{\log d}/\eps^3)$ is essentially tight by showing that $\Omega(\log(m) \sqrt{\log d}/(\eps^3 \log(1/\eps)))$ is a lower bound for algorithms, again, which are $\eps$-accurate not just on the distribution but also on the empirical samples.

\paragraph{Shadow Tomography.}
The offline form of Shadow Tomography was introduced by Aaronson~\cite{Aar16} in 2016, and the online form first discussed by Aaronson and Rothblum~\cite{aaronson2019gentle}.  In the online case, upper bounds of $n = \mathcal{O}(m\log(m)/\eps^2)$ and $n = \mathcal{O}(d^2/\eps^2)$ are trivial (the last using full tomography~\cite{OW16,HHJWY17}).  Aaronson and Rothblum showed that $n = \mathcal{O}(\log^2(m) \log^2(d)/\eps^8)$ suffices.  This was later improved by B\u{a}descu--O'Donnell~\cite{BO24} to $n = \log^2(m) \cdot \mathcal{O}(\log(d)/\eps^4)$. 
The only known lower bounds for Online Shadow Tomography are the aforementioned ones for Adaptive Data Analysis, as well as an $\Omega(d^2/\eps^2)$ lower bound that holds when $m \geq \exp(\mathcal{O}(d^2))$ by Aaronson~\cite{Aar20}.

In the easier setting of Offline Shadow Tomography,  additional results are known.    Preceding~\cite{BO24}, Aaronson~\cite{Aar20} showed that $n = \wt{\mathcal{O}}(\log^4(m) \log(d)/\eps^4)$ suffices; the bound of B\u{a}descu--O'Donnell was later obtained via a different method by Bostanci--Bene{ }Watts~\cite{BB24}.)  In the dimension-independent setting, Sinha~\cite{Sin25} gave an efficient algorithm achieving $n = \mathcal{O}(\sqrt{m} \log(m)/\eps^2)$.  Chen--Li--Liu~\cite{chen2024optimal} obtained the optimal $n = \mathcal{O}(\log(m)/\eps^2)$, but only under the assumption $\eps = \mathcal{O}(1/d^{12})$; this assumption was weakened to $\eps = \mathcal{O}(1/d)$ by Pelecanos--Spilecki--Wright~\cite{PSW26}.

Finally, we mention that additional offline Shadow Tomography results are known for special classes of observables~\cite{HKP20,CGY24,KGKB25,HLHP26} (the last of these is online), for special classes of states~\cite{GPS24,CG26}. Interestingly, even when $d = 2^N$ and the observables are specialized to the set of all $N$-qubit Pauli operators, it was not known how to achieve better than quartic dependence in $1/\epsilon$, and~\cite{CGY24} proved that any such algorithm must use highly entangled measurements.

\paragraph{Quantum Efron--Stein decomposition.} The key technical tool in our work, the \emph{excitation decomposition} (see Section~\ref{sec:pickl}), was originally introduced by Pickl~\cite{pickl2011simple} as a simple method for deriving mean-field limits of quantum systems. The method has subsequently been used and extended at length in the quantum mean-field literature, see e.g., ~\cite{knowles2010mean,lewin2015bogoliubov,lewin2015fluctuations,pickl2015derivation}.

As we discuss in Remark~\ref{remark:efronstein}, this decomposition is formally the dual of the \emph{quantum Efron-Stein decomposition} recently developed by the second author and others~\cite{de2025non} in the context of quantum state certification in non-iid settings. To the best of our knowledge, even classically, these tools have not been used in the context of Adaptive Data Analysis.

\section{Technical overview}

In this section we provide a high-level description of the core ideas behind Theorems~\ref{thm:main_logm} and~\ref{thm:main_sqrtm}. For simplicity, throughout this overview we will assume that $\rho = \ketbra{\psi}{\psi}$ by passing to a purification. This turns out to be without loss of generality as our protocol never touches the purification register.

\subsection{Basic terminology}

\paragraph{Sub-normalized post-measurement states.} Given a positive operator-valued measure (POVM) $(M_y)$ implemented with Kraus operators $(\sqrt{M_y})$, and given a state $\sigma$, the post-measurement state upon observing outcome $y$ is given by $\frac{\sqrt{M_y}\sigma\sqrt{M_y}}{\Tr(M_y\sigma)}$. We often reference the \emph{sub-normalized} post-measurement state $\sqrt{M_y}\sigma\sqrt{M_y}$ rather than the post-measurement state, as the trace of the former encodes additional information, namely the probability of observing that measurement outcome. We can naturally extend this notion to a \emph{sequence} of measurements: given POVMs $(M^{(1)}_y),\ldots,(M^{(t)}_y)$ with similarly constructed Kraus operators, if we observe a sequence of outcomes $y_1,\ldots,y_t$, then the resulting sub-normalized state is $\sqrt{M^{(t)}_{y_t}}\cdots \sqrt{M^{(1)}_{y_1}} \sigma \sqrt{M^{(1)}_{y_1}}\cdots \sqrt{M^{(t)}_{y_t}}$, and its trace is the probability of observing $y_1,\ldots,y_t$.

\paragraph{Quantum events.} Given a finite-dimensional Hermitian operator $A\in\mathbb{C}$ with eigendecomposition $A = \sum_\lambda \lambda \Pi_\lambda$, and given some $S\subseteq\mathbb{R}$, we use $\mathbf{1}_{A\in S}$ to denote the projector to the joint span of $\Pi_{\lambda}$ for all $\lambda\in S$. For instance, we write $\mathbf{1}_{|A - \nu \Id| \le \epsilon}$ to denote the projector to the joint span of $\Pi_\lambda$ for which $|\lambda - \nu| \le \epsilon$.

\subsection{Excitation decomposition} 

The main difficulty with measuring multiple observables of a quantum state is that quantum measurement is inherently destructive. Prior works on Shadow Tomography have developed various ways of \emph{gently} measuring the state and bounding the damage incurred in terms of how much information the measurement reveals. For instance, the \emph{gentle measurement lemma}~\cite{winter1999coding,ogawa2002new} bounds the trace distance between the original state $\rho$ and the post-measurement state under a two-outcome POVM $(A,\Id-A)$ in terms of $\Tr(A\rho)$. This bound is pervasive in this literature and in fact tight for worst-case observables $A$.

Our key starting point is the observation that the observables for which the gentle measurement lemma is typically applied in standard approaches to Shadow Tomography come with additional structure. Indeed, in all state-of-the-art works on this problem, the $A$'s in question are \emph{multi-copy} observables derived by a common recipe: starting from some single-copy observable $B$, \emph{lift} it to a multi-copy observable by forming
\begin{equation}
    \overline{B} = \frac{1}{n}\sum^n_{i=1} B_i \qquad \text{for} \qquad B_i \triangleq \Id^{\otimes (i-1)} \otimes B\otimes \Id^{\otimes (n-i)}\,, \label{eq:lift_overview}
\end{equation}
and define a spectral function $A = f(\overline{B})$.

In this work, we develop a new framework for reasoning about damage specifically incurred from measuring such observables, inspired by a state decomposition originally introduced by Pickl~\cite{pickl2011simple} in the context of mean-field limits of quantum systems and which is formally dual to the \emph{quantum Efron--Stein decomposition} recently developed by the second author and others~\cite{de2025non}. Specifically, given any pure state $\ket{\phi}$ on $n$ registers, we can write
\begin{equation}
    \ket{\phi} = \sum_{S\subseteq[n]} \ket{\phi_S}\, \qquad \text{for} \qquad \ket{\phi_S} \triangleq \Bigl(\prod_{i\in S} (\Id - \ketbra{\psi}{\psi})_i\Bigr) \Bigl(\prod_{i\not\in S} \ketbra{\psi}{\psi}_i \Bigr)\ket{\phi}\,.
\end{equation}
This is an orthogonal decomposition, so that $(\norm{\ket{\phi_S}}^2_2)_{S\subseteq[n]}$ is a distribution over subsets $S$. Note that when $\ket{\phi} = \ket{\psi}^{\otimes n}$, this distribution places all of its mass on $S = \emptyset$. Our guiding intuition will be that the more mass this distribution places on \emph{large} subsets, the more damaged $\ket{\phi}$ is relative to the original state $\ket{\psi}^{\otimes n}$. 

More quantitatively, we associate to $\ket{\phi}$ an \emph{energy} given by
\begin{equation}
    \energy[\ketbra{\phi}{\phi}] = \frac{1}{n}\bra{\phi}\mathsf{N}\ket{\phi}\,, \qquad \text{for} \qquad \mathsf{N} \triangleq \sum^n_{i=1} (\Id - \ketbra{\psi}{\psi})_i\,.
\end{equation}
In the language of bosonic systems, $\mathsf{N}$ is the \emph{number operator} and $\energy$ is the \emph{mean photon number per register}. This can also be seen as the quantum generalization of \emph{total influence} in the analysis of Boolean functions (see Remark~\ref{remark:efronstein}).

\subsection{Charging argument} 

The energy starts out at $0$ for $\ket{\phi} = \ket{\psi}^{\otimes n}$ and gives a powerful proxy for tracking damage to the system incurred by a sequence of measurements. 

In a given round of the learning protocol, we need to estimate $\Tr(A^{(t)}\rho)$ for some observable $A^{(t)}$; let $\overline{A}^{(t)}$ denote its $n$-copy lift, defined analogously to Eq.~\eqref{eq:lift_overview}. Let $\tau = \ketbra{\phi}{\phi}$ denote the current sub-normalized state of the system on $n$ registers, so that $\Tr(\tau)$ denotes the probability of the protocol reaching this state. 
Now suppose $\tau$ was ``problematic'' in the sense that it is supported on the quantum event $|A^{(t)} - \Tr(A^{(t)}\rho)\Id| > \epsilon$.
% subspace spanned by eigenvectors of $\overline{A}^{(t)}$ with eigenvalue outside of the interval $[\Tr(A^{(t)}\rho) - \epsilon, \Tr(A^{(t)}\rho) + \epsilon]$. 
This certainly would not have been the case at the beginning of the protocol when $\ket{\phi} = \ket{\psi}^{\otimes n}$, and indicates that the system has incurred some damage from the operations performed in previous rounds of the protocol.

The key observation is that if this is the case, then the likelihood $\Tr(\tau)$ of reaching this particular conditional state can be upper bounded by the energy of the system! Specifically, we show (Lemma~\ref{lem:secondmoment}) that 
\begin{equation}
    \Tr(\tau) \lesssim \frac{1}{\epsilon^2} \energy[\tau]\,.\label{eq:charge_overview}
\end{equation}
Intuitively this means that either $\Tr(\tau)$ is small in which case it's unlikely we will reach this problematic state anyways, or if it is large, then it must mean that by that point, the system has very high energy. As a result, if we could ensure that every operation performed on the system in our protocol does not raise the energy too much in expectation, then the probability that the system ends up in such problematic states is low.

Thus far, we have not used anything about the structure of the operations being performed on the system over the course of the protocol. The measurements used differ markedly between our two results, but the common feature is that they operate over the eigenbasis of the lifted observable $\overline{A}^{(t)}$ in each round. This lifted structure is essential to controlling the energy increase in each round.

\subsection{Controlling energy increase under many observables}

In the case of Theorem~\ref{thm:main_logm}, our protocol follows the general \emph{online learning} template used in prior works on Shadow Tomography~\cite{Aar20,aaronson2019gentle,BO24,BB24}. We attempt to implement a two-player game between a ``student'' and a ``teacher,'' where the student iteratively updates an estimate for the unknown state, and the teacher, equipped with copies of the unknown state, rejects or accepts the student's estimate depending on whether it correctly predicts a given observable value.

The implementation for the student is standard: based on the teacher's feedback, the student runs standard matrix multiplicative weights updates~\cite{ACH+19}. In our work, we implement the teacher as follows. Instead of directly estimating the true value of $\Tr(A^{(t)}\rho)$ in a given round, which naively might require a highly destructive measurement, the teacher merely tries to determine whether the student's current estimate is close by performing the following measurements. Let $\nu$ be the student's estimate for $\Tr(A^{(t)}\rho)$, and consider the two-outcome, $n$-copy measurement $(f(\overline{A}^{(t)}), \Id - f(\overline{A}^{(t)}))$ where $f$ is the logistic function
\begin{equation}
    f(x) = \frac{1}{1 + e^{\lambda(\nu - c\epsilon - x)}}
\end{equation}
for appropriately chosen parameters $\lambda, c$.
By design, the probability $\Tr(f(\overline{A}^{(t)})\rho)$ of the first measurement outcome is close to $1$ (resp. $0$) when the student's estimate is smaller (resp. larger) than the true value of $\Tr(A^{(t)}\rho)$ by some margin. In particular, if the student's estimate is inaccurate, the outcome of this measurement is sufficiently biased that we do not expect the system to be damaged too much. Our core technical step (Corollary~\ref{cor:damage}) is to show that in expectation over this measurement, if the conditional state of the system is $\tau$, then after the measurement the energy increases by at most 
\begin{equation}
    \mathcal{O}\Bigl(\frac{\lambda^2}{n^2}\Bigr)\cdot \Tr(f(\overline{A}^{(t)})\tau)\,. \label{eq:energy_overview1}
\end{equation}
Observe that the sum of $\Tr(f(\overline{A}^{(t)}\tau))$ over all possible observables $A^{(t)}$ and corresponding conditional states $\tau$ over the course of the protocol is equal to the expected number of mistakes that the student makes interacting with the teacher, which by standard regret minimization guarantees scales with $\mathcal{O}(\log(d)/\epsilon^2)$. So for the sum of the expected energy increments in Eq.~\eqref{eq:energy_overview1} to be bounded by $\mathcal{O}(\epsilon^2)$ so that we may invoke Eq.~\eqref{eq:charge_overview}, we need to take 
\begin{equation}
    \lambda = \mathcal{O}(n\epsilon^2/\sqrt{\log(d)})\,.
\end{equation}
The argument is concluded upon noting that the conditional states $\tau$ encountered in the protocol are never fully ``problematic'' in the above sense of being entirely supported on the subspace given by $|\overline{A}^{(t)} - \Tr(A^{(t)}\rho)\Id| > \epsilon$. For such conditional states $\tau$, in addition to the energy term in Eq.~\eqref{eq:energy_overview1}, there is an extra term that scales with $e^{-\Omega(\lambda\epsilon)}$ (Lemma~\ref{lem:selfbound1}) coming from the slope of the logistic function. Summing this over the $O(m)$ total measurements that the teacher performs over the course of the full protocol, and by our choice of $\lambda$, we conclude that this excess term $me^{-\Omega(\lambda\epsilon)}$ is sufficiently small provided that $n = \Omega(\log m\cdot \sqrt{\log d} / \epsilon^3)$.

\subsection{Controlling energy increase under few observables}
In the case of Theorem~\ref{thm:main_sqrtm}, our protocol is much simpler and involves directly measuring the observables queried by the adversary, but corrupted with a small amount of noise. This is similar in spirit to a version of the protocol of Sinha~\cite{Sin25} in the special case of fixed, non-adaptively chosen observables, which was in turn inspired by the Gaussian mechanism from Differential Privacy and Adaptive Data Analysis~\cite{dwork2006our,bassily2021algorithmic}. A small difference is that instead of Gaussian noise, we use compactly supported noise to avoid extra $\log(m)$ factors. Specifically, we consider adding noise drawn from the distribution over $[-\omega,\omega]$, for width parameter $\omega = \Theta(\epsilon)$,
\begin{equation}
    \phi(x) = \omega^{-1/2}\cos(\pi x/2\omega)\cdot \mathbf{1}_{|x|\le \omega}\,.
\end{equation}
Analogous to before, the core technical step (Lemma~\ref{lem:sqrtm_perstep}) is to show that in expectation over such a measurement, if the conditional state of the system is $\tau$, then after the measurement the energy increases by at most
\begin{equation}
    \mathcal{O}\Bigl(\frac{1}{n^2\epsilon^2}\Bigr)\,.
\end{equation}
So after $m$ rounds, the energy increases in expectation to order $\frac{m}{n^2\epsilon^2}$, and by Eq.~\eqref{eq:charge_overview}, we need this to be of order $\epsilon^2$ in order for the probability of reaching a problematic state to be small. Solving for $n$, we find that this can be achieved with $n = \Omega(\sqrt{m}/\epsilon^2)$ copies.

\section{Excitation decomposition}
\label{sec:pickl}

The key tool in our analysis is to quantify post-measurement damage using \emph{excitations}, inspired by the counting method of Pickl~\cite{pickl2011simple}.

Let $\ket{\psi}$ be any purification of the unknown state $\rho$, let $\mathcal{H}$ and $\mathcal{R}$ denote the original Hilbert space and the purification registers respectively, and define projectors
\begin{equation}
    P = \ketbra{\psi}{\psi}\,, \qquad Q = \Id - P\,.
\end{equation}
Given $n$ copies of $\rho$, for every subset $S\subseteq[n]$, define
\begin{equation}
    \Pi_S \triangleq \Bigl(\prod_{i\in S} Q_i\Bigr)\Bigl(\prod_{i\not\in S} P_i\Bigr)\,.
\end{equation}
Any vector $\ket{\phi}\in (\mathcal{H}\otimes \mathcal{R})^{\otimes n}$ admits a unique decomposition
\begin{equation}
    \ket{\phi} = \sum_{S\subseteq[n]} \ket{\phi_S}\,, \qquad \ket{\phi_S} \triangleq \Pi_S \ket{\phi}\,,
\end{equation}
in analogy with the classical \emph{Efron--Stein decomposition} for functions over product spaces. Here, the vectors $\ket{\phi_S}$ are orthogonal, and $\norm{\phi}^2_2 = \sum_S \norm{\phi_S}^2_2$.

% Given $0 \le k \le n$, define \sitan{might not need $\Pi_{\le k}$ and $\Pi_{>k}$ anymore...}
% \begin{equation}
%     \Pi_k\triangleq\sum_{|S|=k}\Pi_S\,, \qquad \Pi_{\le k}\triangleq\sum_{j\le k}\Pi_j\,, \qquad \Pi_{>k}\triangleq\sum_{j>k}\Pi_j
% \end{equation}
Define the \emph{number operator}
\begin{equation}
    \mathsf{N}\triangleq\sum_{i=1}^nQ_i.
    % =\sum_{k=0}^n k\cdot \Pi_k\,.
\end{equation}
We say that $\ket{\phi}$ has \emph{excitation degree at most $k$} if $\ket{\phi_S} = 0$ for all $|S| > k$.

We can now define the central quantity that allows us to reason about how much damage a state has incurred from measurements.

\begin{definition}[Energy]
    The \emph{energy} of a sub-normalized mixed state $\tau$ is defined by
    \begin{equation}
        \energy[\tau] \triangleq \frac{1}{n}\Tr(\mathsf{N}\tau)\,.
    \end{equation}
\end{definition}

\begin{remark}\label{remark:efronstein}
    There is a dual interpretation of the excitation decomposition in terms of the \emph{quantum Efron--Stein decomposition} developed in~\cite{de2025non}. Given an operator $B$ over $\mathcal{H}^{\otimes n}$, one can define operators 
    \begin{equation}
        \mathcal{E}_i(B) = \Tr_i(\rho_i B)\otimes \Id_i \qquad \text{and} \qquad \mathcal{D}_i(B) = \Id - \mathcal{E}_i
    \end{equation} and projectors
    \begin{equation}
        \mathcal{P}_S \triangleq \Bigl(\prod_{i\in S} \mathcal{D}_i\Bigr)\Bigl(\prod_{i\in S}\mathcal{E}_i\Bigr)\,,
    \end{equation}
    and the \emph{quantum Efron--Stein decomposition} of $B$ is $B = \sum_S \mathcal{P}_S B$. The connection to the excitation decomposition of a state $\ket{\phi}$ is that if $\ket{\phi}$ is the sub-normalized conditional state given by applying a sequence of measurements where the product of the associated Kraus operators is $B$, then $B_S$ and $\ket{\phi_S}$ are related via
    \begin{equation}
        (B_S \otimes \Id_{\mathcal{R}})\ket{\psi}^{\otimes n} = \ket{\phi_S}
    \end{equation}
    One can thus think of the excitation decomposition and the Efron--Stein decompositions as Schr\"odinger / Heisenberg duals of each other. Furthermore, the \emph{Efron--Stein degree operator} $\mathsf{N}_{\rm ES} = \sum^n_{i=1} \mathcal{D}_i$ is the dual of the number operator in the excitation decomposition, and if $B$ were a classical operator corresponding to a function over $[d]^n$, then the corresponding energy $\energy[\ketbra{\phi}{\phi}]$ would be the \emph{total influence} of the function in the sense of classical functions over product spaces.
\end{remark}

\noindent If $\ket{\phi}$ is normalized, then $(\norm{\Pi_S\phi}^2_2)_{S\subseteq[n]}$ specifies the probability mass function for a random variable $S\subseteq[n]$; intuitively, if $\ket{\phi}$ is a post-measurement state, then the more mass is concentrated at smaller subsets $S$, the less ``damaged'' it is. 

\subsection{Lifted observables and energy}
\label{sec:lifted}

\begin{definition}
    Given a number of copies $n$ and a single-copy Hermitian observable $0 \preceq A\preceq \Id$, define the \emph{lifted observable} $\lift{A}$ by
    \begin{equation}
        \lift{A} = \frac{1}{n}\sum^n_{j=1} A_j\,, \qquad \text{for} \qquad A_j \triangleq \Id^{\otimes (j-1)}\otimes A\otimes \Id^{\otimes (n-j)}\,.
    \end{equation}
\end{definition}

% We now show that for states whose mass is mostly concentrated in low-excitation subspaces, the value of lifted observables is close to the value of the original single-copy observable.

% \begin{lemma}[Low excitation implies accurate estimate]\label{lem:lowexcite}
%     There are absolute constants $c_1, c_2, C > 0$ such that the following holds for any $0 < \xi \le 1$. If
%     \begin{equation}
%         n\ge c_1 / \xi^2 \qquad \text{and} \qquad k \le c_2\xi^2 n\,, \label{eq:assume_bounds_nk}
%     \end{equation}
%     then for any vector $\ket{\phi}$ satisfying $\norm{\phi}_2 \le 1$, and for the true mean $\mu = \Tr(A\rho)$ and lifted observable $\lift{A}$ associated with observable $0 \preceq A\preceq \Id$, we have
%     \begin{equation}
%         \norm{\mathbf{1}_{|\lift{A} - \mu \Id| > \xi} \Pi_{\le k} \ket{\phi}}^2_2 \le \exp(-Cn\xi^2)\,.
%     \end{equation}
% \end{lemma}

\noindent For convenience, let $X \triangleq (A - \mu \Id)\otimes \Id_{\mathcal{R}}$, and let $\lift{X} = \lift{A} - \mu \Id$. Note that $\norm{X}_{\sf op} \le 1$ and
\begin{equation}
    PXP = \bra{\psi}X\ket{\psi}P = (\Tr(A\rho) - \mu)P = 0\,.
\end{equation}
Because $PXP = 0$, we can decompose $X$ as
\begin{equation}
    X = X_+ + X_- + X_0,\, \qquad X_+ \triangleq QXP\,, \quad \text{where} \quad X_- \triangleq PXQ\,, \quad X_0 \triangleq QXQ.
\end{equation}
Note that $X^\dagger_+ = X_-$ and $X^\dagger_0 = X_0$.
Informally, the left action of $X_+$ corresponds to adding an excitation, that of $X_-$ corresponds to removing an excitation, and that of $X_0$ corresponds to keeping an excitation. 

\begin{lemma}\label{lem:secondmoment}
    Let $\xi > 0$ satisfy $n \xi^2 \geq 18$.
    Let $\tau \in (\calH \otimes \calR)^{\otimes n}$ be a sub-normalized mixed state.
    Consider the quantum event $|\lift{A} - \Tr(A\rho)\Id| \ge \xi$, and let $B = \mathbf{1}_{|\lift{A} - \Tr(A\rho)\Id|\ge \xi}$ be the projector onto the corresponding subspace.
    Then
    \begin{equation*}
        \Tr(B \cdot \tau)
        \leq \frac{9}{\xi^2} \energy[\tau]
        + \frac{1}{2} \Tr(\tau).
    \end{equation*}
    As a consequence, if $\tau$ lies in the subspace corresponding to $B$, then
    \begin{equation}
        \Tr(\tau) \le \frac{18}{\xi^2} \energy[\tau]\,.
    \end{equation}
\end{lemma}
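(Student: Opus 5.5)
The plan is a quantum Chebyshev inequality, followed by an operator bound on the second moment of $\lift{X}$ in terms of the number operator. Since $B$ is a spectral projector of $\lift{X} = \lift{A} - \Tr(A\rho)\Id$, it commutes with $\lift{X}$, and on its range $\lift{X}^2 \succeq \xi^2$. Hence $B \preceq \lift{X}^2/\xi^2$ as operators, and
\begin{equation}
    \Tr(B\tau) \le \frac{1}{\xi^2}\Tr(\lift{X}^2\tau)\,.
\end{equation}
It therefore suffices to prove the operator inequality
\begin{equation}
    \lift{X}^2 \preceq \frac{9}{n}\mathsf{N} + \frac{3}{n}\Id\,.
\end{equation}
Granting it, $\Tr(\lift{X}^2\tau)/\xi^2 \le \frac{9}{n\xi^2}\Tr(\mathsf{N}\tau) + \frac{3}{n\xi^2}\Tr(\tau) = \frac{9}{\xi^2}\energy[\tau] + \frac{3}{n\xi^2}\Tr(\tau)$. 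The hypothesis $n\xi^2 \ge 18$ gives $\frac{3}{n\xi^2} \le \frac16 \le \frac12$, which proves the first claim. For the consequence, if $\tau$ is supported in the range of $B$ then $\Tr(B\tau) = \Tr(\tau)$; moving $\frac12\Tr(\tau)$ to the left and doubling gives $\Tr(\tau) \le \frac{18}{\xi^2}\energy[\tau]$.

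To prove the operator inequality, I will write $n\lift{X} = Y_+ + Y_- + Y_0$ with $Y_\bullet \triangleq \sum_i (X_\bullet)_i$, using the decomposition $X = X_+ + X_- + X_0$ (which relies on $PXP = 0$). By the convexity bound $(a+b+c)^\dagger(a+b+c) \preceq 3(a^\dagger a + b^\dagger b + c^\dagger c)$, it suffices to bound $\norm{Y_\bullet\phi}^2$ for an arbitrary vector $\phi$, by expanding $\phi = \sum_S \phi_S$ in the excitation decomposition. Each operator shifts excitation sets in a controlled way: $(X_-)_i$ maps the $S$-component to the $S\setminus\{i\}$-component and kills it unless $i\in S$; $(X_+)_i$ maps the $S$-component to $S\cup\{i\}$ for $i\notin S$; and $(X_0)_i$ preserves $S$ and kills it unless $i\in S$. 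Each of these single-site operators has norm at most $1$.

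The main obstacle is handling the cross terms $i \ne j$, and I will do this by grouping according to the output set $T$ and using orthogonality across different $T$ together with Cauchy--Schwarz within each $T$.
\begin{itemize}
    \item For $Y_-$: $\norm{Y_-\phi}^2 = \sum_T \norm{\sum_{i\notin T}(X_-)_i\phi_{T\cup\{i\}}}^2 \le \sum_T (n-|T|)\sum_{i\notin T}\norm{\phi_{T\cup\{i\}}}^2 \le n\sum_S |S|\,\norm{\phi_S}^2 = n\bra{\phi}\mathsf{N}\ket{\phi}$.
    \item For $Y_+$: similarly $\norm{Y_+\phi}^2 \le \sum_T |T|\sum_{i\in T}\norm{\phi_{T\setminus\{i\}}}^2 = \sum_S (|S|+1)(n-|S|)\norm{\phi_S}^2 \le n\bra{\phi}(\mathsf{N}+\Id)\ket{\phi}$.
    \item For $Y_0$: the components stay orthogonal and $\norm{Y_0\phi_S} \le |S|\,\norm{\phi_S}$, so $\norm{Y_0\phi}^2 \le \sum_S |S|^2\norm{\phi_S}^2 \le n\bra{\phi}\mathsf{N}\ket{\phi}$, using $|S| \le n$.
\end{itemize}
Summing, $n^2\lift{X}^2 \preceq 3n(3\mathsf{N} + \Id)$, which is exactly the desired bound $\lift{X}^2 \preceq \frac{9}{n}\mathsf{N} + \frac{3}{n}\Id$.

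Finally, the operator inequality extends from pure vectors to the sub-normalized mixed state $\tau$ by linearity, since $\tau$ is a nonnegative combination of projectors onto vectors.
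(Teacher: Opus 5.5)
Your proposal is correct and follows essentially the same route as the paper: the operator Chebyshev bound $B \preceq \lift{X}^2/\xi^2$, then the operator inequality of Lemma~\ref{lem:NI}, proved by splitting $X$ into $X_+, X_-, X_0$, grouping terms by output excitation set $T$, and applying Cauchy--Schwarz within each $T$. Your bookkeeping is slightly tighter than the paper's, giving $\lift{X}^2 \preceq \frac{9}{n}\mathsf{N} + \frac{3}{n}\Id$ instead of $\frac{9}{n}(\mathsf{N}+\Id)$, which only improves the constant multiplying $\Tr(\tau)$.
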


\begin{proof}
    Let $B = \mathbf{1}_{|\lift{A} - \Tr(A\rho)\Id|\ge \xi}$ be the projector to the subspace given by this event.
    By definition, we have
    \begin{equation*}
        B
        \preceq \frac{1}{\xi^2} 
        \cdot (\lift{A} - \Tr(A\rho)\Id)^2
        = \frac{1}{\xi^2} 
        \cdot \lift{X}^2.
    \end{equation*}
    Applying Lemma~\ref{lem:NI} below, we have that
    \begin{equation*}
        \frac{1}{\xi^2} 
        \cdot \lift{X}^2
        \preceq \frac{1}{\xi^2} 
        \cdot \frac{9}{n}(\mathsf{N} + \Id).
    \end{equation*}
    Thus,
    \begin{equation*}
        \Tr(B \cdot \tau)
        \leq \frac{9}{n \xi^2} \Tr((\mathsf{N} + \Id) \tau)
        = \frac{9}{\xi^2} \energy[\tau]
        + \frac{9}{n \xi^2} \Tr(\tau)
        \leq \frac{9}{\xi^2} \energy[\tau]
        + \frac{1}{2} \Tr(\tau),
    \end{equation*}
    where in the last step we used our assumption that $n \xi^2 \geq 18$.
    The last part of the lemma follows from the fact that $\Tr(\tau) = \Tr(B\cdot \tau)$ and from rearranging.
\end{proof}

\begin{lemma}\label{lem:NI}
    $\lift{X}^2 \preceq \frac{9}{n}(\mathsf{N} + \Id)$.
\end{lemma}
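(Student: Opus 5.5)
The plan is to multiply through by $n^2$ and prove $\bigl(\sum_{j} X_j\bigr)^2 \preceq 9n(\mathsf{N}+\Id)$, where $X_j$ is $X$ acting on the $j$th copy (together with its purification register). Using the decomposition $X = X_+ + X_- + X_0$, I would write
\[
\sum_j X_j = Y + Z, \qquad Y \triangleq S_+ + S_-, \quad Z \triangleq S_0,
\]
where $S_\pm \triangleq \sum_j (X_\pm)_j$ and $S_0 \triangleq \sum_j (X_0)_j$. Since $X_+^\dagger = X_-$ and $X_0^\dagger = X_0$, we have $S_- = S_+^\dagger$, so $Y$ and $Z$ are both Hermitian. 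For any $t>0$, expanding $(\sqrt{t}\,Y - Z/\sqrt{t})^2 \succeq 0$ gives the operator inequality
\[
(Y+Z)^2 \preceq (1+t)Y^2 + (1+1/t)Z^2 .
\]
So it suffices to bound $Y^2$ and $Z^2$ by $\mathsf{N}$ separately, and then choose $t$.

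\emph{Bounding $Z^2$.} Fix a vector $\phi$. Then $S_0\phi = \sum_j Q_j X_j Q_j \phi$. By the triangle inequality, $\norm{X}_{\sf op}\le 1$, and Cauchy--Schwarz,
\[
\norm{S_0\phi} \le \sum_j \norm{Q_j\phi} \le \sqrt{n}\Bigl(\sum_j \norm{Q_j\phi}^2\Bigr)^{1/2} = \sqrt{n\,\bra{\phi}\mathsf{N}\ket{\phi}} .
\]
Hence $Z^2 \preceq n\mathsf{N}$. The same computation applied to $S_-\phi = \sum_j P_jX_jQ_j\phi$ gives $S_+S_- = S_-^\dagger S_- \preceq n\mathsf{N}$.

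\emph{Bounding $Y^2$.} Writing $Y = S_+ + S_+^\dagger$, the identity $(S_+ + S_+^\dagger)^2 + (S_+ - S_+^\dagger)^\dagger(S_+ - S_+^\dagger) = 2(S_+S_+^\dagger + S_+^\dagger S_+)$ shows that
\[
Y^2 \preceq 2\bigl(S_+S_- + S_-S_+\bigr).
\]
The first term is already bounded by $n\mathsf{N}$. The main obstacle is $S_-S_+ = S_+^\dagger S_+$, because $S_+$ creates excitations and cannot be controlled by $\norm{Q_j\phi}$ directly. I would expand
\[
\norm{S_+\phi}^2 = \sum_{j,k}\bra{\phi}P_jX_jQ_j\,Q_kX_kP_k\ket{\phi}.
\]
The diagonal terms $j=k$ contribute at most $\sum_j\norm{P_j\phi}^2 \le n\norm{\phi}^2$. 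For $j\neq k$, the two factors act on different copies and therefore commute. Each off-diagonal term thus equals $\langle P_kX_kQ_k\phi,\; P_jX_jQ_j\phi\rangle$, which is at most $\norm{Q_k\phi}\,\norm{Q_j\phi}$ in absolute value. Summing over all pairs gives at most $\bigl(\sum_j\norm{Q_j\phi}\bigr)^2 \le n\bra{\phi}\mathsf{N}\ket{\phi}$. Altogether $S_-S_+ \preceq n(\mathsf{N}+\Id)$, and so $Y^2 \preceq 2n(2\mathsf{N}+\Id)$.

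\emph{Combining.} Taking $t = 1/2$ in the weighted inequality yields
\[
\Bigl(\sum_j X_j\Bigr)^2 \preceq \tfrac32\cdot 2n(2\mathsf{N}+\Id) + 3n\mathsf{N} = n(9\mathsf{N}+3\Id) \preceq 9n(\mathsf{N}+\Id).
\]
Dividing by $n^2$ gives $\lift{X}^2 \preceq \frac{9}{n}(\mathsf{N}+\Id)$, as claimed.
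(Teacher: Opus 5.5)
Your proof is correct, and your route differs from the paper's even though both reach the same intermediate bound $\lift{X}^2 \preceq \frac{1}{n}(9\mathsf{N}+3\Id)$.

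The paper works vector by vector in the excitation decomposition. It writes $\Pi_T\lift{X}\ket{u}$ as a creation term $\ket{A_T}$, an annihilation term $\ket{B_T}$ and a number-preserving term $\ket{C_T}$. It then applies the crude split $\norm{a+b+c}^2\le 3(\norm{a}^2+\norm{b}^2+\norm{c}^2)$, uses orthogonality across sectors $T$, and bounds each sector by Cauchy--Schwarz and counting, with factors such as $D'(n-D'+1)\le nD'$.

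You never decompose the test vector. Instead you split $\sum_j X_j$ at the operator level into the Hermitian pieces $S_++S_+^\dagger$ and $S_0$, and combine them with a weighted operator AM--GM. The choice $t=1/2$ is in fact optimal for your constants. The creation operator $S_+$ is the only delicate piece, and you handle it by a normal-ordering argument. The off-diagonal terms of $S_+^\dagger S_+$ act on distinct copies and commute, so each is controlled by $\norm{Q_j\phi}\,\norm{Q_k\phi}$. The diagonal terms produce the $+\Id$, which is the analogue of $aa^\dagger=a^\dagger a+1$.

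What each approach buys: your argument is basis-free, uses only $PXP=0$, $\norm{X}_{\sf op}\le 1$ and locality of the $X_j$, and shows exactly where the $\Id$ term comes from. The paper's argument fits the excitation-sector picture that motivates the energy functional. It also keeps sector-dependent information, such as the factor $n-D'+1$, which your argument discards.
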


\begin{proof}
    Let $\ket{u}$ be a test vector, and write $\ket{u} = \sum_S \ket{u_S}$ for its excitation decomposition. Let $S\subseteq [n]$ be a subset of size $|S| = D$, and consider the vector $\ket{u_S}$.
    Consider the operator $X_i$ in the definition of $\lift{X}$.
    If $i \notin S$, then $P_i \cdot \ket{u_S} = \ket{u_S}$, and so $X_i \cdot \ket{u_S} = X_+ \cdot \ket{u_S}$. In this case, the action of $X_i$ is to excite the $i$-th slot of $\ket{u_S}$, making $X_i \cdot \ket{u_S}$ have excitation degree at most $D + 1$.
    On the other hand, if $i\in S$, then $Q_i \cdot \ket{u_S} = \ket{u_S}$, and so we have the decomposition $X_i\cdot \ket{u_S} = (X_-)_i\cdot \ket{u_S} + (X_0)_i\cdot \ket{u_S}$ into a vector with excitation degree at most $D -1$, and one with excitation degree at most $D$.
    
    More precisely, we find that the excitation decomposition of $\lift{X} \ket{u}$ satisfies, for every $T\subseteq[n]$,
    \begin{equation}
        \Pi_T\lift{X}\ket{u}=\frac{1}{n}\underbrace{\Bigl(\sum_{i\in T}(X_+)_i\cdot \ket{u_{T\setminus\{i\}}}\Bigr)}_{\ket{A_T}}+\frac{1}{n}\underbrace{\Bigl(\sum_{i\notin T}(X_-)_i\cdot \ket{u_{T\cup\{i\}}}\Bigr)}_{\ket{B_T}}+\frac{1}{n}\underbrace{\Bigl(\sum_{i\in T}(X_0)_i\cdot \ket{u_T}\Bigr)}_{\ket{C_T}}\,. \label{eq:ABC_quantum}
    \end{equation}
    Thus, if we write $\ket{A} = (1/n) \cdot \sum_T \ket{A_T}$, and similarly for $\ket{B}$ and $\ket{C}$, we have
    $\lift{X} \ket{u} = \ket{A} + \ket{B} + \ket{C}$.
    As a result,
    \begin{align}
        \bra{u} \lift{X}^2\ket{u}
        = \norm{\lift{X} \ket{u}}^2
        = \norm{\ket{A} + \ket{B} + \ket{C}}^2
        &\leq 3 \cdot (\norm{\ket{A}}^2 + \norm{\ket{B}}^2 + \norm{\ket{C}}^2)\nonumber\\
        &=\frac{3}{n^2} \cdot \Big(\sum_T \norm{\ket{A_T}}^2+\sum_T \norm{\ket{B_T}}^2+\sum_T \norm{\ket{C_T}}^2\Big),\label{eq:Deltaunorm_quantum}
    \end{align}
    where in the last step we use the fact that the $\ket{A_T}$'s are orthogonal to each other, as are the $\ket{B_T}$'s and $\ket{C_T}$'s.
    Let us bound these terms separately. First, for a fixed excitation degree $D'$, we have
    \begin{align*}
    \sum_{|T|=D'}\norm{\ket{A_T}}^2&\le\sum_{|T|=D'}D'\cdot \sum_{i\in T}\norm{\ket{u_{T\setminus\{i\}}}}^2=D'(n-D'+1)\cdot \sum_{|S|=D'-1}\norm{\ket{u_S}}^2, \\
    \sum_{|T|=D'}\norm{\ket{B_T}}^2&\le\sum_{|T|=D'}(n-D')\cdot \sum_{i\notin T}\norm{\ket{u_{T\cup\{i\}}}}^2=(n-D')(D'+1)\cdot \sum_{|S|=D'+1}\norm{\ket{u_S}}^2, \\
    \sum_{|T|=D'}\norm{\ket{C_T}}^2&\le\sum_{|T|=D'}D'\cdot \sum_{i\in T}\norm{\ket{u_T}}^2=(D')^2\cdot \sum_{|T|=D'}\norm{\ket{u_T}}^2.
    \end{align*}
    In the first step of all three lines, we have used Cauchy-Schwarz and the fact that $X_+, X_-, X_0$ have operator norm at most $1$.
    Summing these inequalities, we have
    \begin{align}
        \sum^n_{D'=1} \sum_{|T| = D'}\norm{\ket{A_T}}^2 &\le n\cdot \sum^n_{D'=1} D' \sum_{|S| = D'-1} \norm{\ket{u_S}}^2 \leq n\cdot \bra{u}(\mathsf{N} + \Id)\ket{u},\\
        \sum^{n-1}_{D'=0} \sum_{|T| = D'}\norm{\ket{B_T}}^2 &\le n\cdot \sum^{n-1}_{D'=0} (D'+1)\sum_{|S| = D'+1} \norm{\ket{u_S}}^2 \le n\cdot \bra{u}\mathsf{N}\ket{u},\\
        \sum^n_{D'=0} \sum_{|T| = D'}\norm{\ket{C_T}}^2 &\le n\cdot \sum^n_{D'=0} D'\sum_{|T| = D'} \norm{\ket{u_T}}^2 = n\cdot \bra{u}\mathsf{N}\ket{u} \,,
    \end{align}
    and substituting these into Eq.~\eqref{eq:Deltaunorm_quantum} yields the claimed bound.
\end{proof}

% \sitan{Lemma~\ref{lem:NI} is very slick but I wonder if we should still keep the fact that low excitation degree implies exponential concentration, just to give the reader intuition for why excitation degree corresponds to low damage?}

\section{Logarithmic-in-\texorpdfstring{$m$}{m} rate}
\label{sec:logm}

In this section we prove the upper bound in Theorem~\ref{thm:main_logm}, restated here for convenience:

\logm*

\subsection{Description of protocol}
\label{sec:protocol}

% \angelos{changed $\gap$'s to $\epsilon/4$ and adjusted a few constants}
% gap: gap parameter for threshold search
% \begin{equation}
%     \gap = \epsilon/4\,, \qquad 
% \end{equation}

\begin{definition}[Low and high tests]
    Given observable $0\preceq A\preceq \Id$ and threshold $\theta$, we say that
    \begin{align}
        ``(A, \theta) \ \mathrm{is \ low}" \ & \ \text{if} \ \ \Tr(A\rho) \le \theta - \frac{\epsilon}{4}, \\
        ``(A, \theta) \ \mathrm{is \ high}" \ & \ \text{if} \ \ \Tr(A\rho) > \theta\,.
    \end{align}
    The protocol does not know whether a given $(A,\theta)$ is low or high, and this terminology is only used for the analysis.
\end{definition}

\begin{definition}[Soft binary measurements]
    For a lifted observable $\lift{A}$ and threshold $\theta$, define the \emph{soft-thresholded observable} $f_{\theta,\lambda}(\lift{A})$, where
    \begin{equation}
        f_{\theta, \lambda}(x) = \frac{1}{1 + e^{\lambda(\theta - \epsilon/8 - x)}}\,.
    \end{equation}
    The choice of $\theta - \epsilon/8 - x$ ensures that the logistic function is centered halfway between the low and high thresholds of $\theta - \epsilon/4$ and $\theta$. We refer to performing the two-outcome measurement corresponding to $(f_{\theta,\lambda}(\lift{A}), \Id - f_{\theta,\lambda}(\lift{A}))$ as performing a \emph{soft binary measurement} corresponding to $(A,\theta)$, with the two outcomes referred to as \emph{flag} and \emph{pass} respectively. More precisely, we implement this measurement using Kraus operators
    \begin{equation*}
        \Big(\sqrt{f_{\theta,\lambda}(\lift{A})}, \sqrt{\Id - f_{\theta,\lambda}(\lift{A})}\Big).
    \end{equation*}
\end{definition}

\noindent We will analyze the procedure specified in Algorithm~\ref{alg:shadow-logm}.

\begin{algorithm}[H]
\DontPrintSemicolon
\caption{\textsc{OnlineShadowTomography1}}
\label{alg:shadow-logm}
\SetKw{KwAnd}{and}
\SetKw{KwOr}{or}
\SetKwInOut{Params}{Parameters}
\SetKwInOut{StateVar}{State}
\SetKwComment{Cmt}{$\triangleright$~}{}
\Params{number of rounds $m$;\ \ target accuracy $\epsilon$;\ \ mistake budget $K = \Theta(\log(d)/\epsilon^2)$;\ \ logistic sharpness $\lambda = \Theta(\epsilon n/\sqrt{K})$}
% \StateVar{Hamiltonian $H \gets 0$ with Gibbs state $\rho_H \triangleq e^{-H}/Z_H$;\ \ mistake counter $n_{\sf mstk} \gets 0$;\ \ flag $\mathsf{failed} \gets \textsc{false}$.}
$H\gets 0$;\quad $n_{\sf mstk} \gets 0$;\quad $\mathsf{failed}\gets \textsc{false}$\;
\For{$t = 1, 2, \dots, m$}{
  Receive observable $A^{(t)}$, chosen by the adversary as a function of the transcript $\calT_{t-1}$.\;
  \eIf{$\mathsf{failed}$}{
    Output $\widehat{\mu}^{(t)} \gets \tfrac12$\Cmt*[r]{algorithm has already failed}
  }{
    \Repeat{$L = \bot$ \KwOr $\mathsf{failed}$}{
      $\rho_H \gets e^{-H}/\Tr(e^{-H})$\;
      $\nu \gets \Tr\!\big(A^{(t)}\rho_H\big)$;\quad $L \gets \bot$\Cmt*[r]{$\bot$: no test has flagged this time}
      \Cmt{high test on $A^{(t)}$; skipped when $\nu$ is in the top band $(1-3\epsilon/4,\,1]$}
      \If{$\nu \le 1 - 3\epsilon/4$}{
        Perform the soft binary measurement corresponding to $\big(A^{(t)},\ \nu + 3\epsilon/4\big)$.\;
        \lIf{it flags}{$L \gets -A^{(t)}$}
      }
      \Cmt{high test on $\Id - A^{(t)}$; skipped when $\nu$ is in the bottom band $[0,\,3\epsilon/4)$, or when the first test already flagged}
      \If{$L = \bot$ \KwAnd $\nu \ge 3\epsilon/4$}{
        Perform the soft binary measurement corresponding to $\big(\Id - A^{(t)},\ 1 - \nu + 3\epsilon/4\big)$.\;
        \lIf{it flags}{$L \gets A^{(t)}$}
      }
      \Cmt{if some test has flagged, reweight and count the mistake}
      \If{$L \ne \bot$}{
        $H \gets H + \frac{\epsilon}{8}L$;\quad $n_{\sf mstk} \gets n_{\sf mstk} + 1$\;
        \lIf{$n_{\sf mstk} = K$}{$\mathsf{failed} \gets \textsc{true}$}
      }
    }
    \eIf{$\mathsf{failed}$}{Output $\widehat{\mu}^{(t)} \gets 1/2$\Cmt*[r]{mistake budget exhausted}}{Output $\widehat{\mu}^{(t)} \gets \nu$}
  }
}
\end{algorithm}

% \noindent Here $\bot$ denotes the absence of an passed test within a pass of the inner \textbf{repeat} loop, and ``soft binary measurement corresponding to $(A,\theta)$'' is the two-outcome measurement $\big(f_{\theta,\lambda}(\lift{A}),\, \Id - f_{\theta,\lambda}(\lift{A})\big)$ defined above. 
% We refer to the above as round $t$ for a single batch of $n$ copies.

% To amplify the success probability of this protocol, we will run the above procedure in parallel across $O(\log 1/\delta)$ batches of copies in parallel. That is, after the adversary chooses an observable $A^{(t)}$, the learner runs $O(\log 1/\delta)$ parallel instantiations of the above procedure and aggregates their outputs by taking the median.

\begin{definition}
    % For a single batch of $n$ copies, 
    We say that round $t$ is \emph{bad} if at some point therein, the protocol flags for a low test $(A,\theta)$ or passes for a high test $(A,\theta)$. The learner does not know if a given round is bad, and this terminology is only used for the analysis.
\end{definition}

\noindent The following is a consequence of standard regret guarantees for matrix multiplicative weights:

\begin{lemma}\label{lem:mmw}
    % For a single batch of $n$ copies, 
    If over the course of $m$ rounds of interaction there is never a bad round, then every estimate the protocol outputs satisfies $|\widehat{\mu}^{(t)} - \Tr(A^{(t)}\rho)| < \epsilon$.
\end{lemma}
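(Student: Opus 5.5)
The plan is to split the statement into two claims, both under the hypothesis that no round is bad. The first is that whenever the protocol outputs $\widehat{\mu}^{(t)} = \nu$ (that is, it has not failed), the estimate is $\epsilon$-accurate. The second is that the protocol never fails, i.e.\ $n_{\sf mstk}$ never reaches $K$. Together these give the lemma: by the second claim the output $\tfrac12$ is never produced, and by the first every output $\nu$ is accurate. The second claim also shows that the repeat loop terminates, since each iteration either exits or increments $n_{\sf mstk}$, which is bounded.

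\textbf{Step 1 (accuracy when passing).} The loop for round $t$ exits with $L = \bot$, so each of the two tests was either skipped or passed. Consider first the test $(A^{(t)}, \nu + 3\epsilon/4)$.
\begin{itemize}
\item If it passed, then since it is not bad, the pair $(A^{(t)}, \nu+3\epsilon/4)$ is not high. Hence $\Tr(A^{(t)}\rho) \le \nu + 3\epsilon/4 < \nu+\epsilon$.
\item If it was skipped, then $\nu > 1 - 3\epsilon/4$, and so $\Tr(A^{(t)}\rho) \le 1 < \nu + \epsilon$.
\end{itemize}
Symmetrically, for the test $(\Id - A^{(t)}, 1-\nu+3\epsilon/4)$:
\begin{itemize}
\item If it passed, then $1-\Tr(A^{(t)}\rho) \le 1-\nu+3\epsilon/4$, so $\Tr(A^{(t)}\rho) \ge \nu - 3\epsilon/4$.
\item If it was skipped (with $L=\bot$), then $\nu < 3\epsilon/4$, and so $\Tr(A^{(t)}\rho) \ge 0 > \nu - \epsilon$.
\end{itemize}
Combining the two bullet lists gives $|\nu - \Tr(A^{(t)}\rho)| < \epsilon$.

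\textbf{Step 2 (each flag is a genuine mistake).} Whenever a test flags, it is not low, since the round is not bad.
\begin{itemize}
\item A flag on $(A^{(t)}, \nu+3\epsilon/4)$ gives $\Tr(A^{(t)}\rho) > \nu + \epsilon/2$. With $L = -A^{(t)}$, this reads $\Tr(L\rho_H) - \Tr(L\rho) > \epsilon/2$.
\item A flag on $(\Id - A^{(t)}, 1-\nu+3\epsilon/4)$ gives $\Tr(A^{(t)}\rho) < \nu - \epsilon/2$. With $L = A^{(t)}$, this again gives $\Tr(L\rho_H) - \Tr(L\rho) > \epsilon/2$.
\end{itemize}
In both cases $\|L\|_{\sf op}\le 1$.

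\textbf{Step 3 (MMW potential argument).} This is the main step, though it is standard. Use the potential $\Phi(H) = S(\rho\,\|\,\rho_H)$, which starts at $S(\rho\,\|\,\Id/d) \le \log d$ and is always nonnegative. For an update $H' = H + \eta L$ with $\eta = \epsilon/8$, compute
\begin{equation}
\Phi(H') - \Phi(H) = \eta \Tr(L\rho) + \log \frac{\Tr e^{-H-\eta L}}{\Tr e^{-H}}.
\end{equation}
By Golden--Thompson, $\Tr e^{-H-\eta L} \le \Tr(e^{-H}e^{-\eta L})$. Then, using $e^{-\eta L} \preceq \Id - \eta L + \eta^2 L^2$ (valid because $\|\eta L\|\le 1$) and $\log(1+x)\le x$, we get
\begin{equation}
\log \frac{\Tr e^{-H-\eta L}}{\Tr e^{-H}} \le -\eta \Tr(L\rho_H) + \eta^2 .
\end{equation}
With Step 2, each flag therefore decreases $\Phi$ by at least $\eta\epsilon/2 - \eta^2 = 3\epsilon^2/64$. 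Hence the number of flags is at most $64\log d/(3\epsilon^2)$. Choosing the constant in $K = \Theta(\log(d)/\epsilon^2)$ larger than this ensures $n_{\sf mstk} < K$ throughout, so $\mathsf{failed}$ is never set.

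The only delicate points are handling the skipped-test boundary cases in Step 1 and tracking the sign conventions of $L$ in Step 2; the regret bound itself is routine.
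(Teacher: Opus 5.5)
Your proof is correct and follows the paper's argument step for step. You show accuracy on the final all-pass iteration, including the skipped-test boundary cases. You show the margin $\Tr(L(\rho_H-\rho))>\epsilon/2$ on every flag, since flagged tests are not low. You then apply a matrix multiplicative weights mistake bound. The only difference is that you derive that regret bound yourself, via the relative-entropy potential $S(\rho\,\|\,\rho_H)$ and Golden--Thompson, rather than citing Kale's Corollary~3, and you reach the same count $M \le 64\log d/(3\epsilon^2)$.
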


\begin{proof}
    Fix a round $t$ and a pass through the inner loop, and let $\rho_H$ and $\nu = \Tr(A^{(t)}\rho_H)$ denote the hypothesis state and the value computed at the start of that pass.
    
    We first verify that in an iteration of the inner loop in which all tests pass, the output $\widehat{\mu}^{(t)} = \nu$ is accurate. Since the round is not bad and all tests passed, no test performed in this pass is high. If the first test was performed, then $\Tr(A^{(t)}\rho) \le \nu + 3\epsilon/4$; if it was skipped, then $\nu > 1 - 3\epsilon/4$ and the same inequality holds trivially since $\Tr(A^{(t)}\rho)\le 1$. Likewise, if the second test was performed, then $\Tr(A^{(t)}\rho)\ge \nu - 3\epsilon/4$; if it was skipped, then $\nu < 3\epsilon/4$ and the same inequality holds trivially since $\Tr(A^{(t)}\rho)\ge 0$. Hence $|\widehat{\mu}^{(t)} - \Tr(A^{(t)}\rho)| \le 3\epsilon/4 < \epsilon$, as desired.

    Next, suppose instead that in an iteration of the inner loop, some test flags, so that the learner updates $H \gets H + \frac{\epsilon}{8} L$. Since the round is not bad, the flagged test is not low. In the case of either test, the flag identifies a direction along which the hypothesis is inaccurate by a non-negligible margin:
    \begin{equation}
        \Tr(L(\rho_H - \rho)) > \epsilon/2\,. \label{eq:mmw_margin}
    \end{equation}
    
    Finally, we argue that there cannot be so many flagged steps. The flagged steps constitute a run of matrix multiplicative weights: the learner plays the Gibbs states $\rho_H \propto \exp(-\frac{\epsilon}{8}\sum L)$ against adaptively chosen loss matrices $L = \pm A^{(t)}$, which satisfy $\norm{L}_{\sf op}\le 1$. Suppose $M$ flags occur over the course of the interaction, and let $\rho_j$ and $L_j$ denote the hypothesis and the loss matrix at the $j$-th flag. The standard regret bound for matrix multiplicative weights with step size $\epsilon/8$ (see e.g.~\cite[Corollary 3]{kale2007efficient}) gives
    \begin{equation}
        \sum^M_{j=1}\Tr(L_j(\rho_j - \rho)) \le \frac{\epsilon}{8}\sum^M_{j=1} \abs{\Tr(L_j\rho_j)} + \frac{8\log d}{\epsilon} \le \frac{\epsilon M}{8} + \frac{8\log d}{\epsilon}\,,
    \end{equation}
    whereas Eq.~\eqref{eq:mmw_margin} lower bounds the left-hand side by $\epsilon M/2$. Rearranging, $M \lesssim \log(d)/\epsilon^2$. So provided the constant in $K = \Theta(\log(d)/\epsilon^2)$ is large enough, the counter $n_{\sf mstk}$ never reaches $K$, and each inner loop terminates and outputs an accurate estimate.
\end{proof}

\subsection{Bad outcomes are unlikely}
\label{sec:badunlikely}

% The general strategy will be to argue that as long as the state is undamaged (in the sense that its high-excitation component is small), then nothing bad will happen, and furthermore the damage accumulates benignly. The former is a direct consequence of Lemma~\ref{lem:lowexcite}:

Consider the binary tree of possible complete internal histories of the protocol, with each node labeled by the sub-normalized conditional state at that point in the protocol, and where each leaf node corresponds to either:
\begin{itemize}
    \item The first bad outcome that occurs in that branch of the tree, or
    \item the final outcome in the protocol, where all outcomes along that branch were non-bad.
\end{itemize}
Because we clip the protocol after $n_{\sf mstk} = K$ mistakes corresponding to flags, and because in each of the $m$ rounds we perform at most two soft binary measurements prior to passing, the depth of the tree is at most $2(m+K)$.

Let $\mathcal{D}$ denote the edges of the tree corresponding to the first bad outcome occurring along that path. For each such edge $e\in\mathcal{D}$, let $H_e \in \{F_e, \Id - F_e\}$ denote the soft binary measurement operator corresponding to the bad outcome, and let $\sigma_e$ (resp. $\tau_e$) denote the sub-normalized post-measurement state right before (resp. after) the bad outcome. 

By definition, $\tau_e = \sqrt{H_e}\sigma_e\sqrt{H_e}$. We wish to bound 
\begin{equation}
    \sum_{e\in\mathcal{D}} \Tr(\tau_e)\,, \label{eq:sumbad}
\end{equation}
as this corresponds to the probability over the course of the entire protocol that a bad outcome happens.

The following bounds the probability of reaching a particular bad leaf in terms of the energy of the corresponding sub-normalized conditional state:

\begin{lemma}\label{lem:selfbound1}
    Suppose that at some point in the protocol {\sc OnlineShadowTomography1}, the adversary queries observable $A$, and after performing a soft binary measurement with threshold $\theta$, we get a bad outcome corresponding to measurement operator $H$. If $\tau$ denotes the sub-normalized state in the tree corresponding to this outcome, and $\sigma$ denotes the one prior to measurement, and if $n\epsilon^2 > 4608$, then
    \begin{equation}
        \Tr(\tau) \le 2e^{-\lambda\epsilon/16}\Tr(\sigma) + \frac{4608}{\epsilon^2}\cdot \energy[\tau].
    \end{equation}
\end{lemma}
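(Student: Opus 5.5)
Split the Hilbert space using the spectral projector of the lifted observable. Let $\mu = \Tr(A\rho)$ and take $\xi = \epsilon/16$. Write $B = \mathbf{1}_{|\lift{A} - \mu\Id| \ge \xi}$ and $B^c = \Id - B$. Since $H$ is a spectral function of $\lift{A}$, it commutes with $B$, and hence
\begin{equation}
    \Tr(\tau) = \Tr(H\sigma) = \Tr(HB^c\sigma) + \Tr(B\tau).
\end{equation}
The two terms are handled separately.

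\textbf{The $\Tr(B\tau)$ term.} Apply Lemma~\ref{lem:secondmoment} to $\tau$ with $\xi = \epsilon/16$. This requires $n\xi^2 = n\epsilon^2/256 \ge 18$, i.e.\ $n\epsilon^2 \ge 4608$, which is exactly the stated hypothesis. The lemma gives
\begin{equation}
    \Tr(B\tau) \le \frac{9\cdot 256}{\epsilon^2}\energy[\tau] + \tfrac12\Tr(\tau) = \frac{2304}{\epsilon^2}\energy[\tau] + \tfrac12\Tr(\tau).
\end{equation}

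\textbf{The $\Tr(HB^c\sigma)$ term.} Show that $H$ is exponentially small on the range of $B^c$, i.e.\ on eigenvalues $x$ of $\lift{A}$ with $|x-\mu| < \epsilon/16$. There are two cases.
\begin{itemize}
    \item A bad outcome is a \emph{flag on a low test}. Then $H = f_{\theta,\lambda}(\lift{A})$ and $\mu \le \theta - \epsilon/4$, so $x < \theta - \epsilon/4 + \epsilon/16$. The logistic is centered at $\theta - \epsilon/8$, so $\theta - \epsilon/8 - x > \epsilon/16$, giving $f(x) \le e^{-\lambda\epsilon/16}$.
    \item A bad outcome is a \emph{pass on a high test}. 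Then $H = \Id - f$ and $\mu > \theta$, so $x > \theta - \epsilon/16$. Hence $x - (\theta - \epsilon/8) > \epsilon/16$, giving $1 - f(x) \le e^{-\lambda\epsilon/16}$.
\end{itemize}
In both cases $HB^c \preceq e^{-\lambda\epsilon/16}B^c$, and therefore $\Tr(HB^c\sigma) \le e^{-\lambda\epsilon/16}\Tr(\sigma)$.

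\textbf{Conclusion.} Combining the two bounds,
\begin{equation}
    \Tr(\tau) \le e^{-\lambda\epsilon/16}\Tr(\sigma) + \frac{2304}{\epsilon^2}\energy[\tau] + \tfrac12\Tr(\tau).
\end{equation}
Moving $\tfrac12\Tr(\tau)$ to the left and multiplying by $2$ yields exactly the claimed constants $2$ and $4608$.

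The main step to verify is the margin arithmetic in the logistic bounds, together with the commutation of $H$ with $B$ that justifies the split $\Tr(H\sigma) = \Tr(HB^c\sigma) + \Tr(HB\sigma)$. The commutation holds because both operators are functions of $\lift{A}$. Note that $\Tr(HB\sigma) = \Tr(B\tau)$ uses $\sqrt{H}B\sqrt{H} = HB$.
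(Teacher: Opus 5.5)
Your proof is correct and follows the paper's argument. You split $\Tr(H\sigma)$ with a spectral projector of $\lift{A}$ that commutes with $H$, bound the far-from-$\mu$ part via Lemma~\ref{lem:secondmoment} with $\xi=\epsilon/16$ and the near part by the logistic tail $e^{-\lambda\epsilon/16}$, then rearrange. The only cosmetic difference is that you use the two-sided projector $\mathbf{1}_{|\lift{A}-\mu\Id|\ge \epsilon/16}$ instead of the paper's one-sided $\mathbf{1}_{\lift{X}>\epsilon/16}$, which lets you handle the low and high cases uniformly rather than appealing to ``verbatim'' symmetry.
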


\begin{proof}
    Write $\mu = \Tr(A\rho)$ and recall the notation $\lift{X} = \lift{A} - \mu \Id$.
    Let us first consider the case when $(A,\theta)$ is low, which we recall means that $\mu \leq \theta - \eps/4$.
    In this case, the measurement operator is $H = f_{\theta,\lambda}(\lift{A})$. Define $\Pi = \mathbf{1}_{\lift{X} > \epsilon/16}$ and let $\overline{\Pi}$ denote its orthogonal complement. Because $H$ and $\Pi$ commute,
    \begin{equation}
        \Tr(\tau) = \Tr(H\sigma) = \Tr(H\Pi\sigma\Pi) + \Tr(H\overline{\Pi}\sigma\overline{\Pi}) = \Tr(\Pi\tau\Pi) + \Tr(H\overline{\Pi}\sigma\overline{\Pi}). \label{eq:taubound}
    \end{equation}
    Over the subspace $\overline{\Pi}$ we have
    \begin{equation}
        \overline{A} \le \mu + \epsilon/16 \le \theta - 3\epsilon/16\,,
    \end{equation}
    where the second step used $\mu \leq \theta - \eps/4$ because $(A, \theta)$ is low.
    Recalling the definition of $f_{\theta,\lambda}$, we have that for any $x \leq \theta - 3\eps/16$,
    \begin{equation*}
        f_{\theta,\lambda}(x)
        = \frac{1}{1 + e^{\lambda(\theta - \epsilon/8 - x)}}
        \leq \frac{1}{1 + e^{\lambda(\theta - \epsilon/8 - (\theta - 3\eps/16))}}
        = \frac{1}{1 + e^{\lambda \eps/16}}
        \leq \frac{1}{e^{\lambda \eps/16}}.
    \end{equation*}
    We therefore get
    \begin{equation*}
    \overline{\Pi} H \overline{\Pi} = \overline{\Pi} f_{\theta,\lambda}(\lift{A})\overline{\Pi}
    % \preceq \|H\|_{\mathrm{op}}\cdot \overline{\Pi}
    % = \|f_{\theta,\lambda}(\lift{A})\|_{\mathrm{op}}\cdot \overline{\Pi}
    \preceq e^{-\lambda\epsilon/16}\overline{\Pi}\,.    
    \end{equation*} 
    This upper bounds the second term of Eq.~\eqref{eq:taubound} by~$e^{-\lambda \epsilon/16} \Tr(\sigma)$.
    As for the first term, set $\xi = \epsilon/16$, and note that $n \xi^2 = n \epsilon^2 /16^2 > 4608 / 16^2 = 18$.
    In addition, set $B = \mathbf{1}_{|\lift{X}| > \epsilon/16}$, and note that $\Pi \preceq B$.
    Then by Lemma~\ref{lem:secondmoment},
    \begin{equation*}
        \Tr(\Pi \tau \Pi) \leq \Tr(B \tau) \le \frac{9}{(\epsilon/16)^2}\energy[\tau] + \frac{1}{2} \Tr(\tau)
        = \frac{2304}{\epsilon^2} \energy[\tau] + \frac{1}{2} \Tr(\tau).
    \end{equation*}
    Substituting these into Eq.~\eqref{eq:taubound} gives us
    \begin{equation*}
        \Tr(\tau) \leq e^{-\lambda \epsilon/16} \Tr(\sigma) + \frac{2304}{\epsilon^2} \energy[\tau] + \frac{1}{2} \Tr(\tau),
    \end{equation*}
    and rearranging yields the claimed bound. The case of high $(A,\theta)$ follows verbatim.
\end{proof}

\noindent The following quantifies the amount by which $\energy[\tau]$ changes in expectation when going from $\tau$ to its expected post-measurement state.

\begin{corollary}\label{cor:damage}
    Let $F = f_{b,\lambda}(\lift{A})$. Let $\tau$ be an arbitrary mixed state, and let
    \begin{equation}
        \luders_F(\tau)\triangleq \sqrt{F}\tau\sqrt{F} + \sqrt{\Id - F}\tau\sqrt{\Id - F}
    \end{equation}
    denote the expected post-measurement state. Then for $0 < \lambda \le n$,
    \begin{equation}
        \energy[\luders_F(\tau)] - \energy[\tau] \le \frac{2\lambda^2}{n^2}\Tr(F\tau)\,. \label{eq:onestep}
        % \luders^\dagger_F(\mathsf{N}) - \mathsf{N}\preceq 2\eta^2 nF\,.
    \end{equation}
\end{corollary}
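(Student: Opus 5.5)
The plan is to work in the Heisenberg picture. By definition $\energy[\luders_F(\tau)] - \energy[\tau] = \frac1n \Tr\bigl(\tau\,(\luders_F^\dagger(\mathsf N) - \mathsf N)\bigr)$, so it suffices to prove the operator inequality
\[
\luders_F^\dagger(\mathsf N) - \mathsf N \preceq \frac{2\lambda^2}{n} F, \qquad \text{where} \qquad \luders_F^\dagger(\mathsf N) = \sqrt F\,\mathsf N\sqrt F + \sqrt{\Id-F}\,\mathsf N\sqrt{\Id-F}.
\]
Write $K_1 = \sqrt F$ and $K_2 = \sqrt{\Id - F}$, so that $K_1^2 + K_2^2 = \Id$. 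The identity $K\mathsf N K = \tfrac12(K^2\mathsf N + \mathsf N K^2) - \tfrac12[K,[K,\mathsf N]]$ then gives
\[
\luders_F^\dagger(\mathsf N) - \mathsf N = -\tfrac12\sum_{j=1,2}[K_j,[K_j,\mathsf N]].
\]
This reduces the corollary to controlling double commutators of $\mathsf N$ with spectral functions of $\lift A$.

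The key structural input is how $\mathsf N$ interacts with $\lift X = \lift A - \mu\Id$ (here $\mu$ is arbitrary; take $\mu = \Tr(A\rho)$ so that the decomposition $X = X_+ + X_- + X_0$ of Section~\ref{sec:lifted} is available). Since $[\mathsf N, X_i] = (X_+)_i - (X_-)_i$, the commutator $Y \triangleq [\mathsf N,\lift X] = \frac1n\sum_i \bigl((X_+)_i - (X_-)_i\bigr)$ is an average of local terms. The iterated commutator $[\lift X,[\mathsf N,\lift X]] = \frac1{n^2}\sum_i [X_i,[\mathsf N, X_i]]$ collapses to a single sum, because terms with distinct indices commute. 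Hence it has operator norm $O(1/n)$.

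Next, pass to the eigenbasis $\{\ket a\}$ of $\lift A$, with eigenvalues $x_a$, and set $k_j(a) = \sqrt{f}$ or $\sqrt{1-f}$ evaluated at $x_a$. The matrix entries of the right-hand side are
\[
-\tfrac12\sum_j (k_j(a)-k_j(b))^2\,\mathsf N_{ab}.
\]
For $a\neq b$ we have $\mathsf N_{ab} = Y_{ab}/(x_b - x_a)$ and $(x_a - x_b) Y_{ab} = [\lift X, Y]_{ab}$. So the expression becomes a Schur product of the squared divided differences
\[
D_j(a,b)^2 = \Bigl(\frac{k_j(a)-k_j(b)}{x_a-x_b}\Bigr)^2
\]
with the operator $\tfrac12[\lift X,[\mathsf N,\lift X]]$, up to sign. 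The diagonal terms vanish.

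The logistic function satisfies $(\sqrt f)' = \tfrac\lambda2\sqrt f(1-f)$ and $(\sqrt{1-f})' = -\tfrac\lambda2 f\sqrt{1-f}$. Thus $\sum_j D_j(a,b)^2 \lesssim \tfrac{\lambda^2}{4}\, f$ evaluated somewhere between $x_a$ and $x_b$. Combined with the $O(1/n)$ norm of the double commutator, the extra factor $\frac1n$ in the energy yields the claimed $O(\lambda^2/n^2)\Tr(F\tau)$.

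The main obstacle is this last step: converting an entrywise bound on the Schur multiplier into an operator inequality weighted by $F$ rather than by a crude sup-norm. I would first try to show that $\sum_j D_j(a,b)^2 \le c\lambda^2\sqrt{f(x_a)f(x_b)}$, which makes the multiplier dominated by the positive-definite kernel $\sqrt{f(x_a)}\sqrt{f(x_b)}$. Then, with $G = \tfrac12[\lift X,[\mathsf N,\lift X]]$,
\[
\Bigl|\sum_{a,b} \overline{u_a}\,\bigl(\textstyle\sum_j D_j(a,b)^2\bigr) G_{ab}\, u_b\Bigr| \lesssim \lambda^2 \norm{G}_{\sf op}\, \bra u F\ket u.
\]
The difficulty is that the off-diagonal entries of $G$ are not obviously restricted to nearby eigenvalue pairs, so the ratio $f(x_a)/f(x_b)$ need not be bounded. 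Since $f(x_a)/f(x_b) \le e^{\lambda|x_a - x_b|}$, such a ratio is controlled only for $|x_a - x_b| = O(1/n)$, and this is where the hypothesis $\lambda \le n$ should enter. If this direct approach fails, I would instead go back to the local form: expand $[K_j,\mathsf N] = \sum_i [K_j, Q_i]$ and use a Duhamel/Fourier representation of $k_j$. On each copy $i$, $\lift X$ moves only by $\pm\frac1n X_i$, so $f$ changes by at most a factor $e^{\lambda/n} \le e$. This gives the $F$-weighting directly, one copy at a time.
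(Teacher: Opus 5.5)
Your reduction to $\luders_F^\dagger(\mathsf N)-\mathsf N\preceq\frac{2\lambda^2}{n}F$ is correct. So are the identity $\luders_F^\dagger(\mathsf N)-\mathsf N=-\frac12\sum_j[K_j,[K_j,\mathsf N]]$ and its rewriting as the Schur product of $\sum_j D_j(a,b)^2$ with $G=\frac12[\lift X,[\mathsf N,\lift X]]$. But the argument stops exactly where the content lies, and the inference you propose there is not valid. A pointwise bound $\sum_j D_j(a,b)^2\le c\lambda^2\sqrt{f(x_a)f(x_b)}$ does not imply $\abs{\bra{u}(\luders_F^\dagger(\mathsf N)-\mathsf N)\ket{u}}\lesssim\lambda^2\norm{G}\,\bra{u}F\ket{u}$. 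Write $\sum_j D_j(a,b)^2=\sqrt{f(x_a)f(x_b)}\,T(a,b)$. What you need is that the \emph{Schur-multiplier norm} of $T$ is $O(\lambda^2)$, and entrywise boundedness does not give that: the triangular-truncation multiplier has $0/1$ entries but Schur norm growing with the dimension. A Schur-norm bound for divided differences of $\sqrt f$ and $\sqrt{1-f}$ is a genuine piece of analysis that the proposal does not supply. Your worry about far-apart eigenvalue pairs is actually unfounded. $G=\frac{1}{2n^2}\sum_i[X_i,[Q_i,X_i]]$ is a sum of single-copy terms, each commuting with $\sum_{j\neq i}A_j$, so it only couples eigenvalues of $\lift A$ within $1/n$ of each other. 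That observation does not repair the Schur-norm issue, though. Also, you only aim for $O(\lambda^2/n^2)$, while the statement asserts the constant $2$.

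Your fallback names the right phenomenon but not a mechanism. A Fourier/Duhamel representation of $k_j$ naturally produces $\Id$-weighted bounds (as in Lemma~\ref{lem:sqrtm_perstep}), not $F$-weighted ones. The missing idea is the block decomposition used in the paper. Fix a copy $i$. Then $Q_i$, $\sqrt F$ and $\sqrt{\Id-F}$ all commute with the spectral projections of $\sum_{j\neq i}A_j$, so $\luders_F^\dagger(Q_i)-Q_i$ is block diagonal. On each block, $\lift A=\frac1n(s\Id+A_i)$ has spectrum in an interval $\mathcal I$ of length at most $1/n$, on which $f_+/f_-\le e^{\lambda/n}\le e$; here $f_\pm$ are the max and min of $f$ on $\mathcal I$, and this is where $\lambda\le n$ enters. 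The paper then writes $\luders_F^\dagger$ via the unitary $U=\sqrt{\Id-F}+i\sqrt F=e^{i\alpha(\lift A)}$, rephases $U$ to be close to $\Id$ on the block, and applies Lemma~\ref{lem:linalg}. With your double-commutator identity the block step is even more direct:
\begin{itemize}
\item On a block, replace $K_j$ by $K_j'=K_j-k_j(c)\Id$, with $c$ the midpoint of $\mathcal I$; the double commutator is unchanged.
\item Your derivative formulas give $\norm{K_1'}\le\frac{\lambda}{4n}\sqrt{f_+}$ and $\norm{K_2'}\le\frac{\lambda}{4n}f_+$.
\item Since $-\frac12[K',[K',Q_i]]=K'Q_iK'-\frac12(K'^2Q_i+Q_iK'^2)\preceq2\norm{K'}^2\Id$, you get $\luders_F^\dagger(Q_i)-Q_i\preceq\frac{\lambda^2f_+}{4n^2}\Id\preceq\frac{e\lambda^2}{4n^2}F$ on every block, using $F\succeq f_-\Id$ there.
\end{itemize}
Summing over $i$ gives $\frac{e\lambda^2}{4n}F\preceq\frac{2\lambda^2}{n}F$, which is the corollary with its stated constant.
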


% By Corollary~\ref{cor:heisenberg_damage}, every soft binary measurement satisfies \sitan{TODO: update $\luders^\dagger_F$ notation}
% \begin{equation}
%     \luders^\dagger_F(\mathsf N) - \mathsf{N} \preceq \frac{2\lambda^2}{n} F\,. \label{eq:onestep}
% \end{equation}

\noindent We defer the proof of this to Section~\ref{sec:perstep} and proceed to formally present the argument outlined above.

\begin{lemma}\label{lem:badbound}
    % Over a single batch of $n$ copies, 
    The probability over the course of the entire protocol that a bad outcome happens is at most
    \begin{equation}
        \frac{9216\lambda^2 K}{n^2\epsilon^2} + 4(m+K) e^{-\lambda\epsilon/16}\,.
    \end{equation}
\end{lemma}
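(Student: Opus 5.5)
We sum the bound of Lemma~\ref{lem:selfbound1} over all bad edges $e\in\mathcal{D}$:
\begin{equation}
    \sum_{e\in\mathcal{D}}\Tr(\tau_e) \le 2e^{-\lambda\epsilon/16}\sum_{e\in\mathcal{D}}\Tr(\sigma_e) + \frac{4608}{\epsilon^2}\sum_{e\in\mathcal{D}}\energy[\tau_e]\,.
\end{equation}
Each term on the right is then controlled separately.

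\emph{First term.} The nodes $\sigma_e$ sit at depth at most $2(m+K)$ in the tree. At any fixed depth, the sub-normalized states of distinct nodes have traces summing to at most $1$, since they are disjoint branches of a single probability distribution. Hence $\sum_e \Tr(\sigma_e)\le 2(m+K)$, and the first term is at most $4(m+K)e^{-\lambda\epsilon/16}$.

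\emph{Second term.} We prove that $\sum_{e\in\mathcal{D}}\energy[\tau_e]\le 2\lambda^2K/n^2$. Multiplying by $4608/\epsilon^2$ then gives the claimed $9216\lambda^2K/(n^2\epsilon^2)$.

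The argument is a supermartingale-type potential computation over the tree. Energy is linear in $\tau$. Write $\mathcal{L}$ for the set of all leaves, bad and good. Because every internal node has exactly two children produced by a soft binary measurement, the children of a node $\tau$ sum to $\luders_F(\tau)$. Telescoping from the root, whose state is $\ketbra{\psi}{\psi}^{\otimes n}$ with zero energy, gives
\begin{equation}
    \sum_{\ell\in\mathcal{L}}\energy[\tau_\ell] = \sum_{\text{internal } v}\bigl(\energy[\luders_{F_v}(\tau_v)]-\energy[\tau_v]\bigr) \le \frac{2\lambda^2}{n^2}\sum_{\text{internal } v}\Tr(F_v\tau_v)\,,
\end{equation}
where the inequality is Corollary~\ref{cor:damage}, which requires $\lambda\le n$. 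This holds for our choice $\lambda=\Theta(\epsilon n/\sqrt K)$. Since energy is nonnegative, the left side dominates $\sum_{e\in\mathcal D}\energy[\tau_e]$.

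It remains to bound $\sum_v\Tr(F_v\tau_v)$. This is exactly the expected number of flag outcomes along a random root-to-leaf path. The protocol stops after $K$ flags, since $\mathsf{failed}$ is set and no further measurements occur. So every path contains at most $K$ flag edges, and the expectation is at most $K$.

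\emph{Main obstacle.} The delicate point is the bookkeeping of the tree: truncating at the first bad outcome must not break the telescoping identity. It does not, because leaves are simply nodes where we stop expanding, and the identity only uses internal nodes. We also need the flag count to be bounded pathwise, not just in expectation. This is why the hard cap $n_{\sf mstk}=K$ appears in the protocol, rather than relying on Lemma~\ref{lem:mmw}, which only applies on non-bad paths.
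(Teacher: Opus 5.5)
Your proof is correct and follows the paper's route. You telescope energy over the truncated tree via Corollary~\ref{cor:damage} and use the $K$-flag cap to get $\sum_{e\in\mathcal{D}}\energy[\tau_e]\le 2\lambda^2K/n^2$, then sum Lemma~\ref{lem:selfbound1} over the bad edges and use $\sum_e\Tr(\sigma_e)\le 2(m+K)$ from the depth bound; your explicit check that $\lambda\le n$ holds for the chosen parameters is a small hypothesis the paper leaves implicit.
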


\begin{proof}
    We first bound the total energy across all leaves of the tree via a telescoping calculation.
    Telescoping and noting that $\energy[\tau_{\rm root}] = 0$ for the root of the tree, we have
    \begin{equation}
        \sum_{\mathrm{leaves} \ \tau} \energy[\tau] = \sum_{\tau} \energy[\tau] - \energy[\tau_{\rm root}] = \sum_{\mathrm{inner} \ \sigma} \big(\energy[\luders_{F_\sigma}(\sigma)] - \energy[\sigma]\big) \le \frac{2\lambda^2}{n^2} \sum_{\mathrm{inner} \ \sigma}  \Tr(F_{\sigma}\sigma)\,,
    \end{equation}
    where $F_\sigma$ denotes the soft binary measurement operator used at the inner node $\sigma$ (corresponding to the flag outcome), and where in the last step we used Corollary~\ref{cor:damage}. But note that $\sum_{\mathrm{inner} \ \tau} \Tr(F_\sigma\sigma)$ is the expected number of flags over the course of the protocol, and this is at most the total number of flags which is $K$. We conclude that
    \begin{equation}
        \sum_{e\in\mathcal{D}} \energy[\tau_e] \le \sum_{\mathrm{leaves} \ \tau} \energy[\tau] \le \frac{2\lambda^2 K}{n^2}\,.\label{eq:highexcite_part}
   \end{equation}
    The left-hand side can be lower-bounded by Lemma~\ref{lem:selfbound1}, and upon rearranging we get
    \begin{equation}
        \sum_{e\in\mathcal{D}} (\Tr(\tau_e)- 2 e^{-\lambda\epsilon/16}\Tr(\sigma_e)) \le \frac{9216\lambda^2 K}{n^2\epsilon^2}\,.
    \end{equation}
    The claimed bound on $\sum_{e \in \mathcal{D}}\Tr(\tau_e)$ follows as $\sum_\sigma \Tr(\sigma) \le 2(m+K)$.
\end{proof}

\noindent We can now conclude the proof of the main result:

\begin{proof}[Proof of Theorem~\ref{thm:main_logm}]
    If $n = \Omega(\sqrt{K}\log(m + K)/\epsilon^2)$ with sufficiently large leading constant and  $\lambda = \Theta(\epsilon n/\sqrt{K})$ with sufficiently small leading constant, the quantity $e^{-\lambda\epsilon/16}$ in Lemma~\ref{lem:badbound} is at most $c/(m+K)$ for constant $c > 0$ that can be made arbitrarily small, and the term $\frac{\lambda^2 K}{n^2\epsilon^2}$ in Lemma~\ref{lem:badbound} can also be upper bounded by an arbitrarily small constant, so by Lemma~\ref{lem:badbound}, the probability over the course of the entire protocol that a bad outcome happens is at most an arbitrarily small constant. The proof is complete by Lemma~\ref{lem:mmw}.
\end{proof}

% {\color{gray}[skip this for now and use the simpler potential argument for 0.1 failure probability] For a single batch of $n$ copies, suppose at some point in the protocol the total number of mistakes is $n_{\sf mstk}$, and define the operator
% \begin{equation}
%     W(n_{\sf mstk}) = \mathsf{N} + \Gamma(K - n_{\sf mstk})\Id\,.
% \end{equation}
% for $\Gamma = 2\eta^2 n$.
% Then by Eq.~\eqref{eq:onestep},
% \begin{align}
%     \sqrt{\Id - F} W(n_{\sf mstk}) \sqrt{\Id - F} + \sqrt{F} W(n_{\sf mstk} + 1) \sqrt{F} &= \luders^\dagger_F(\mathsf{N}) + \Gamma(K - n_{\sf mstk})\Id - \Gamma F \\
%     &\preceq W(n_{\sf mstk})\,.
% \end{align}}

\subsection{Per-step damage control: Proof of Corollary~\ref{cor:damage}}
\label{sec:perstep}

We will prove Corollary~\ref{cor:damage} in the Heisenberg picture by controlling the extent to which the number operator $\mathsf{N}$ gets damaged by the adjoint $\luders^\dagger_F$ of the expected post-measurement channel for a soft binary measurement. Our goal is to show that
\begin{equation}
    \luders_F^\dagger(\mathsf{N}) - \mathsf{N} \preceq \frac{2\lambda^2}{n}F\,.
\end{equation}
To understand this, note that in the Schrödinger picture, this is equivalent to the statement that for any state $\sigma$,
\begin{equation*}
    \energy[\luders_F(\sigma)] \leq \energy[\sigma] + \frac{2\lambda^2}{n^2} \Tr(F \cdot \sigma).
\end{equation*}
Thus, the $F$ measurement can only noticeably increase the energy of the state $\sigma$ if there is a decent chance that it produces outcome $F$ when performed on $\sigma$.
This can be viewed as an analogue of the standard \emph{information/disturbance tradeoff} in quantum information.

Define the matrix $U = \sqrt{\Id - F} + i\sqrt{F}$.
This is a unitary, since 
\begin{equation*}
    U U^{\dagger}
    = (\sqrt{\Id - F} + i\sqrt{F}) (\sqrt{\Id - F} - i\sqrt{F})
    = (\Id- F) + F + i \sqrt{F} \sqrt{\Id - F} - i \sqrt{\Id - F}\sqrt{F}
    = \Id,
\end{equation*}
and similarly for $U^{\dagger} U$.
Then for any operator $X$, the post-measurement channel can be written in terms of $U$ as follows:
\begin{equation}
    \label{eq:logm-channel-time-evolution}
    \luders^\dagger_F(X) = \frac{1}{2}(U^\dagger X U + U X U^\dagger)\,.
\end{equation}
This is because
\begin{align*}
    &\frac{1}{2}(U^\dagger X U + U X U^\dagger)\\
    ={}& \frac{1}{2}(\sqrt{\Id - F} + i\sqrt{F})^\dagger\cdot  X \cdot (\sqrt{\Id - F} + i\sqrt{F}) + \frac{1}{2} (\sqrt{\Id - F} + i\sqrt{F})\cdot X \cdot (\sqrt{\Id - F} + i\sqrt{F})^\dagger\\
    ={}& \frac{1}{2}(\sqrt{\Id - F} - i\sqrt{F})\cdot  X \cdot (\sqrt{\Id - F} + i\sqrt{F}) + \frac{1}{2} (\sqrt{\Id - F} + i\sqrt{F})\cdot X \cdot (\sqrt{\Id - F} - i\sqrt{F})\\
    ={}& \sqrt{\Id - F} \cdot X \cdot \sqrt{\Id - F} + \sqrt{F} \cdot X \cdot \sqrt{F}= \luders^\dagger_F(X).
\end{align*}
Using the fact that for any $x \in [-1, 1]$,
\begin{equation*}
    e^{i \arcsin(\sqrt{x})}
    = \cos(\arcsin(\sqrt{x})) + i \sin(\arcsin(\sqrt{x}))
    = \sqrt{1 - x} + i \sqrt{x},
\end{equation*}
the unitary $U$ can also be expressed as $U = e^{i\alpha(\lift{A})}$, for $\alpha = \arcsin(\sqrt{f_{\lambda,\theta}})$.

\begin{lemma}\label{lem:linalg}
    Suppose there are constants $0 < f_- \le f_+ \le 1$ for which a Hermitian operator $F$ satisfies
    \begin{equation}
        f_- \Id \preceq F \preceq f_+\Id.
    \end{equation}
    Let $U$ be a unitary commuting with $F$, and suppose further that $(U-\Id)^{\dagger}(U-\Id) \preceq aF$ for some $a\ge 0$. Then for any Hermitian observable $0 \preceq X \preceq \Id$,
    \begin{equation}
        \frac{1}{2}(U^\dagger X U + UXU^\dagger) - X \preceq a\parens*{1 + \sqrt{\frac{f_+}{f_-}}} F\,.
    \end{equation}
\end{lemma}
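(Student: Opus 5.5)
The plan is to write $U = \Id + D$ with $D \triangleq U - \Id$, expand both conjugations, and bound the resulting linear and quadratic terms in $D$ separately. The quadratic terms should contribute $aF$ and the linear terms $a\sqrt{f_+/f_-}\,F$. Since $U$ is unitary it is normal, so $D$ is normal and $DD^\dagger = D^\dagger D$. Since $U$ commutes with the Hermitian $F$, $U^\dagger$ also commutes with $F$ (take adjoints of $UF = FU$), so $D$ and $D^\dagger$ both commute with $F$.

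\emph{Expansion and quadratic terms.} Expanding gives
\begin{equation*}
    \tfrac12(U^\dagger X U + UXU^\dagger) - X = \tfrac12\bigl((D+D^\dagger)X + X(D+D^\dagger)\bigr) + \tfrac12\bigl(D^\dagger X D + D X D^\dagger\bigr).
\end{equation*}
For the quadratic part, $X \preceq \Id$ gives $D^\dagger X D \preceq D^\dagger D \preceq aF$. Likewise $DXD^\dagger \preceq DD^\dagger = D^\dagger D \preceq aF$ by normality. So the quadratic part is $\preceq aF$.

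\emph{Linear terms.} The key identity comes from unitarity:
\begin{equation*}
    G \triangleq D^\dagger D = U^\dagger U - U - U^\dagger + \Id = -(D + D^\dagger).
\end{equation*}
Hence the linear part equals $-\tfrac12(GX + XG)$, where $0 \preceq G \preceq aF$ and $G$ commutes with $F$.

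\emph{Main obstacle.} The difficulty is bounding $-(GX+XG)$ by a multiple of $F$ rather than of $\Id$. The naive AM--GM bound produces an $X^2$ term, and $X^2$ is not controlled by $F$. I would rebalance with powers of $F$, which is invertible because $F \succeq f_-\Id > 0$. Set $A \triangleq F^{-1/2}G$, which is Hermitian because $G$ and $F$ commute, and $B \triangleq F^{1/2}X$. Then $A^\dagger B = GX$, so $GX + XG = A^\dagger B + B^\dagger A$.

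For any $s > 0$, expanding $(\sqrt{s}A + B/\sqrt{s})^\dagger(\sqrt{s}A + B/\sqrt{s}) \succeq 0$ gives
\begin{equation*}
    -(GX + XG) \preceq s\,G F^{-1} G + s^{-1} X F X.
\end{equation*}
For the first term, all the operators commute, and $0 \preceq G \preceq aF$ implies $G F^{-1} G \preceq aG \preceq a^2 F$. For the second term, $XFX \preceq f_+ X^2 \preceq f_+\Id \preceq (f_+/f_-)F$, using $F \preceq f_+\Id$, $0 \preceq X \preceq \Id$ and $F \succeq f_-\Id$. Choosing $s = a^{-1}\sqrt{f_+/f_-}$ gives $-(GX+XG) \preceq 2a\sqrt{f_+/f_-}\,F$. (If $a = 0$ then $G = 0$ and the linear part vanishes, so this case is trivial.) Halving, the linear part is $\preceq a\sqrt{f_+/f_-}\,F$.

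Adding the quadratic bound $aF$ to the linear bound $a\sqrt{f_+/f_-}\,F$ gives the claimed $a\bigl(1 + \sqrt{f_+/f_-}\bigr)F$.
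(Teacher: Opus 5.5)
Your proof is correct and follows the paper's argument. It uses the same decomposition via $D = U - \Id$ and $G = D^\dagger D = -(D + D^\dagger)$, the same bound $D^\dagger X D,\ DXD^\dagger \preceq G \preceq aF$ on the quadratic terms, and the same two inputs $G \preceq aF$ and $XFX \preceq (f_+/f_-)F$ for the cross term. The one cosmetic difference is that the paper bounds the cross term with the vector-wise Cauchy--Schwarz inequality $\abs{\bra{v}GX\ket{v}} \le \sqrt{\bra{v}G\ket{v}\,\bra{v}XGX\ket{v}}$, whereas you use the equivalent operator AM--GM inequality with $F^{\pm 1/2}$ weights and an optimized parameter $s$.
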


\begin{proof}
    For simplicity, we will write
    \begin{equation*}
        D = U - \Id,
        \qquad
        G = D^{\dagger}D = DD^{\dagger}
        = 2\Id - U - U^{\dagger}.
    \end{equation*}
    Note that $D + D^\dagger = -G$, and
    \begin{equation}
        \frac{1}{2}(U^\dagger X U + UXU^\dagger) - X = -\frac{1}{2}(GX + XG) + \frac{1}{2}(D^\dagger XD + DXD^\dagger)\,.
    \end{equation}
    We bound the second term using $D^\dagger XD \preceq D^\dagger D = G \preceq aF$, and $DXD^\dagger \preceq DD^\dagger = G \preceq aF$. For the first term, we use Cauchy--Schwarz:
    \begin{align}
        |\bra{v}GX\ket{v}|
        &\le \sqrt{\bra{v}G\ket{v}\cdot \bra{v}XGX\ket{v}} \\
        &\le a\sqrt{\bra{v}F\ket{v}\cdot \bra{v}XFX\ket{v}}
        \le a\sqrt{\frac{f_+}{f_-}} \cdot \bra{v} F\ket{v}
        \,.
    \end{align}
    The final inequality follows as $XFX \preceq f_+ X^2 \preceq f_+\Id \preceq \frac{f_+}{f_-}F$.
\end{proof}

\begin{lemma}\label{lem:heating}
    Let $F = f(\lift{A})$ for some non-decreasing, differentiable $f: [0,1]\to(0,1]$. Then
    \begin{equation}
        \luders^\dagger_F(\mathsf{N}) - \mathsf{N} \preceq \frac{R_f(1 + \sqrt{R_f})L^2_f}{n}\cdot F\,,
    \end{equation}
    where
    \begin{equation}
        L_f = \sup_{x\in[0,1]} \frac{\alpha'(x)}{\sqrt{f(x)}} \qquad R_f = \sup_{\substack{x,y\in[0,1] \\ 0 \le y - x \le 1/n}} \frac{f(y)}{f(x)}
    \end{equation}
    for $\alpha = \arcsin(\sqrt{f})$.
\end{lemma}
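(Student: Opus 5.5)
The plan is to reduce the statement to a per-site version of Lemma~\ref{lem:linalg} and sum over sites. Write $\mathsf{N}=\sum_i Q_i$ and use Eq.~\eqref{eq:logm-channel-time-evolution} with $U=e^{i\alpha(\lift{A})}$. By linearity $\luders^\dagger_F(\mathsf{N})-\mathsf{N}=\sum_i\bigl(\luders^\dagger_F(Q_i)-Q_i\bigr)$, so it suffices to show
\[
\luders^\dagger_F(Q_i)-Q_i\preceq \frac{R_f(1+\sqrt{R_f})L_f^2}{n^2}\,F \qquad\text{for each } i.
\]

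For fixed $i$, split $\lift{A}=\lift{A}^{(-i)}+\frac1n A_i$, where $\lift{A}^{(-i)}=\frac1n\sum_{j\ne i}A_j$ commutes with $A_i$. Set $U_i=e^{i\alpha(\lift{A}^{(-i)})}$. This acts trivially on site $i$, so it commutes with $Q_i$, and it commutes with $U$. Define $V_i=U_i^\dagger U=e^{i\delta_i}$ with $\delta_i=\alpha(\lift{A})-\alpha(\lift{A}^{(-i)})$. Then
\[
U^\dagger Q_iU=V_i^\dagger Q_iV_i,\qquad UQ_iU^\dagger=V_iQ_iV_i^\dagger,
\]
so $\luders^\dagger_F(Q_i)=\frac12(V_i^\dagger Q_iV_i+V_iQ_iV_i^\dagger)$. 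All of $V_i$, $F$, and $G_i\triangleq f(\lift{A}^{(-i)})$ are functions of the commuting pair $(\lift{A}^{(-i)},A_i)$. We therefore evaluate them on joint eigenvalues $x\in[0,1]$ for $\lift{A}^{(-i)}$ and $a\in[0,1]$ for $A_i$; on such an eigenspace $\lift{A}$ takes the value $x+a/n$.

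Next, bound $D_i=V_i-\Id$. We have $|e^{i\delta}-1|^2\le\delta^2$. By the mean value theorem, $\delta=\frac an\alpha'(\xi)$ for some $\xi\in[x,x+a/n]$. Hence
\[
\delta^2\le \frac{1}{n^2}L_f^2 f(\xi)\le \frac{L_f^2}{n^2}f(x+a/n),
\]
using that $f$ is nondecreasing. So $D_i^\dagger D_i\preceq \frac{L_f^2}{n^2}F$. Monotonicity together with the definition of $R_f$ also gives the sandwich
\[
G_i\preceq F\preceq R_fG_i .
\]

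Now rerun the proof of Lemma~\ref{lem:linalg} with $X=Q_i$, unitary $V_i$, and $a=L_f^2/n^2$, writing $G=D_i^\dagger D_i$ for its operator $G$. The terms $D_i^\dagger Q_iD_i$ and $D_iQ_iD_i^\dagger$ are each $\preceq aF$. The Cauchy--Schwarz term needs $\bra{v}Q_iFQ_i\ket{v}\lesssim\bra{v}F\ket{v}$. This is the main obstacle, since $Q_i$ does not commute with $F$ and there are no global bounds $f_\pm$ available. The fix is to compare through $G_i$, which does commute with $Q_i$:
\[
Q_iFQ_i\preceq R_f\,Q_iG_iQ_i=R_f\,G_i^{1/2}Q_iG_i^{1/2}\preceq R_fG_i\preceq R_fF .
\]
This yields $|\bra{v}GQ_i\ket{v}|\le a\sqrt{R_f}\bra{v}F\ket{v}$, hence
\[
\luders^\dagger_F(Q_i)-Q_i\preceq a(1+\sqrt{R_f})F .
\]
The bound claimed above has an extra factor $R_f$; since $R_f\ge1$ this is only a loosening, and it provides slack if one prefers to state the $D_i^\dagger D_i$ bound relative to $G_i$. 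Summing over the $n$ sites gives the lemma.
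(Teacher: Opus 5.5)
Your proposal is correct and follows essentially the paper's proof. Your operator-valued phase $e^{i\alpha(\lift{A}^{(-i)})}$ and comparison operator $G_i=f(\lift{A}^{(-i)})$ are coordinate-free counterparts of the paper's blockwise scalar phase $e^{-i\alpha_0}$ and blockwise spectral bounds $f_\pm$; the paper instead decomposes into blocks indexed by eigenvectors of $A$ on the other $n-1$ sites and applies Lemma~\ref{lem:linalg} directly in each block.

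One small difference is in your favor. Your pointwise mean-value bound gives $D_i^\dagger D_i\preceq \frac{L_f^2}{n^2}F$ directly. The paper bounds the phase spread uniformly by $\frac{L_f}{n}\sqrt{f_+}$ and then pays an extra factor $f_+/f_-\le R_f$, so your argument yields the slightly sharper constant $(1+\sqrt{R_f})L_f^2/n$.
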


\begin{proof}
Recall that $\mathsf{N} = \sum^n_{j=1} Q_j$.
    For a fixed mode $j$, we will prove the bound
    \begin{equation}\label{eq:goal-bound}
        \luders^\dagger_F(Q_j) - Q_j \preceq \frac{ R_f (1 + \sqrt{R_f})L^2_f}{n^2} \cdot F\,.
    \end{equation}
    This will imply the lemma by linearity. 
    For simplicity, we will show the $j = n$ case, which is without loss of generality.
    To begin, let us suppose that the observable $A$ has eigendecomposition
    \begin{equation*}
        A = \sum_{i=1}^d a_i \cdot \ketbra{v_i}{v_i},
    \end{equation*}
    so that it has eigenvalues $0 \leq a_i \leq 1$ with corresponding eigenvectors $\ket{v_i}$.
    Then we can write the lifted observable as
    \begin{align*}
        \lift{A}
        &= \sum_{i_1, \ldots, i_n = 1}^d \Big(\frac{a_{i_1} + \cdots + a_{i_n}}{n}\Big)\cdot \ketbra{v_{i_1}, \ldots, v_{i_n}}{v_{i_1}, \ldots, v_{i_n}}\\
        &= \sum_{i_{< n} \in [d]^{n-1}}\sum_{i_n=1}^d \Big(\frac{a_{i_{<n}} + a_{i_n}}{n}\Big)\cdot \ketbra{v_{i_{<n}}, v_{i_n}}{v_{i_{<n}}, v_{i_n}},
    \end{align*}
    where we write $i_{< n} = (i_1, \ldots, i_{n-1})$, $v_{i_{<n}} = (v_{i_1}, \ldots, v_{i_{n-1}})$, and $a_{i_{< n}} = a_{i_1} + \cdots +  a_{i_n}$.
    As a result, $\lift{A}$ can be viewed as a block-diagonal matrix, with blocks specified by strings $i_{<n}$:
    \begin{equation*}
        \lift{A}
         = \sum_{i_{< n} \in [d]^{n-1}}\ketbra{v_{i_{<n}}}{v_{i_{<n}}} \otimes \Big(\sum_{i_n=1}^d \Big(\frac{a_{i_{<n}} + a_{i_n}}{n}\Big)\cdot \ketbra{v_{i_n}}{v_{i_n}}\Big)
         \eqqcolon\sum_{i_{< n} \in [d]^{n-1}}\ketbra{v_{i_{<n}}}{v_{i_{<n}}} \otimes \lift{A}_{i_{<n}}.
    \end{equation*}
    As a result, we have that
    \begin{align*}
        F
        &= f(\lift{A})
        = \sum_{i_{< n} \in [d]^{n-1}}\ketbra{v_{i_{<n}}}{v_{i_{<n}}} \otimes f(\lift{A}_{i_{<n}}),\\
        U
        &= e^{i\alpha(\lift{A})}
        = \sum_{i_{< n} \in [d]^{n-1}}\ketbra{v_{i_{<n}}}{v_{i_{<n}}} \otimes e^{i \alpha(\lift{A}_{i_{<n}})}
        \eqqcolon \sum_{i_{< n} \in [d]^{n-1}}\ketbra{v_{i_{<n}}}{v_{i_{<n}}} \otimes U_{i_{<n}}
    \end{align*}
    are also block diagonal.
    In addition, $Q_n = I^{\otimes n-1} \otimes Q$ is block diagonal by definition.
    Note that by Equation~\eqref{eq:logm-channel-time-evolution}, we have
    \begin{equation}\label{eq:blog-diag}
        \luders^\dagger_F(Q_j) = \frac{1}{2}(UQ_nU^\dagger + U^\dagger Q_n U)
        = \sum_{i_{<n} \in [d]^{n-1}} \ketbra{v_{i_{<n}}}{v_{i_{<n}}} \otimes \frac{1}{2}(U_{i_{<n}} Q U_{i_{<n}} ^\dagger + U_{i_{<n}} ^\dagger Q U_{i_{<n}} ).
    \end{equation}

    Fix any block $i_{<n}$, and consider the matrix within that block, given by
    \begin{equation*}
        \lift{A}_{i_{< n}} = \sum_{i_n=1}^d \Big(\frac{a_{i_{<n}} + a_{i_n}}{n}\Big)\cdot \ketbra{v_{i_n}}{v_{i_n}}
        \eqqcolon \sum_{i_n=1}^d b_{i_n} \cdot \ketbra{v_{i_n}}{v_{i_n}}.
    \end{equation*}
    Let $i_-$ and $i_+ \in [d]$ be such that $b_{i_-}$ is the smallest eigenvalue of $\lift{A}_{i_{<n}}$ and $b_{i_+}$ is its largest eigenvalue.
    Since each $a_{i_n}$ is in $[0, 1]$, we must have $b_{i_+} - b_{i_-} \leq 1/n$.
    Thus, if we consider the interval $\mathcal{I} = [b_{i_-}, b_{i_+}]$, then $\mathcal{I}$ contains every eigenvalue of $\lift{A}_{i_{< n}}$, and it has width at most $|\mathcal{I}| \leq 1/n$.
    Next, define $f_- = f(b_{i_-})$ and $f_+ = f(b_{i_+})$.
    Because $f$ is a monotonically increasing function, $f_-$ and $f_+$ are the smallest and largest eigenvalues of $f(\lift{A}_{i_{< n}})$, respectively.
    (Indeed, $f_+$ must be the largest value that $f$ attains on the interval $\mathcal{I}$.)
    Thus, if we set $R_0 = f_+ / f_-$, then it holds that $R_0 \le R_f$.
    Over the interval $\mathcal{I}$, let
    \begin{equation}
        \alpha_1 \triangleq \sup_{x\in\mathcal{I}} \alpha(x),
        \qquad \alpha_0 \triangleq \inf_{y\in\mathcal{I}} \alpha(y),
        \qquad \Delta \triangleq \alpha_1 - \alpha_0.
    \end{equation} 
    Note that
    \begin{equation}\label{eq:delta-inequality}
        \Delta
        \le \frac{1}{n}\cdot \sup_{x\in\mathcal{I}}\alpha'(x)
        \le \frac{1}{n}\sqrt{f_+}\cdot \sup_{x\in\mathcal{I}}\frac{\alpha'(x)}{\sqrt{f(x)}}
        \le \frac{L_f}{n}\sqrt{f_+}\,,
    \end{equation}
    by the definition of $L_f$.

    Consider the unitary $ U_{i_{<n}}'
        \triangleq U_{i_{<n}} \cdot e^{-i \alpha_0}$. 
    As multiplying by a phase does not affect conjugation, we can rewrite the expression for $\luders^\dagger_F(Q_j)$ within this block
    from \eqref{eq:blog-diag}
    as
    \begin{equation*}
        \frac{1}{2}(U_{i_{<n}} Q U_{i_{<n}} ^\dagger + U_{i_{<n}} ^\dagger Q U_{i_{<n}} )
        =\frac{1}{2}(U_{i_{<n}}' Q U_{i_{<n}}'^\dagger + U_{i_{<n}}'^\dagger Q U_{i_{<n}}').
    \end{equation*}
    Let us now consider the action of $U'_{i_{n}}$ within this block. 
    Within this block, on an eigenvector of $\lift{A}$ with eigenvalue~$x$, it acts by applying the eigenvalue $e^{i(\alpha(x) - \alpha_0)}$. Thus $(U'_{i_{<n}} - \Id)^{\dagger} (U'_{i_{<n}} - \Id)$ acts on this eigenvector by applying the eigenvalue
    \begin{align*}
        |e^{i(\alpha(x) - \alpha_0)} - 1|^2 &\leq |\alpha(x) - \alpha_0|^2 \tag*{\text{(by $|e^{it} - 1| \le |t|$)}}\\
        &\leq \Delta_0^2\\
        &\leq \frac{L_f^2 f_+}{n^2} \tag*{\text{(by \eqref{eq:delta-inequality})}}\\
        &\leq \frac{L_f^2 f_+}{n^2 f_-} \cdot f(x) \tag*{\text{(by definition of $f_-$)}}\\
        &= \frac{L_f^2 R_0}{n^2} \cdot f(x) \tag*{\text{(by definition of $R_0$)}}\\
        &\leq \frac{L_f^2 R_f}{n^2} \cdot f(x). \tag*{\text{(by $R_0 \leq R_f$)}}
    \end{align*}
    Thus, within this block, we have that
    \begin{equation}
        (U'_{i_{<n}} - \Id)^{\dagger} (U'_{i_{<n}} - \Id)
        \preceq \frac{L^2_f R_f}{n^2} F\,.
    \end{equation}
    We can thus apply Lemma~\ref{lem:linalg} with $X = Q$ and $a = \frac{L^2_f R_f}{n^2}$ to get
    \begin{equation}
        \frac{1}{2}(U_{i_{<n}} Q U_{i_{<n}} ^\dagger + U_{i_{<n}} ^\dagger Q U_{i_{<n}} ) - Q
        \preceq \frac{L^2_f R_f (1 + \sqrt{R_0})}{n^2} \cdot F
        \preceq \frac{L^2_f R_f (1 + \sqrt{R_f})}{n^2} \cdot F.
    \end{equation}
    As this holds over all blocks, we  have that
    \begin{equation}
        \luders^\dagger_F(Q_n) - Q_n \preceq \frac{L^2_f R_f (1 + \sqrt{R_f})}{n^2} \cdot F\,.
    \end{equation}
    This relation holds over all blocks.
    This therefore holds for all $Q_j$ as well, and summing over all $n$ modes $Q_j$ yields the desired claim.
\end{proof}

\begin{proof}[Proof of Corollary~\ref{cor:damage}]
    Taking $f$ in the above lemmas to be $f_{b, \lambda}(x) = 1/(1 + e^{\lambda(b - x)})$, we will compute below that $L_f \le \lambda/2$ and $R_f \le e^{\lambda/n}$. This is the only part of the proof where we use the specific form of $f$; elsewhere, we only needed that it was nonnegative, nondecreasing, and had range $[0,1]$.
    
    First, for $f(x) = 1/(1 + e^{-\lambda(x - b)})$, note that $f'(x) = \lambda f(x)(1 - f(x))$, so for $\alpha = \arcsin(\sqrt{f})$,
    \begin{equation}
        \alpha'(x) = \frac{f'(x)}{2\sqrt{f(x)(1 - f(x))}} = \frac{\lambda}{2}\sqrt{f(x)(1 - f(x))}
    \end{equation}
    and thus 
    \begin{equation}
        \frac{\alpha'(x)}{\sqrt{f(x))}} = \frac{\lambda}{2}\sqrt{1 - f(x)}\le \frac{\lambda}{2}\,.
    \end{equation}
    Next, we compute $R_f$.
    First, note that
    \begin{equation*}
        (\log f)'(x)
        = \frac{f'(x)}{f(x)} = \lambda (1 - f(x)) \leq \lambda.
    \end{equation*}
    As a result, for all $x \leq y$, we can bound
    \begin{equation*}
        \log(f(y)/f(x))
        = \log f(y) - \log f(x)
        = \int_{x}^y (\log f)'(z) \cdot dz
        \leq \int_{x}^y \lambda \cdot  dz
        = \lambda (y-x),
    \end{equation*}
    and so $f(y)/f(x) \leq e^{\lambda(y-x)}$.
    Thus, over $0 \le y - x\le 1/n$ we conclude that $R_f \le e^{\lambda/n}$.
    Substituting these into Lemma~\ref{lem:heating}, we conclude that
    \begin{equation}
        \luders^\dagger_F(\mathsf{N}) - \mathsf{N} \preceq \frac{e^{\lambda/n}(1 + e^{\lambda/2n})\lambda^2/4}{n}\cdot F\,.
    \end{equation}
    Because $0 < \lambda \le n$ by assumption, $\luders^\dagger_F(\mathsf{N}) - \mathsf{N} \preceq \frac{2\lambda^2}{n} F$, and the claim follows by taking trace with $\tau$ on both sides and dividing by $n$.
\end{proof}

\section{Dimension-free rate}

In this section we prove the upper bound in Theorem~\ref{thm:main_sqrtm}, restated here for convenience:

\sqrtm*

\subsection{Description of protocol}

Unlike the protocol in Section~\ref{sec:logm}, the protocol here does not use a complicated matrix multiplicative weights outer wrapper. Instead, we simply process the adversary's observables in sequence, and for each $A^{(t)}$, we consider the following soft measurement.

\begin{definition}[Measurement with compact noise]
    \label{def:measurement-compact-noise}
    Given a width parameter $\omega > 0$, define the kernel
    \begin{equation}
        \phi_{\omega}(x) \triangleq \begin{cases}
            \omega^{-1/2}\cos(\pi x / 2\omega) & \text{if} \ |x| \le \omega \\
            0 & \text{otherwise}
        \end{cases}
    \end{equation}
    and let $M^\omega_{\lift{A}}$ denote the positive operator-valued measure with density 
    % \angelos{should the RHS have a $\mathrm{d}y?$}\sitan{yes!}
    \begin{equation}
        \mathrm{d}M^\omega_{\lift{A}}(y) \triangleq \phi_{\omega}(y\Id - \lift{A})^2\,\mathrm{d}y
    \end{equation}
    and associated Kraus operators $K^\omega_{\lift{A}}(y) \triangleq \phi_{\omega}(y\Id - \lift{A})$.
    We will ultimately take 
    \begin{equation}
        \omega = \epsilon/4\,.
    \end{equation}
    Given a pure state $\ket{\phi}$, measuring with $M^\omega_{\lift{A}}$ results in the convolution of $\phi_\omega$ with the distribution given by measuring in the eigenbasis of $\lift{A}$. The sub-normalized post-measurement state is $K^\omega_{\lift{A}}\ket{\phi}$.
\end{definition}
Note that this is a valid POVM because for any eigenvector $\ket{v}$ of $\lift{A}$ with eigenvalue $x$, we have
\begin{equation}
    \bra{v} \Big(\int \mathrm{d}M^\omega_{\lift{A}}(y)\Big) \ket{v}
    = \int_y \phi_{\omega}(y - x)^2 \,\mathrm{d}y
    = \int_{x-\omega}^{x+\omega} \omega^{-1} \cos(\pi (y-x)/2\omega)^2 \,\mathrm{d}y
    = \int_{-1}^{1} \cos(\pi u/2)^2 \,\mathrm{d}u = 1,
    \label{eq:compact-kernel-integral}
\end{equation}
where we have used the substitution $u = (y - x)/\omega$.

\noindent The pseudocode for the protocol is given in Algorithm~\ref{alg:shadow-sqrtm}.

\begin{algorithm}[H]
\DontPrintSemicolon
\caption{\textsc{OnlineShadowTomography2}}
\label{alg:shadow-sqrtm}
\SetKw{KwAnd}{and}
\SetKw{KwOr}{or}
\SetKwInOut{Params}{Parameters}
\SetKwInOut{StateVar}{State}
\SetKwComment{Cmt}{$\triangleright$~}{}
\Params{target accuracy $\epsilon$;\ \ number of rounds $m$.}
\For{$t = 1, 2, \dots, m$}{
  $\omega \gets \epsilon/4$.\;
  Receive observable $A^{(t)}$, chosen by the adversary as a function of the transcript $\calT_{t-1}$.\;
  
  Apply the measurement $M^\omega_{\lift{A}^{(t)}}$ to the current $n$-copy register, obtaining outcome $y$.\;
  Output $\widehat{\mu}^{(t)} \gets y$\;
}
\end{algorithm}

\noindent Our notion of badness is as follows:

\begin{definition}
    We say that a round $t$ is \emph{bad} if the corresponding measurement with compact noise returns an estimate of $\Tr(A^{(t)}\rho)$ which is not within $\epsilon$ of the true answer. The learner does not know if a given round is bad, and this terminology is only used for the analysis.
\end{definition}

\subsection{Bad outcomes are unlikely}

We use a similar charging argument as in the previous section. Consider the tree of possible complete internal histories of the protocol, with each node labeled by the sub-normalized conditional state at that point in the protocol. Note that every internal node of this tree now has a continuum of children, rather than just two. Each leaf node plays the same role as before, either
\begin{itemize}
    \item The first bad outcome that occurs in that branch of the tree, or
    \item the final outcome in the protocol, where all outcomes along that branch were non-bad,
\end{itemize}
except that our notion of bad is different from Section~\ref{sec:logm}.
Additionally, the depth of this tree is at most $m$ by design.

Let $\mathcal{D}$ denote the continuum of edges of the tree corresponding to the first bad outcome occurring along that path. For each such edge $e\in\mathcal{D}$, let $F_e$ denote the measurement operator with compact noise corresponding to the bad outcome, and let $\sigma_e$ (resp.\ $\tau_e$) denote the sub-normalized post-measurement state right before (resp.\ after) the bad outcome.

As before, we wish to bound the sum over all edges $e\in\mathcal{D}$ of $\Tr(\tau_e)$, as this is the probability over the course of the entire protocol that a bad outcome happens. However, because the children of each node in the tree are a continuum indexed by $\mathbb{R}$, we have to be careful about the notion of summation. We write this as
\begin{equation}
    \int_{\mathcal{D}} \Tr(\tau_e)\,\mathrm{d}e\,,
\end{equation}
where $\mathrm{d}e$ denotes the sum over transcript lengths $\ell$ of the Lebesgue measure over length-$\ell$ transcripts, which are naturally indexed by $\mathbb{R}^\ell$, restricted to transcripts ending with edge $e$. We will use similar shorthand later when referring to integration over other sets of nodes in the tree.
%\sitan{is this okay?}

We implement the same charging argument as before. The key technical lemma is the following analogue of Corollary~\ref{cor:damage}:

\begin{lemma}\label{lem:sqrtm_perstep}
    Given $y\in\mathbb{R}$, let $F_y\triangleq \phi_\omega(y\Id - \lift{A})$. Let $\sigma$ be an arbitrary mixed state, and let
    \begin{equation}
        \luders(\sigma) \triangleq \int F_y \sigma F_y\,\mathrm{d}y
    \end{equation}
    denote the expected post-measurement state. Then
    % \ryan{gpt claims the rhs should have a factor of $\Tr(\sigma)$.}
    \begin{equation}
        \energy[\luders(\sigma)] - \energy[\sigma] \le \frac{\pi^2}{2n^2\omega^2}\Tr(\sigma)\,.
    \end{equation}
\end{lemma}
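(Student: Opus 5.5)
The plan is to argue in the Heisenberg picture, as in Section~\ref{sec:perstep}. Let $\luders^\dagger(X) = \int F_y X F_y\,\mathrm{d}y$. It suffices to show, for each mode $j$,
\begin{equation*}
    \luders^\dagger(Q_j) - Q_j \preceq \frac{\pi^2}{2n^2\omega^2}\Id\,.
\end{equation*}
Summing over $j$, tracing against $\sigma$ and dividing by $n$ then gives the claim. The $\Tr(\sigma)$ on the right-hand side reflects that this bound is additive, not proportional to a flag probability.

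\textbf{Block reduction and double-commutator identity.} First I reduce to a single block exactly as in the proof of Lemma~\ref{lem:heating}. Take $j = n$ and diagonalize $A$. Then $\lift{A}$, every $F_y$, and $Q_n = \Id^{\otimes n-1}\otimes Q$ are simultaneously block diagonal, with blocks indexed by $i_{<n}$. Inside a block, $\lift{A}$ restricts to $B = \sum_k b_k\ketbra{v_k}{v_k}$, and all the $b_k$ lie in an interval $\mathcal{I} = [b_-, b_+]$ of width at most $1/n$. Next I use completeness, $\int F_y^2\,\mathrm{d}y = \Id$ (Eq.~\eqref{eq:compact-kernel-integral}), to rewrite the channel as
\begin{equation*}
    \luders^\dagger(X) - X = -\frac12\int [F_y,[F_y,X]]\,\mathrm{d}y\,.
\end{equation*}
Within a block, I replace $F_y$ by $D_y \triangleq F_y - \phi_\omega(y - b_-)\Id$. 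This subtracts a scalar, so it leaves every commutator unchanged. $D_y$ is diagonal in the $\ket{v_k}$ basis with entries $\phi_\omega(y-b_k) - \phi_\omega(y-b_-)$.

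\textbf{Norm bound.} For $0\preceq Q\preceq\Id$ I write $Q = (Q - \Id/2) + \Id/2$, so $\|[D,Q]\|_{\sf op} = \|[D, Q-\Id/2]\|_{\sf op} \le 2\|D\|_{\sf op}\cdot\tfrac12 = \|D\|_{\sf op}$. Hence $\|[D,[D,Q]]\|_{\sf op}\le 2\|D\|_{\sf op}^2$, and therefore
\begin{equation*}
    \luders^\dagger(Q) - Q \preceq \Bigl(\int \|D_y\|_{\sf op}^2\,\mathrm{d}y\Bigr)\Id\,.
\end{equation*}

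\textbf{Integral estimate.} This is the step needing the most care, because it is where the shape of $\phi_\omega$ enters. $\phi_\omega$ is continuous, vanishing at $\pm\omega$, and piecewise $C^1$, so it is absolutely continuous and the fundamental theorem of calculus applies. Combined with Cauchy--Schwarz, each diagonal entry satisfies
\begin{equation*}
    |\phi_\omega(y-b_k)-\phi_\omega(y-b_-)|^2 \le |\mathcal{I}|\int_{\mathcal{I}}|\phi_\omega'(y-s)|^2\,\mathrm{d}s\,.
\end{equation*}
The right-hand side does not depend on $k$, so it also bounds $\|D_y\|_{\sf op}^2$. Integrating over $y$ with Fubini gives
\begin{equation*}
    \int \|D_y\|_{\sf op}^2\,\mathrm{d}y \le |\mathcal{I}|^2\,\|\phi_\omega'\|_2^2\,.
\end{equation*}
A direct computation gives
\begin{equation*}
    \|\phi_\omega'\|_2^2 = \frac{\pi^2}{4\omega^2}\int_{-1}^1\sin^2(\pi u/2)\,\mathrm{d}u = \frac{\pi^2}{4\omega^2}\,.
\end{equation*}
With $|\mathcal{I}|\le 1/n$, each block therefore satisfies
\begin{equation*}
    \luders^\dagger(Q_n)-Q_n \preceq \frac{\pi^2}{4n^2\omega^2}\Id\,,
\end{equation*}
which is within the stated constant $\pi^2/(2n^2\omega^2)$.

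The main point to watch is that the uniform sup over $k$ is taken inside the $y$-integral. Cauchy--Schwarz over $\mathcal{I}$, rather than a crude $\sup|\phi'_\omega|$ bound, is what keeps the result an $L^2$ norm of $\phi'_\omega$. That matters because the compactly supported cosine kernel was chosen precisely to make this quantity finite and of order $1/\omega^2$.
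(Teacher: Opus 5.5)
Your proof is correct, and it takes a genuinely different route from the paper.

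\textbf{The paper's route (frequency space).} Claim~\ref{cl:sqrtm-channel-as-time-evolution} rewrites $\luders^\dagger$ as a random-unitary channel,
\[
\luders^\dagger(X) = \int q_\omega(t)\,\tfrac12\bigl(U_t^\dagger X U_t + U_t X U_t^\dagger\bigr)\,\mathrm{d}t, \qquad U_t = e^{it\lift{A}},\quad q_\omega = |\widehat{\phi}_\omega|^2 .
\]
The factorization $e^{it\lift{A}} = \prod_j e^{itA_j/n}$ then shows that only $e^{\pm itA_j/n}$ acts nontrivially on $Q_j$. The paper applies Lemma~\ref{lem:linalg} with $F=\Id$ and $a=t^2/n^2$. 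It finishes with $\int t^2 q_\omega(t)\,\mathrm{d}t = \|\phi_\omega'\|_2^2$ via Plancherel.

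\textbf{Your route (the outcome variable $y$).} You use three ingredients:
\begin{enumerate}
\item the double-commutator identity, which follows from $\int F_y^2\,\mathrm{d}y=\Id$;
\item the block decomposition from Lemma~\ref{lem:heating}, which confines the spectrum of $\lift{A}$ to a window of width $1/n$ inside each block;
\item the fundamental theorem of calculus combined with Cauchy--Schwarz.
\end{enumerate}
Together these land directly on the same quantity $\|\phi_\omega'\|_2^2$. This is precisely the position-space form of the paper's $\int t^2 q_\omega(t)\,\mathrm{d}t$.

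\textbf{What each buys.} In the paper's route, locality comes for free from the product structure of $e^{it\lift{A}}$, so there is no block bookkeeping. It also mirrors the $U=e^{i\alpha(\lift{A})}$ structure used for the logistic measurement, and it reuses Lemma~\ref{lem:linalg}.

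Your route avoids the Fourier representation, the weak-derivative step, and Plancherel entirely. It needs only absolute continuity of $\phi_\omega$. In addition, your centering $Q\mapsto Q-\Id/2$ yields the per-mode bound $\pi^2/(4n^2\omega^2)$. That is a factor of $2$ sharper than the paper's constant, because Lemma~\ref{lem:linalg} with $f_+=f_-$ gives $2a$ rather than $a$, since it does not center $X$.
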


\noindent We defer the proof of this to Section~\ref{sec:sqrtm_perstep_proof} and proceed to formally present the argument outlined above. In analogy with Lemma~\ref{lem:selfbound1}, we need the following simple consequence of the formalism from Section~\ref{sec:pickl} which bounds the probability of reaching a particular bad leaf in terms of the energy of the corresponding sub-normalized conditional state.

\begin{lemma}\label{lem:selfbound2}
    Suppose that at some point in the protocol {\sc OnlineShadowTomography2}, the adversary queries observable $A$, and after performing the measurement with compact noise $M^\omega_{\lift{A}}$ with $\omega = \epsilon/4$ we obtain outcome $y$ for which $|y - \Tr(A\rho)| > \epsilon$. If $\tau$ denotes the sub-normalized state in the tree right after this measurement outcome, and if $n\epsilon^2 > 32$, then
    \begin{equation}
        \Tr(\tau) \le \frac{32}{\epsilon^2}\energy[\tau]\,.
    \end{equation}
\end{lemma}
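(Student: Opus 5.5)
The plan is to reduce to the last part of Lemma~\ref{lem:secondmoment}. To do so, we show that the post-measurement state $\tau$ lies entirely in the range of a projector of the form $\mathbf{1}_{|\lift{A} - \Tr(A\rho)\Id| \ge \xi}$, with $\xi = \epsilon - \omega = 3\epsilon/4$.

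Write $\mu = \Tr(A\rho)$, and let $\sigma$ be the sub-normalized state just before the measurement. Then $\tau = F_y \sigma F_y$ with $F_y = \phi_\omega(y\Id - \lift{A})$. Since $\phi_\omega$ is supported on $[-\omega,\omega]$, the operator $F_y$ is a function of $\lift{A}$ that vanishes on every eigenspace of $\lift{A}$ with eigenvalue $x$ satisfying $|y - x| > \omega$. Hence the range of $F_y$ lies in the spectral subspace $\{x : |x - y| \le \omega\}$. On this subspace the triangle inequality gives
\[
|x - \mu| \ge |y - \mu| - |y - x| > \epsilon - \omega = 3\epsilon/4 .
\]
So with $B = \mathbf{1}_{|\lift{A} - \mu\Id| \ge 3\epsilon/4}$ we have $B F_y = F_y$. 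Both operators are functions of $\lift{A}$, so they commute, and therefore $B\tau B = \tau$ and $\Tr(B\tau) = \Tr(\tau)$.

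Next apply Lemma~\ref{lem:secondmoment} with $\xi = 3\epsilon/4$. Its hypothesis requires $n\xi^2 = 9n\epsilon^2/16 \ge 18$, i.e.\ $n\epsilon^2 \ge 32$, which is exactly the assumption $n\epsilon^2 > 32$. Since $\tau$ lies in the subspace of $B$, the lemma gives
\[
\Tr(\tau) \le \frac{18}{\xi^2}\,\energy[\tau] = \frac{18\cdot 16}{9\epsilon^2}\,\energy[\tau] = \frac{32}{\epsilon^2}\,\energy[\tau],
\]
which is the claimed bound.

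There is no real obstacle here. The only point needing care is to state cleanly that $F_y$ is supported on eigenvalues within $\omega$ of $y$, which is just the compact support of $\phi_\omega$. Boundary cases (eigenvalues with $|x - y| = \omega$ exactly) are harmless, because $\phi_\omega$ vanishes there, and in any case the inequality above is strict.
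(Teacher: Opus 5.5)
Your proposal is correct and follows the paper's own proof: the compact support of $\phi_\omega$ confines $\tau$ to the spectral subspace where $|\lift{A} - \Tr(A\rho)\Id| > 3\epsilon/4$. The bound then follows from the last part of Lemma~\ref{lem:secondmoment} with $\xi = 3\epsilon/4$. Your version spells out the commutation $BF_y = F_y$ and the check $n\xi^2 \ge 18 \iff n\epsilon^2 \ge 32$, which the paper leaves implicit.
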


\begin{proof}
    Because the outcome $y$ is inaccurate, $\tau$ is supported on the quantum event
    \begin{equation}
        |\lift{A} - \Tr(A\rho)\Id| \ge |y - \Tr(A\rho)| - \epsilon/4 > 3\epsilon/4\,.
    \end{equation}
    and is thus preserved under the projector $\mathbf{1}_{|x - \Tr(A\rho)| \ge 3\epsilon/4}$. So the claim follows by Lemma~\ref{lem:secondmoment}.
\end{proof}

\begin{proof}[Proof of Theorem~\ref{thm:main_sqrtm}]
    As in the proof of Lemma~\ref{lem:badbound}, we pass to an integral over \emph{all} leaves of the tree. Telescoping and noting again that $\energy[\tau_{\rm root}] = 0$ for the root of the tree, we have
    \begin{equation}
        \int \energy[\tau]\,\mathrm{d}\tau = \int \energy[\tau]\,\mathrm{d}\tau- \energy[\tau_{\rm root}] = \int_{\rm interior} (\energy[\luders_\sigma(\sigma)] - \energy[\sigma])\,\mathrm{d}\sigma \le \frac{\pi^2}{2n^2\omega^2}\int_{\rm interior} \Tr(\sigma)\,\mathrm{d}\sigma\,,
    \end{equation}
    where the $\luders_\sigma(\sigma)$ denotes the post-measurement state after performing the POVM $M^\omega_{\lift{A}}$ on interior node $\sigma$.
    Note that because the tree is of depth $m$ and the integral of $\Tr(\sigma)$ over sub-normalized conditional states at any particular layer is at most~$1$, we conclude by our choice of $\omega = \epsilon/4$ that
    \begin{equation}
        \int_{\mathcal{D}}\energy[\tau_e]\,\mathrm{d}e \le \int \energy[\tau]\,\mathrm{d}\tau \le \frac{8\pi^2 m}{n^2\epsilon^2}\,, \label{eq:charge2}
    \end{equation}
    By Lemma~\ref{lem:selfbound2}, the left-hand side is at least $\frac{\epsilon^2}{32}\int_{\mathcal{D}} \Tr(\tau_e)\,\mathrm{d}e$.
    Rearranging, we conclude that the probability that there is any bad outcome, i.e., $\int_{\mathcal{D}} \Tr(\tau_e)\,\mathrm{d}e$, is at most $\mathcal{O}(\frac{m}{n^2\epsilon^4})$, from which the theorem follows.
\end{proof}

\subsection{Per-step damage control: Proof of Lemma~\ref{lem:sqrtm_perstep}}
\label{sec:sqrtm_perstep_proof}

%\angelos{Let me know if someone had a different proof in mind. I tried to mimic Section 5.3's proof.}
We first introduce some useful notation and facts about the Fourier transform of the compact noise kernel $\phi_{\omega}$. In particular, we use $\widehat{\phi}_\omega$ for the unitary Fourier transform of $\phi_\omega$
\begin{equation*}
    \widehat{\phi}_\omega(t) = \frac{1}{\sqrt{2\pi}} \int e^{-itz} \cdot \phi_{\omega}(z)\,\mathrm{d}z.
\end{equation*}
\begin{claim}
    \label{cl:compact-kernel-fourier-facts}
   Let $\phi_{\omega}$ be the kernel from Definition~\ref{def:measurement-compact-noise} and $q_\omega(t) = |\widehat{\phi}_\omega(t)|^2$. Then
   \begin{equation*}
        \int q_\omega(t)\,\mathrm{d}t = 1,
        \qquad
        \int t^2 q_\omega(t)\,\mathrm{d}t = \frac{\pi^2}{4\omega^2}.
   \end{equation*}
\end{claim}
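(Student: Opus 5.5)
We will derive both identities from Plancherel's theorem, which allows us to compute each integral on the spatial side, where $\phi_\omega$ is explicit and compactly supported. Since the transform $\widehat{\phi}_\omega$ is the unitary Fourier transform, Plancherel gives
\begin{equation*}
    \int q_\omega(t)\,\mathrm{d}t = \int |\widehat{\phi}_\omega(t)|^2\,\mathrm{d}t = \int \phi_\omega(z)^2\,\mathrm{d}z .
\end{equation*}
The right-hand side equals $1$: this is exactly the computation in Eq.~\eqref{eq:compact-kernel-integral} with $x = 0$, namely the substitution $u = z/\omega$ followed by $\int_{-1}^1 \cos^2(\pi u/2)\,\mathrm{d}u = 1$.

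For the second moment, we use the fact that multiplication by $it$ on the Fourier side corresponds to differentiation. Precisely, $\widehat{\phi_\omega'}(t) = it\,\widehat{\phi}_\omega(t)$, so Plancherel gives
\begin{equation*}
    \int t^2 q_\omega(t)\,\mathrm{d}t = \int |\phi_\omega'(z)|^2\,\mathrm{d}z .
\end{equation*}
The one point needing care is that this identity requires $\phi_\omega$ to lie in $H^1(\mathbb{R})$, with its classical derivative agreeing with its weak derivative. This holds because $\phi_\omega$ is continuous (since $\cos(\pm\pi/2) = 0$, there is no jump at $z = \pm\omega$) and piecewise $C^1$. Therefore no delta masses appear in the distributional derivative.

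Alternatively, one can justify the identity by integrating by parts directly in $\widehat{\phi}_\omega(t) = (2\pi)^{-1/2}\int_{-\omega}^{\omega} e^{-itz}\phi_\omega(z)\,\mathrm{d}z$. The boundary terms vanish because $\phi_\omega(\pm\omega) = 0$.

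It remains to compute the derivative integral. On $|z| < \omega$ we have $\phi_\omega'(z) = -\omega^{-1/2}\frac{\pi}{2\omega}\sin(\pi z/2\omega)$, and $\phi_\omega' = 0$ outside. Hence
\begin{equation*}
    \int |\phi_\omega'|^2 = \frac{\pi^2}{4\omega^3}\int_{-\omega}^{\omega}\sin^2(\pi z/2\omega)\,\mathrm{d}z = \frac{\pi^2}{4\omega^2}\int_{-1}^{1}\sin^2(\pi u/2)\,\mathrm{d}u = \frac{\pi^2}{4\omega^2},
\end{equation*}
where the last step uses $\int_{-1}^1 \sin^2(\pi u/2)\,\mathrm{d}u = 1$.

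The only step requiring justification is this regularity check for the derivative–Fourier identity. It is exactly why the cosine kernel, which vanishes at the endpoints, was chosen instead of, say, a box kernel: a box kernel has jumps at its endpoints, and its $\widehat{\phi}$ would have an infinite second moment.
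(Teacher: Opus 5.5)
Your proposal is correct and follows the paper's proof essentially step for step: Plancherel for the normalization, the identity $\widehat{\phi_\omega'}(t) = it\,\widehat{\phi}_\omega(t)$ together with Plancherel for the second moment, and the same explicit $\sin^2$ integral. Your regularity check is a small addition the paper only flags in passing: continuity of $\phi_\omega$ at $\pm\omega$ means the weak derivative has no delta masses, or equivalently the boundary terms vanish when you integrate by parts.
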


\begin{proof}
    Plancherel along with Eq.~\eqref{eq:compact-kernel-integral} implies that
    \begin{equation*}
        \int q_\omega(t)\,\mathrm{d}t
        = \int |\widehat{\phi_{\omega}}(t)|^2\,\mathrm{d}t
        = \int |\phi_{\omega}(t)|^2\,\mathrm{d}t
        = \int_{-\omega}^{\omega} \omega^{-1} \cos(\pi t/2\omega)^2\,\mathrm{d}t
        = 1
    \end{equation*}
    Moreover,
    \begin{align}
        \int t^2 q_\omega(t)\,\mathrm{d}t
        &= \int t^2 |\widehat{\phi_\omega}(t)|^2\,\mathrm{d}t \\
        &= \int |it \cdot \widehat{\phi_\omega}(t)|^2\,\mathrm{d}t \\
        &= \int |\widehat{\phi_\omega'}(t)|^2\,\mathrm{d}t 
        = \|\phi_{\omega}'\|_2^2.
    \end{align}
    The second equality follows because differentiation becomes multiplication by $it$ under the Fourier transform. The derivative of $\phi_{\omega}$ is now \angelos{it says weak derivative, ChatGPT says that it is valid}
    \begin{equation*}
        \phi_{\omega}'(x) = -\frac{\pi}{2\omega^{3/2}}\sin(\pi x/2\omega) \cdot \mathbf{1}_{|x| < \omega}.
    \end{equation*}
    Then
    \begin{equation*}
        \|\phi_{\omega}'\|_2^2
        = \frac{\pi^2}{4\omega^{3}} \int_{-\omega}^{\omega} \sin(\pi x/2\omega)^2 \, \mathrm{d}x
        = \frac{\pi^2}{4\omega^{2}} \int_{-1}^{1} \sin(\pi u/2)^2 \, \mathrm{d}u
        = \frac{\pi^2}{4\omega^2},
    \end{equation*}
    where we have used the substitution $u = x/\omega$.
\end{proof}

\begin{proof}[Proof of Lemma~\ref{lem:sqrtm_perstep}]
    We will prove the statement in the Heisenberg picture by showing that
    \begin{equation*}
        \luders^{\dagger}(\mathsf{N}) - \mathsf{N}
        \preceq \frac{\pi^2}{2n\omega^2} \cdot \Id.
    \end{equation*}
    This implies the desired inequality by
    \begin{equation*}
        \energy[\luders(\sigma)] - \energy[\sigma]
        = \frac{1}{n}\Big( \Tr(\mathsf{N} \cdot \luders(\sigma)) - \Tr(\mathsf{N} \cdot \sigma)\Big)
        = \frac{1}{n} \cdot \Tr((\luders^{\dagger}(\mathsf{N}) - \mathsf{N}) \cdot \sigma)
        \leq \frac{\pi^2}{2n^2\omega^2} \Tr(\sigma).
    \end{equation*}
    We show in Claim~\ref{cl:sqrtm-channel-as-time-evolution} that for $U_t = e^{it\lift{A}}$, the adjoint of the channel that maps to the post-measurement state can be written as
    \begin{equation*}
        \luders^{\dagger}(X)
        = \int q_\omega(t) \cdot \frac{1}{2} \big(U_t^{\dagger} X U_t + U_t X U_t^{\dagger}\big) \,\mathrm{d}t.
    \end{equation*}
    Recall that $\mathsf{N} = \sum_{j=1}^n Q_j$. By linearity, it suffices to bound $\luders^{\dagger}(Q_j)$ for a single mode $j$.
    It holds that
    \begin{equation*}
        U_t^{\dagger} Q_j U_t
        = e^{-it\lift{A}} \cdot Q_j \cdot e^{it\lift{A}}
        = \Big(\prod_{j=1}^n e^{-itA_j/n}\Big) Q_j \Big(\prod_{j=1}^n e^{itA_j/n}\Big)
        = e^{-itA_j/n} \cdot Q_j \cdot e^{itA_j/n}.
    \end{equation*}
    Applying Lemma~\ref{lem:linalg} with $F$ as the identity implies that a unitary $U$ satisfies
    \begin{equation*}
        \frac{1}{2} \big(U^{\dagger} X U + U X U^{\dagger}\big) - X \preceq 2a\Id,
    \end{equation*}
    whenever $(U - \Id)^{\dagger}(U - \Id) \preceq a\Id$. In our case, since the spectrum of $A_j$ lies in $[0, 1]$ and using $|e^{ix} - 1| \leq |x|$, it holds that
    \begin{equation*}
        (e^{itA_j/n} - \Id)^{\dagger}(e^{itA_j/n} - \Id) = |e^{itA_j/n} - \Id|^2 \preceq \frac{t^2}{n^2} \cdot \Id,
    \end{equation*}
    which means
    \begin{equation}
        \frac{1}{2} \big(U_t^{\dagger} Q_j U_t + U_t Q_j U_t^{\dagger}\big) - Q_j
        = \frac{1}{2} \big(e^{-itA_j/n} Q_j e^{itA_j/n} + e^{itA_j/n} Q_j e^{-itA_j/n}\big) - Q_j
        \preceq \frac{2t^2}{n^2} \cdot \Id.
        \label{eq:channel-psd-upperbound-sqrtm}
    \end{equation}
    We now write using Equation~\eqref{eq:channel-psd-upperbound-sqrtm} and Claim~\ref{cl:compact-kernel-fourier-facts}
    \begin{align*}
        \luders^{\dagger}(Q_j) - Q_j
        &= \int q_\omega(t) \cdot \bigg(\frac{1}{2} \big(U_t^{\dagger} Q_j U_t + U_t Q_j U_t^{\dagger}\big) - Q_j\bigg)\,\mathrm{d}t \\
        &\preceq \Big(\frac{2}{n^2}\int t^2 q_\omega(t) \,\mathrm{d}t\Big) \cdot \Id
        = \frac{\pi^2}{2n^2\omega^2} \cdot \Id.
    \end{align*}
    The desired result follows by linearity:
    \begin{equation*}
        \luders^{\dagger}(\mathsf{N}) - \mathsf{N}
        = \sum_{j=1}^n \big(\luders^{\dagger}(Q_j) - Q_j\big)
        \preceq n \cdot \frac{\pi^2}{2n^2\omega^2} \cdot \Id = \frac{\pi^2}{2n\omega^2} \cdot \Id. \qedhere
    \end{equation*}
\end{proof}
We prove the final ingredient from the proof below.
\begin{claim}
    \label{cl:sqrtm-channel-as-time-evolution}
    Given $y\in\mathbb{R}$ and PSD observable $A$, let $F_y = \phi_{\omega}(y\Id - \lift{A})$ and $U_t = e^{it\lift{A}}$. Then
    \begin{equation}
        \int_y F_y X F_y\,\mathrm{d}y
        = \int q_\omega(t) \cdot \frac{1}{2} \big(U_t^{\dagger} X U_t + U_t X U_t^{\dagger}\big)\,\mathrm{d}t.
    \end{equation}
\end{claim}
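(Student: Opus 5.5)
Since $F_y$ and $U_t$ are both functions of $\lift{A}$, I will work in the eigenbasis of $\lift{A}$, write $\lift{A} = \sum_a a\,\Pi_a$, and compare matrix elements. For eigenvalues $a,b$, the $(a,b)$ block of the left-hand side is
\begin{equation*}
    \Pi_a X \Pi_b \cdot \int \phi_\omega(y-a)\,\phi_\omega(y-b)\,\mathrm{d}y.
\end{equation*}
Since $U_t^\dagger X U_t$ has $(a,b)$ block $e^{-it(a-b)}\Pi_a X\Pi_b$ and $U_t X U_t^\dagger$ has block $e^{it(a-b)}\Pi_a X \Pi_b$, the $(a,b)$ block of the right-hand side is
\begin{equation*}
    \Pi_a X\Pi_b \cdot \int q_\omega(t)\cos\bigl(t(a-b)\bigr)\,\mathrm{d}t.
\end{equation*}
So the claim reduces to the scalar identity
\begin{equation*}
    \int \phi_\omega(y)\,\phi_\omega(y-s)\,\mathrm{d}y = \int q_\omega(t)\cos(ts)\,\mathrm{d}t, \qquad s=a-b,
\end{equation*}
where I first substitute $y\mapsto y+a$ on the left.

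For this identity I will use Plancherel/Parseval, in the form $\int g\,\overline{h} = \int \widehat g\,\overline{\widehat h}$. Take $g = \phi_\omega$ and $h = \phi_\omega(\cdot - s)$. Then $\widehat h(t) = e^{-its}\widehat\phi_\omega(t)$, and since $\phi_\omega$ is real, the left side equals
\begin{equation*}
    \int |\widehat\phi_\omega(t)|^2 e^{its}\,\mathrm{d}t = \int q_\omega(t)\,e^{its}\,\mathrm{d}t.
\end{equation*}
Because $\phi_\omega$ is real and even, $\widehat\phi_\omega$ is real and even, so $q_\omega$ is even. Hence the imaginary part $\int q_\omega(t)\sin(ts)\,\mathrm{d}t$ vanishes, leaving $\int q_\omega(t)\cos(ts)\,\mathrm{d}t$, which is exactly the right-hand side. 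This symmetrization is why the statement averages $U_t^\dagger X U_t$ with $U_t X U_t^\dagger$.

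Finally I will reassemble the blocks: summing over $a,b$ gives the operator identity. The integrals converge absolutely: $\phi_\omega$ is compactly supported and bounded, and $q_\omega \in L^1$ with total mass $1$ by Claim~\ref{cl:compact-kernel-fourier-facts}. Interchanging the finite block sum with the integrals is therefore harmless.

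The main thing to get right is the sign and normalization conventions of the unitary Fourier transform in the shift identity, and checking that the Parseval inner product is conjugate-linear in the correct slot. Evenness of $q_\omega$ makes the final answer insensitive to the sign convention, so I expect no real obstacle.
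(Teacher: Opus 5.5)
Your proposal is correct and follows essentially the same route as the paper: diagonalize $\lift{A}$, reduce to the scalar identity $\int \phi_\omega(y-a)\phi_\omega(y-b)\,\mathrm{d}y = \int q_\omega(t)\,e^{it(a-b)}\,\mathrm{d}t$ via Plancherel and the shift rule, then use evenness of $q_\omega$ to symmetrize. The only difference is that you symmetrize at the scalar level by taking $\cos$, while the paper does it at the operator level by averaging $t$ and $-t$. One trivial slip: the substitution $y\mapsto y+a$ actually gives $\phi_\omega(y+s)$, not $\phi_\omega(y-s)$. This is harmless, because the autocorrelation of a real function is even in $s$ (and $\phi_\omega$ is itself even).
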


\begin{proof}
    We write the spectral decomposition of $\lift{A} = \sum_{\lambda} \lambda\cdot \Pi_{\lambda}$.
    Then
    \begin{equation*}
        F_y = \phi_{\omega}(y\Id - \lift{A})
        = \sum_{\lambda} \phi_{\omega}(y - \lambda) \cdot \Pi_{\lambda}.
    \end{equation*}
    Substituting
    \begin{align*}
        \int_y F_y X F_y\,\mathrm{d}y 
        &= \int_y \big(\sum_{\lambda} \phi_{\omega}(y - \lambda) \cdot \Pi_{\lambda}\big) X \big(\sum_{\mu} \phi_{\omega}(y - \mu) \cdot \Pi_{\mu}\big)\,\mathrm{d}y \\
        &= \sum_{\lambda, \mu} \Big(\int_y \phi_{\omega}(y - \lambda) \cdot \phi_{\omega}(y - \mu)\,\mathrm{d}y \Big) \cdot \Pi_{\lambda} X \Pi_{\mu} \\
        &= \sum_{\lambda, \mu} \Big(\int_t |\widehat{\phi}_\omega(t)|^2 \cdot e^{it(\lambda-\mu)}\,\mathrm{d}t \Big) \cdot \Pi_{\lambda} X \Pi_{\mu} \\
        &= \int_t q_\omega(t) \Big(\sum_{\lambda} e^{it\lambda} \cdot \Pi_{\lambda} \Big) X \Big(\sum_{\mu} e^{-it\mu} \cdot \Pi_{\mu} \Big) \,\mathrm{d}t \\
        &= \int_t q_\omega(t) \cdot U_t X U_t^{\dagger} \,\mathrm{d}t.
    \end{align*}
    The third inequality follows by defining $f_{\lambda}(y) = \phi_{\omega}(y - \lambda)$ and observing that
    \begin{equation*}
        \widehat{f_{\lambda}}(t) = e^{-it\lambda} \cdot \widehat{\phi_{\omega}}(t).
    \end{equation*}
    Since $\phi_{\omega}$ is a real function we use Plancherel to deduce
    \begin{align}
        \int_y \phi_{\omega}(y - \lambda) \cdot \phi_{\omega}(y - \mu)\,\mathrm{d}y
        &= \int_y \overline{f_{\lambda}(y)} \cdot f_{\mu}(y)\,\mathrm{d}y \\
        &= \int_t \overline{\widehat{f_{\lambda}}(t)} \cdot \widehat{f_{\mu}}(t)\,\mathrm{d}t \\
        &= \int_t e^{it(\lambda - \mu)} \cdot \overline{\widehat{\phi_{\omega}}(t)} \cdot \widehat{\phi_{\omega}}(t)\,\mathrm{d}t \\
        &= \int_t e^{it(\lambda - \mu)} \cdot |\widehat{\phi_{\omega}}(t)|^2\,\mathrm{d}t.
    \end{align}
    Finally, since $q_\omega$ is an even function, we can average over $t$ and $-t$ to get
    \begin{equation*}
        \luders^{\dagger}(X)
        = \int q_\omega(t) \cdot \frac{1}{2} \big(U_t^{\dagger} X U_t + U_t X U_t^{\dagger}\big) \,\mathrm{d}t. \qedhere
    \end{equation*}
\end{proof}

\section*{Acknowledgments}

The authors are grateful to the numerous individuals with whom they have shared inspiring discussions about Shadow Tomography, including Costin B{\u a}descu, Ainesh Bakshi, John Bostanci, Weiyuan Gong, Jerry Li, Allen Liu, Jack Spilecki, Ewin Tang, Qi Ye, and Zhihan Zhang.

\paragraph{Statement on AI use.} The main ideas underlying this work, including the central proof strategies, were generated by ChatGPT 5.6-Sol Pro. The human authors carefully studied, refined, and verified these ideas, and substantially rewrote and reorganized their presentation, adding the motivation and exposition necessary to communicate the arguments clearly. The final paper reflects the authors’ own understanding of the results, and the authors take full responsibility for every claim, proof, and citation contained in it.

\bibliographystyle{alpha}
\bibliography{refs}

\appendix

\section{Optimal Threshold Search by information theory methods}
In this appendix we show possibly the shortest route to obtaining $\log(m)\cdot \poly(\log d, 1/\eps)$ dependence in \emph{Offline} Shadow Tomography.
The method is information-theoretical, and (seemingly) unrelated to the methods used in the rest of this paper.

We consider the following quantum statistical decision theory problem.  It is an ``argmax'' version of Shadow Tomography, more or less equivalent to the ``Threshold Search'' problem from \cite{BO24}:
\begin{definition}
    The (worst-case) \emph{Best Observable Selection} problem is as follows.  Given are (classical descriptions of) observables $0 \preceq A_1, \dots, A_m \preceq \Id$ on $\CC^d$, as well as $n$ copies of an unknown mixed state $\rho \in \CC^{d \times d}$.  The task is to measure the copies and output $\bj \in [m]$ with small \emph{regret},
    \begin{equation}
        r_\rho(\bj) = \mu_* - \mu_{\bj},
    \end{equation}
    where $\mu_i \coloneqq \Tr(\rho A_i)$ and $\mu_* \coloneqq \max_i \{\mu_i\}$.
\end{definition}

\noindent We prove:

\begin{theorem} \label{thm:BOS}
    There is  a Best Observable Selection algorithm with expected regret at most $\sqrt{\frac{\ln m}{2n}}$. Thus the expected regret is at most $\epsilon$ provided $n \geq \frac{\ln m}{2\eps^2}$.
\end{theorem}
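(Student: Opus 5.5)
The plan is to go through a minimax theorem and then bound Bayesian regret with an information-theoretic argument in the style of Russo--Van Roy. The proof would use only the entropy of the $m$ candidate answers together with Pinsker's inequality, which is where $\sqrt{\ln m/(2n)}$ comes from. This is the route I would try first; the main step is flagged below and may need a different argument.

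\textbf{Step 1: reduce to a prior.} An algorithm is a POVM $\{E_j\}_{j\in[m]}$ on $(\CC^d)^{\otimes n}$. Its expected regret $\sum_j \Tr(E_j\rho^{\otimes n})(\mu_*(\rho)-\mu_j(\rho))$ is linear in the POVM. The set of POVMs is convex and compact. Mixing over states is linear too: averaging over a prior $\pi$ on states gives the Bayesian risk. By Sion's minimax theorem (after a standard finite-net or compactness argument on the state space), it therefore suffices to show the following. For every prior $\pi$ over states $\rho$ there is a POVM with $\EE_{\rho\sim\pi}\EE_{\bj}[\mu_*(\rho)-\mu_{\bj}(\rho)]\le\sqrt{\ln m/(2n)}$.

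\textbf{Step 2: set up the Bayesian problem.} Fix $\pi$ and let $J^*=J^*(\rho)$ be the argmax index, a classical random variable jointly distributed with the classical--quantum state $\sum \pi(\rho)\,|\rho\rangle\langle\rho|\otimes\rho^{\otimes n}$. Since $J^*$ is a function of $\rho$, the Holevo-type information between $J^*$ and the $n$-copy register is at most $H(J^*)\le\ln m$. By the chain rule for quantum mutual information across the $n$ copies, some copy $k$, given the preceding copies, carries at most $(\ln m)/n$ nats of information about $J^*$. Equivalently, one can average over $k$.

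\textbf{Step 3: the algorithm and the per-copy bound.} I would use a sequential ``posterior-sampling'' style measurement. Measure copies one at a time, with a measurement that estimates the relevant observables on the current posterior state. Output $\bj$ drawn from the induced distribution of $J^*$ given all outcomes. The regret then decomposes as $\EE[\mu_{J^*}-\mu_{\bj}]$. Writing both terms as expectations of $A_j$ under the posterior state conditioned on $J^*=j$ versus unconditioned, the regret becomes $\sum_j\Pr[J^*=j]\,\Tr\bigl(A_j(\bar\rho_{j}-\bar\rho)\bigr)$. Each term is bounded by the trace distance, and hence by Pinsker, $\sqrt{\tfrac12 D(\bar\rho_j\|\bar\rho)}$. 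Cauchy--Schwarz over $j$ and the time-average then give $\sqrt{I/(2n)}\le\sqrt{\ln m/(2n)}$.

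\textbf{Main obstacle.} The hard step is linking a \emph{single-copy} quantum relative entropy (the quantity Pinsker controls for observables $0\preceq A_j\preceq\Id$) to the $n$-copy information budget, while still allowing a measurement whose output is a legitimate function of the data. The quantum chain rule holds for the unmeasured registers. The issue is that measuring copy $k$ to update the posterior destroys information, so the classical telescoping of Russo--Van Roy does not directly go through.

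My first attempt would avoid sequential measurement altogether. I would work with the unmeasured conditional states $\bar\rho_j=\EE[\rho\mid J^*=j]$ and a per-copy decomposition of $I(J^*;\rho^{\otimes n})$ via superadditivity. I would then use a single pretty-good-measurement-like POVM whose outcome distribution matches the posterior of $J^*$ well enough. If that fails, the fallback is a quantum PAC-Bayes / Gibbs-variational bound. That would choose $\bj$ with weight $\propto\exp(\eta\, n\,\widehat\mu_j)$ via a Gibbs-type operator. It would bound the regret by $\frac{\ln m}{\eta n}+\frac{\eta}{8}$ using Golden--Thompson, and optimizing over $\eta$ gives the same $\sqrt{\ln m/(2n)}$.
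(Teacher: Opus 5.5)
Your Steps 1 and 2 match the paper's proof. That is, Sion's minimax theorem reduces to a prior, and the chain rule $\sum_{T=0}^{n-1} I(J:B_{T+1}\mid B_1\cdots B_T)\le \ln m$ yields a $t$ with conditional mutual information at most $\ln m/n$. Your Pinsker endgame is also similar in spirit to the paper's. But Step 3 is not a proof, and the obstacle you flag is exactly the missing idea.

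Concretely, suppose you measure the first $t$ copies to get classical data $Y$ and output $j\sim P(J^*\mid Y)$. Then your regret identity plus Pinsker and Jensen is valid and gives regret at most $\sqrt{\tfrac12 I(J^*:B_{t+1}\mid Y)}$, where $B_{t+1}$ is the \emph{unmeasured} next copy. The $\ln m$ budget, however, only telescopes over measured outcomes: $\sum_t I(J^*:Y_{t+1}\mid Y_{\le t})\le \ln m$. Data processing gives $I(J^*:Y_{t+1}\mid Y_{\le t})\le I(J^*:B_{t+1}\mid Y_{\le t})$, which is the wrong direction. Since a single measurement of a copy cannot in general capture the Holevo information of the noncommuting ensemble, this telescoping cannot be closed by choosing the per-copy measurements more cleverly. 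The information accounting has to be done on unmeasured registers.

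The paper resolves this with the Fawzi--Renner recoverability theorem, in the measured-relative-entropy form of Brand\~ao et al. From $\eta(t)=I(J:B_{t+1}\mid B_1\cdots B_t)\le \ln m/n$, it gets a channel $\calR: B_1\cdots B_t\to JB_1\cdots B_t$. Applying $\calR\otimes\mathrm{id}_{B_{t+1}}$ to $\Omega_{B_1\cdots B_{t+1}}$ produces a state within $\eta(t)$, in measured relative entropy, of the true cq state $\Omega_{JB_1\cdots B_{t+1}}$. This is the exact quantum analogue of posterior sampling: classically the recovery map is just $P_{J\mid B_{\le t}}$, and the conditional mutual information equals that KL divergence.

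The learner applies $\calR$ to its first $t$ copies and measures $J$. Copy $t+1$ is never touched by the algorithm; it serves only as a holdout in the analysis. ``Measure $J$, then measure $B_{t+1}$ with $A_j$'' is a single measurement, so the classical KL between $(j^*,a^*)$ and $(\tilde j,\tilde a)$ is at most $\ln m/n$. Data processing and Pinsker then give $\sqrt{\ln m/(2n)}$, with no per-$j$ divergences or Cauchy--Schwarz needed.

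Your pretty-good-measurement idea is in the right neighborhood. The plain Petz recovery map followed by measuring $J$ is exactly the pretty good measurement for the ensemble $\{\Pr[J^*=j],\ \Omega_{B_{\le t}\mid j}\}$. But without a recoverability theorem it gives no bound. It is also aimed only at matching the marginal of $J^*$, not its joint state with a holdout copy. The Gibbs/Golden--Thompson fallback is likewise unsubstantiated: for noncommuting $A_j$ there is no classical empirical vector $\widehat\mu$ on which a PAC-Bayes change of measure can run.
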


\noindent We may view the observables $A_1, \dots, A_m$ as defining a two-player game between an Adversary and a Learner. 
Simultaneously, the Adversary chooses a state~$\brho$ (possibly at random), and the Learner chooses a decision rule POVM $M = (M_1, \dots, M_m)$ on $(\CC^{d})^{\otimes n}$.  
The ``payoff'' is the expected regret $\EE_{\bj \sim M(\brho^{\otimes n})}[r(\bj)]$.  
By standard minimax theory for bilinear games on convex compact sets~\cite{Sio58}, the game has an optimal value that is equally achieved when the Adversary must play first (with a randomized strategy) and when the Learner must play first.
The Learner-first version corresponds to the Best Observale Selection scenario; thus for Theorem~\ref{thm:BOS} we may equivalently study the Adversary-first scenario.  
This Adversary-first scenario is a Bayesian one, where $\brho$ is drawn from a prior distribution $\cal{D}$, known to the Learner.  
Thus to prove Theorem~\ref{thm:BOS}, it is necessary and sufficient to prove the following:
\begin{theorem} \label{thm:BBOS}
    Consider a Bayesian version of Best Observable Selection, with input including a known distribution $\calD$ on states.
    Then there is a POVM $M$ (depending only on $\calD$ and $A_1, \dots A_m$) satisfying
    \begin{equation}
        \EE_{\substack{\brho \sim \calD \\ \bj \sim M(\brho^{\otimes n})}}[r_{\brho}(\bj)] \leq \sqrt{\tfrac{\ln m}{2n}}.
    \end{equation}
\end{theorem}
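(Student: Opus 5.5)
Fix the prior $\calD$ and write $\brho$ for the random state. For each copy count $n$, the plan is to build a Bayesian decision rule whose expected regret is controlled by an information quantity, and to bound that quantity using the chain rule across the $n$ copies. The model is the classical fact that Bayesian best-arm identification, or Thompson-style selection, has regret $\sqrt{\ln m/(2n)}$. The classical proof goes through the information ratio: the expected regret of posterior sampling is at most $\sqrt{\tfrac{1}{2}\,I(\bj^*;\text{data})/n}$ per step. The entropy of the optimal index satisfies $H(\bj^*)\le \ln m$, and the constant $1/2$ comes from Pinsker's inequality.

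Concretely, I would let $\bj^* = \arg\max_i \Tr(\brho A_i)$ and consider a fictitious sequential process over the $n$ copies. At step $k$, we hold a posterior over $\brho$ (a classical–quantum state after measuring earlier copies). We sample $\bj_k$ from the posterior of $\bj^*$ and measure copy $k$ with the two-outcome POVM $(A_{\bj_k},\Id-A_{\bj_k})$, recording the outcome. The final output $\bj$ is the posterior sample $\bj_k$ for a uniformly random $k\in[n]$. This yields a valid POVM on $(\CC^d)^{\otimes n}$ that depends only on $\calD$ and the $A_i$, because each copy is measured with single-copy adaptive measurements.

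The analysis of step $k$ follows Russo–Van Roy. Write $\mathcal{F}_{k-1}$ for the history. The per-step regret is
\[
\EE[\mu_{\bj^*}-\mu_{\bj_k}\mid\mathcal F_{k-1}] = \sum_i p(i)\bigl(\EE[\mu_i\mid \bj^*=i]-\EE[\mu_i]\bigr),
\]
where $p$ is the posterior of $\bj^*$. This uses the fact that $\bj_k$ and $\bj^*$ have the same marginal distribution and are conditionally independent. The outcome bit $Y_k$ of the measurement of $A_i$ on a copy of $\brho$ has conditional mean $\EE[\mu_i\mid\cdot]$. Cauchy–Schwarz together with Pinsker's inequality then bounds the squared regret by $\tfrac12\, I(\bj^*;(\bj_k,Y_k)\mid\mathcal F_{k-1})$. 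The key point making this work quantumly is that, conditioned on $\brho$, each copy's outcome is independent classical data. The pair $(\brho,\text{outcomes})$ is therefore a genuine classical Bayesian model with likelihoods $\Tr(\brho A_i)$, and the quantum nature enters only through the fact that different measurements are incompatible, which we sidestep by measuring a fresh copy each time.

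Summing over steps, the chain rule gives $\sum_k I_k = I(\bj^*;\text{all data}) \le H(\bj^*)\le \ln m$. Cauchy–Schwarz over $k$ then yields average regret $\frac1n\sum_k \sqrt{I_k/2}\le \sqrt{\ln m/(2n)}$.

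I expect the main obstacle to be the information-ratio inequality, namely obtaining exactly the constant $1/2$ rather than $m/2$. The classical Russo–Van Roy bound for bandits gives ratio $m/2$ because only the played arm's reward is observed. Here we also observe only the played arm, so the naive ratio is $m/2$, which loses a factor of $m$. I would therefore try to reduce the ratio to $1/2$ by a full-information argument. The idea is to exploit that $n$ copies can be measured by a single entangled POVM, namely the pretty-good or Holevo-optimal estimator of $\bj^*$. We could then bound the regret by $\sqrt{\tfrac12 I(\bj^*; \text{measurement})/n}$ per copy using a quantum Pinsker inequality for $\DKLmeas{\cdot}{\cdot}$ and the chain rule for quantum mutual information. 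If that route fails, the fallback would be an online-learning (Hedge) formulation with the $n$ copies playing the role of full-information rounds.
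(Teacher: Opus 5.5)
There is a genuine gap. Your main route (single-copy Thompson sampling, measuring copy $k$ with $(A_{\bj_k},\Id-A_{\bj_k})$) observes only the outcome of the played observable. The information ratio you can certify there is therefore $m/2$, not $1/2$. In particular, the claim in your third paragraph, that Cauchy--Schwarz and Pinsker bound the squared per-step regret by $\tfrac12 I(\bj^*;(\bj_k,Y_k)\mid\mathcal F_{k-1})$, is false with bandit feedback. A planted-arm prior with gap $\delta$ gives squared regret $\approx\delta^2$ but information only $\approx 2\delta^2/m$. So the route certifies only $\sqrt{m\ln m/(2n)}$, as you concede at the end. The obstruction is incompatibility: one copy cannot reveal the outcomes of all $m$ observables. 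Any measure-and-update scheme, including your Hedge fallback, is therefore stuck with bandit-type feedback. Your proposed repair (a pretty-good or Holevo-optimal estimator of $\bj^*$ on all $n$ copies, plus a quantum Pinsker inequality and the chain rule) names some correct ingredients. But it omits the step that connects them: nothing you invoke bounds the regret of such an estimator by a conditional mutual information.

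The missing idea, which the paper uses, is to implement posterior sampling quantumly via \emph{recoverability}, with an \emph{unmeasured} holdout copy supplying the full information. Form $\Omega=\EE[\ketbra{\bj^*}{\bj^*}\otimes\brho^{\otimes n}]$ and use the chain rule $\sum_{T=0}^{n-1} I(\mathsf J:\mathsf B_{T+1}\mid\mathsf B_1\cdots\mathsf B_T)_\Omega\le\ln m$ to select a $t$ whose term is at most $\ln m/n$. The chain rule controls only the average, so you must select $t$ rather than condition on all copies. The Fawzi--Renner theorem, in the measured-relative-entropy form of Brand\~ao et al., then gives a channel on $\mathsf B_1\cdots\mathsf B_t$ that outputs $\mathsf J$. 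The resulting state, jointly with the untouched register $\mathsf B_{t+1}$, is within $\ln m/n$ of $\Omega_{\mathsf J\mathsf B_1\cdots\mathsf B_{t+1}}$ in $\mathrm{D}^{\mathrm{meas}}_{\mathrm{KL}}$. The learner applies this channel and measures $\mathsf J$. Because $\mathsf B_{t+1}$ is never touched, the regret can be expressed through a hypothetical measurement of $A_{\bj}$ on it. Since the measured relative entropy bounds \emph{every} such measurement at once, data processing and Pinsker give $\sqrt{\ln m/(2n)}$ with no factor of $m$. For classical $\mathsf J$, explicit rotated-Petz versions of this recovery map, followed by measuring $\mathsf J$, are PGM-like measurements, so your instinct points in the right direction. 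But the guarantee comes from the recoverability theorem, not from Pinsker and the chain rule alone.
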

\begin{proof}
    Given $\brho \sim \calD$, let $\bj^* = \bj^*(\brho)$ denote the index (unknown to the Learner) achieving $\bmu_* = \max_j\{\Tr(\brho A_j)\}$.
    For the sake of analysis, consider the classical-quantum state $\ketbra{\bj^*}{\bj^*} \otimes \brho^{\otimes n}$, where we call the first (classical) register $\mathsf{J}$ and the remaining registers $\mathsf{B}_1, \dots, \mathsf{B}_n$.  
    We then introduce the averaged state
    \begin{equation}
        \Omega_{\mathsf{J}\mathsf{B}_1\cdots \mathsf{B}_n} = \EE_{\brho}[\ketbra{\bj^*}{\bj^*} \otimes \brho^{\otimes n}].
    \end{equation}

    We will consider Learner POVMs that happen to depend on only the first $T$ copies of~$\brho$, for $0 \leq T < n$.  
    With such learners in mind, we could define the (conditional mutual) information that seeing copy~$T+1$ would reveal about $\bj^*$:
    \begin{equation}
        \eta(T) \coloneqq I(\mathsf{J} : \mathsf{B}_{T+1} \mid \mathsf{B}_1 \cdots \mathsf{B}_{T})_{\Omega}.
    \end{equation}
    By the chain rule, 
    \begin{equation}
        \sum_{T=0}^{n-1} \eta(T) = I(\mathsf{J} : \mathsf{B}_1 \cdots \mathsf{B}_{n})_{\Omega} = H(\mathsf{J})_\Omega - H(\mathsf{J} \mid \mathsf{B}_1 \cdots \mathsf{B}_{n})_\Omega \leq \ln m,
    \end{equation}
    where the last step used that $\mathsf{J}$ is classical and supported on~$[m]$.
    Thus there is some particular $t$ with $\eta(t) \leq \frac{\ln m}{n}$; this will be the good number of copies for the Learner to use.

    By the Fawzi--Renner Theorem~\cite{FR15}, there exists a channel $\calR : \mathsf{B}_1 \cdots \mathsf{B}_{t}\to \mathsf{J}\mathsf{B}_1 \cdots \mathsf{B}_{t}$ that, when applied to $\Omega_{\mathsf{B}_1 \cdots \mathsf{B}_{t+1}}$ (without touching register $\mathsf{B}_{t+1}$), yields a state $\wt{\Omega}_{\mathsf{J}\mathsf{B}_1\cdots\mathsf{B}_{t+1}}$ satisfying 
    \begin{equation} \label{eqn:dkl}
        \DKLmeas{\Omega_{\mathsf{J}\mathsf{B}_1\cdots\mathsf{B}_{t+1}}}{\wt{\Omega}_{\mathsf{J}\mathsf{B}_1\cdots\mathsf{B}_{t+1}}} \leq \eta(t) \leq \tfrac{\ln m}{n}.
    \end{equation}
    (Here we actually use the sharpened form of the theorem with measured relative entropy, rather than $-2 \ln \mathrm{F}$, due to Brand\~ao et~al.~\cite{BHOS15}.)
    The Learner will apply this channel to its first~$t$ copies of~$\brho$; for analysis purposes, we can think of the $\mathsf{B}_{t+1}$ register as coming along as a ``holdout'' used to analyze regret.
    Then, the learner will measure the register $\mathsf{J}$, obtaining and outputting~$\wt{\bj}$.

    Now for analysis purposes, consider the process of measuring $\mathsf{J}$ to obtain some $\bj \in [m]$, then measuring $\mathsf{B}_{t+1}$ with effect $A_{\bj}$ to obtain some $\ba \in \{0,1\}$.
    On one hand, the ``reward'' $\mu_{\wt{\bj}}$ obtained by the Learner is $\EE[\wt{\ba}]$, for $(\wt{\bj}, \wt{\ba})$ obtained from the measurement process applied to $\wt{\Omega}_{\mathsf{J}\mathsf{B}_1\cdots\mathsf{B}_{t+1}}$.  On the other hand, the best reward $\mu_*$ is $\EE[\ba^*]$, for $(\bj^*, \ba^*)$ obtained from the measurement process applied to $\Omega_{\mathsf{J}\mathsf{B}_1\cdots\mathsf{B}_{t+1}}$.
    Equation~\eqref{eqn:dkl} precisely gives us the classical KL bound $\dKL{(\bj^*,\ba^*)}{(\wt{\bj},\wt{\ba})} \leq \frac{\ln m}{n}$.  Thus 
    \begin{equation}
        \EE[r_{\brho}(\wt{\bj})] = \EE[\ba^*] - \EE[\wt{\ba}] \leq \dtv{\ba^*}{\wt{\ba}} \leq \sqrt{\tfrac12\dKL{\ba^*}{\wt{\ba}}} %\leq \sqrt{\tfrac12 \dKL{(\bj^*,\ba^*)}{(\wt{\bj},\wt{\ba})}} 
        \leq \sqrt{\tfrac{\ln m}{2n}},
    \end{equation}
    where we used Pinsker's inequality and data processing.
\end{proof}
% \begin{remark}
%     The POVM described in the proof never uses more than $n-1$ copies of~$\brho$ (for indexing clarity).
%     Thus the ``$2n$'' appearing in \Cref{thm:BOS,thm:BBOS} can be replaced with ``$2(n+1)$''.
% \end{remark}
\begin{corollary}
    Shadow Tomography can be accomplished with $\log(m) \cdot \wt{O}(\frac{\log d}{\eps^4})$ copies.
\end{corollary}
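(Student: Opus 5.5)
The corollary should follow from Theorem~\ref{thm:BOS} via the standard reduction from Shadow Tomography to Threshold Search / Best Observable Selection, in the style of the online-learning template of Section~\ref{sec:logm}. Matrix multiplicative weights (as in Lemma~\ref{lem:mmw}) will generate a sequence of hypothesis states. In each phase we must either certify that the current hypothesis $\rho_H$ is $\epsilon$-accurate on all $m$ observables, or find an observable on which it errs by $\Omega(\epsilon)$. The number of phases is $K = O(\log(d)/\epsilon^2)$. Since the setting is offline, we may simply use a fresh batch of copies for each phase, so no gentleness argument is needed.

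\emph{Step 1: one phase as Best Observable Selection.} Given $\rho_H$, form the $2m$ candidates $B_{2i-1} = A_i$ and $B_{2i} = \Id - A_i$, each with offset $c_j = \Tr(B_j\rho_H)$. The excess $\Tr(B_j\rho) - c_j$ equals $\pm(\Tr(A_i\rho)-\Tr(A_i\rho_H))$, so its maximum is the largest error of $\rho_H$. To absorb the offsets, which differ across candidates, append a classical ancilla: use observables $B_j' = \tfrac12\big(B_j \otimes \ketbra{0}{0} + (1-c_j)\Id\otimes\ketbra{1}{1}\big)$ on $\rho\otimes\ketbra{+}{+}$. Then $\Tr(B_j'\cdot)= \tfrac12(\Tr(B_j\rho) - c_j + 1)$, so maximizing $\Tr(B_j'\cdot)$ maximizes the excess, with rescaling by a constant. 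Apply Theorem~\ref{thm:BOS} with $n_0 = O(\log(m)/\epsilon^2)$ copies and target expected regret $\epsilon/100$. By Markov's inequality and $O(\log(1/\delta))$ independent repetitions, we obtain a list of candidates at least one of which has regret at most $\epsilon/20$ with probability $1-\delta$.

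\emph{Step 2: verification.} Estimate $\Tr(B_j\rho)$ for each of the $O(\log 1/\delta)$ listed candidates using fresh copies. This costs $O(\log(1/\delta)\log\log(1/\delta)/\epsilon^2)$ copies. Take the candidate with the largest estimated excess. If that excess exceeds $\epsilon/2$, feed the corresponding $L = \mp A_i$ to the MMW update, which has margin $\Omega(\epsilon)$ as in Eq.~\eqref{eq:mmw_margin}. Otherwise, because the best candidate is within $\epsilon/20$ of the true maximum, every $A_i$ is estimated by $\rho_H$ to within $\epsilon$. In that case stop and output $\Tr(A_i\rho_H)$ for all $i$.

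\emph{Step 3: accounting.} By the regret bound in the proof of Lemma~\ref{lem:mmw}, there are at most $K = O(\log(d)/\epsilon^2)$ update phases. Setting $\delta = 1/(10K)$ and taking a union bound, the total number of copies is
\[
K\cdot\Big(O\big(\tfrac{\log m}{\epsilon^2}\big)\log K + \wt{O}\big(\tfrac{\log K}{\epsilon^2}\big)\Big) = \log(m)\cdot \wt{O}\big(\tfrac{\log d}{\epsilon^4}\big).
\]

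The main obstacle is Step 1. Theorem~\ref{thm:BOS} is stated as an argmax over the raw values $\Tr(A_j\rho)$, while the phase needs an argmax over values shifted by $\rho_H$-dependent offsets. The ancilla construction above is the first thing I would try to handle this, and it seems to work, since each $B_j'$ is a valid observable with $0\preceq B_j'\preceq\Id$. The only remaining care is converting the expected-regret guarantee into a high-probability one, which the list-plus-verification step addresses.
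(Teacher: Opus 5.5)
Your proposal is correct and follows the same route as the paper: Theorem~\ref{thm:BOS} plus Markov's inequality solves Threshold Search with $O(\log(m)/\epsilon^2)$ copies at constant success probability. After boosting, this is plugged into the matrix-multiplicative-weights online-learning loop exactly as in \cite{Aar20,BO24}. The one difference is that the paper solves the constant-gap ``$3/4$ vs.\ $1/4$'' version with $O(\log m)$ copies and imports the rest of the \cite{BO24} pipeline (general thresholds, the $\epsilon$ gap, boosting), whereas you make it self-contained by applying Theorem~\ref{thm:BOS} directly at regret scale $\epsilon$ and absorbing the offsets by a shift. The ancilla is unnecessary for that shift, since $\tfrac12(B_j+(1-c_j)\Id)$ is already a valid observable with value $\tfrac12(\Tr(B_j\rho)-c_j+1)$; two harmless slips are that your $B_j'$ actually has value $\tfrac14(\cdot)$, and verification costs $O(\log^2(1/\delta)/\epsilon^2)$, both absorbed in the final bound.
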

\begin{proof}
    Applying Markov to the previous result immediately solves the ``$3/4$ vs.~$1/4$'' version of Threshold Search from \cite[Section~4.1]{BO24} with $\mathcal{O}(\log m)$ copies.  Then, Shadow Tomography with $\log(m) \cdot \wt{\mathcal{O}}(\frac{\log d}{\eps^4})$ copies follows by combination with online quantum state learning~\cite{ACH+19}, exactly as in \cite{Aar20,BO24}.
\end{proof}
%We can also give an improved bound when $\mu^*$ is close to~$1$:
\begin{corollary}
    When $\mu^* \geq 1-\delta$, there is a Best Observable Selection algorithm with expected regret at most $\sqrt{\frac{2 \ln m}{n}} \sqrt{\delta} + \frac{2\ln m}{n}$.
\end{corollary}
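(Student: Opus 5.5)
We reuse the proof of Theorem~\ref{thm:BBOS} verbatim up to Eq.~\eqref{eqn:dkl}, so that by Sion's minimax theorem it again suffices to treat the Bayesian setting. Let $\ba^*$ and $\wt{\ba}$ be the Bernoulli outcomes defined there, with means $p = \EE[\ba^*] = \EE[\bmu_*] \ge 1-\delta$ and $q = \EE[\wt{\ba}] = \EE[\mu_{\wt{\bj}}]$. Data processing gives $\dKL{\ba^*}{\wt{\ba}} \le \eta(t) \le \frac{\ln m}{n} =: \kappa$. The only change is in the last step: Pinsker is too lossy when $p$ is near $1$, so we will use a variance-sensitive inequality relating KL divergence to the difference of means. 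In the Adversary-first reduction, the hypothesis $\mu_*\ge 1-\delta$ should be imposed on every state in the support of $\calD$, so the averaged bound $p \ge 1-\delta$ holds.

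The key inequality, to be applied with $P=\mathrm{Bern}(p)$ and $Q=\mathrm{Bern}(q)$, is the Bernstein-type bound
\begin{equation}
    p - q \le \sqrt{2\,\kappa\, q(1-q)} + \ldots \quad\text{or more conveniently}\quad p - q \le \sqrt{2\kappa(1-p)} + 2\kappa,
\end{equation}
which is the classical relative-entropy form $\dKL{\mathrm{Bern}(p)}{\mathrm{Bern}(q)} \ge \frac{(p-q)^2}{2\max(1-p,1-q)}$ for $q\le p$. It is cleanest to work with the complements $\bar p = 1-p \le \delta$ and $\bar q = 1-q$, since $\dKL{\mathrm{Bern}(\bar p)}{\mathrm{Bern}(\bar q)} \ge \frac{(\bar q-\bar p)^2}{2\bar q}$ whenever $\bar q\ge \bar p$; this is the standard lower bound on KL used in multiplicative Chernoff bounds, and can be checked by fixing $\bar p$ and differentiating in $\bar q$. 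Setting $x = \bar q - \bar p$ gives $x^2 \le 2\kappa(\bar p + x)$, and solving the quadratic yields
\begin{equation}
    x \le \kappa + \sqrt{\kappa^2 + 2\kappa\bar p} \le 2\kappa + \sqrt{2\kappa\delta}.
\end{equation}
Since the expected regret equals $p-q = x$, this gives exactly $\sqrt{\frac{2\ln m}{n}}\sqrt{\delta} + \frac{2\ln m}{n}$.

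The main obstacle is verifying the KL lower bound $\dKL{\mathrm{Bern}(\bar p)}{\mathrm{Bern}(\bar q)} \ge (\bar q-\bar p)^2/(2\bar q)$ for $\bar q \ge \bar p$. The plan is to write the divergence as $\int_{\bar p}^{\bar q} \frac{s-\bar p}{s(1-s)}\,\mathrm{d}s$, obtained by differentiating in $\bar q$. Then bound $\frac{1}{s(1-s)} \ge \frac1{s} \ge \frac{1}{\bar q}$ on that range, which gives $\frac{(\bar q-\bar p)^2}{2\bar q}$. If $q > p$, the regret is negative and there is nothing to prove.
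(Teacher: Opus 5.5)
Your proof is correct and follows the paper's route: reuse the argument of Theorem~\ref{thm:BBOS} up to $\dKL{\ba^*}{\wt{\ba}} \le \frac{\ln m}{n}$, then replace Pinsker by a Bernoulli refinement that is sharp near mean $1$, which yields exactly $\sqrt{\frac{2\ln m}{n}}\sqrt{\delta} + \frac{2\ln m}{n}$. The paper simply cites such a Pinsker variant, whereas you derive the needed bound $\dKL{\mathrm{Bern}(\bar p)}{\mathrm{Bern}(\bar q)} \ge (\bar q - \bar p)^2/(2\bar q)$ yourself from $\int_{\bar p}^{\bar q} \frac{s-\bar p}{s(1-s)}\,\mathrm{d}s$, and you also state explicitly, and correctly, that the minimax reduction needs $\mu_* \ge 1-\delta$ for every state in the support of the prior.
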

\begin{proof}
    We repeat the earlier proof to get to $\dKL{\ba^*}{\wt{\ba}} \leq \frac{\ln m}{n}$, where $\ba^*$ and $\wt{\ba}$ are Bernoulli.  Then $\EE[\ba^*] = \mu_* \geq 1-\delta$ and a  version of Pinsker (e.g.~\cite[Lemma~6]{GMS19}) indeed implies $\EE[\wt{\ba}] \geq \EE[\ba^*] - (\sqrt{\frac{2\ln m}{n}}\sqrt{\delta} + \frac{2\ln m}{n})$.
\end{proof}

\noindent Our proof of Theorem~\eqref{thm:BOS} can be viewed as a kind of quantization of the Thompson-sampling result of Russo--Van{ }Roy~\cite{russo2016information}, which in turn uses the information theory methods of Xu--Raginsky~\cite{xu2022minimum} and the minimax idea of Hafez-Kolahi et al.~\cite{hafezkolahi2023minimax}.
\end{document}